\algrenewcommand\algorithmicrequire{\textbf{Input:}}
\algrenewcommand\algorithmicensure{\textbf{Output:}}
\newtheorem{theorem}{Theorem}[section]
\renewcommand*{\@opargbegintheorem}[3]{\trivlist
	\item[\hskip \labelsep{\bfseries #1\ #2}] \textbf{(#3)}\ \itshape}
\newtheorem{claim}[theorem]{Claim}
\newtheorem{remark}[theorem]{Remark}
\newtheorem{assumption}{Assumption}
\newtheorem{corollary}[theorem]{Corollary}
\newtheorem{definition}[theorem]{Definition}
\newtheorem{lemma}[theorem]{Lemma}
\newtheorem{observation}[theorem]{Observation}
\newtheorem{problem}{Problem}
\newcommand{\eps}{\varepsilon}
\renewcommand{\epsilon}{\varepsilon}
\newcommand{\eat}[1]{}
\newcommand{\R}{\mathbb{R}}
\newcommand{\poly}{\operatorname{poly}}
\newcommand{\ddim}{\ensuremath{\mathrm{ddim}}}
\newcommand{\Var}{\ensuremath{\mathrm{Var}}}
\newcommand{\calC}{\mathcal{X}^k}
\newcommand{\calF}{\mathcal{F}}
\newcommand{\calG}{\mathcal{G}}
\newcommand{\calX}{\mathcal{X}}
\newcommand{\calZ}{\mathcal{Z}}
\newcommand{\Exp}{\ensuremath{\mathbb{E}}}
\newcommand{\err}{\ensuremath{\mathsf{err}}}
\renewcommand{\eqref}[1]{(\ref{#1})}
\newcommand{\cost}{\ensuremath{\mathrm{cost}}}
\newcommand{\mN}{\ensuremath{\mathcal{N}^{(m)}}}
\newcommand{\myN}{\ensuremath{\widehat{\mathcal{N}}^{(m)}}}
\newcommand{\oN}{\ensuremath{\mathcal{N}^{(o)}}}
\newcommand{\ZGCA}[1]{\ensuremath{\cost_z(#1,C+A^\star)}}
\newcommand{\ZZGCA}[1]{\ensuremath{\cost_z(#1,C+A^\star)}}
\newcommand{\HGCB}[1]{\ensuremath{R(#1,C,\beta)}}
\newcommand{\HGC}[1]{\ensuremath{R(#1,C,0)}}
\newcommand{\ProblemName}[1]{\textsc{#1}}
\newcommand{\kzC}{\ProblemName{$(k, z)$-Clustering}}
\newcommand{\kMedian}{\ProblemName{$k$-Median}}
\newcommand{\kMeans}{\ProblemName{$k$-Means}}
\title{On Optimal Coreset Construction for Euclidean \kzC}
\author{Lingxiao Huang\thanks{
		Email: \texttt{huanglingxiao1990@126.com}
	}\\
	Nanjing University
	\and
	Jian Li\thanks{
		Email: \texttt{lapordge@gmail.com}
	}\\
	Tsinghua University
	\and
	Xuan Wu\thanks{ 
		Email: \texttt{wu3412790@gmail.com}}\\
	Tsinghua University
}
\date{}
\begin{document}
	\maketitle
	
	\begin{abstract}
		\sloppy
		Constructing small-sized coresets for various clustering problems in Euclidean spaces has attracted significant attention for the past decade.
        A central problem in the coreset literature is to understand what is the best possible coreset size for \kzC\ in Euclidean space. While there has been significant progress in the problem, there is still a gap between the state-of-the-art upper and lower bounds. For instance, the best known upper bound
        for \kMeans\ ($z=2$) is 
        $\min \{O(k^{3/2} \epsilon^{-2}),O(k \eps^{-4})\}$ [Cohen-Addad, Larsen, Saulpic, Schwiegelshohn, Sheikh-Omar. NeurIPS'22~\cite{cohenaddad2022improved}], 
        while the best known lower bound is $\Omega(k\eps^{-2})$ [Cohen-Addad, Larsen, Saulpic, Schwiegelshohn. STOC'22~\cite{cohenaddad2022towards}].
        
        In this paper, we make significant progress on both lower and upper bounds
        on the coreset size for Euclidean \kzC. For a wide range of parameters (i.e., $\eps, k$), we have a complete understanding of the optimal bound. In particular, we obtain the following results:
                
        (1) We present a new coreset lower bound $\Omega(k \eps^{-z-2})$ for Euclidean \kzC\ when $\eps \geq \Omega(k^{-\nicefrac{1}{(z+2)}})$.
        In view of the prior upper bound $\tilde{O}_z(k \eps^{-z-2})$~\cite{cohenaddad2022towards}, the bound is optimal.
        The new lower bound is surprising since $\Omega(k\eps^{-2})$ \cite{cohenaddad2022towards} is ``conjectured" to be the correct bound 
        in some recent works (see e.g., \cite{cohenaddad2022towards,cohenaddad2022improved}).
        %
        Our new lower bound instance is a delicate construction with multiple clusters of points, which differs significantly from the previous construction in~\cite{cohenaddad2022towards} that contains a single cluster of points.
        The new lower bound also implies improved lower bounds for \kzC\ in doubling metrics. 
        %
        
		(2) For the upper bound, we provide efficient coreset construction algorithms for Euclidean \kzC\ with improved coreset sizes.
		In particular, we provide an $\tilde{O}_z(k^{\frac{2z+2}{z+2}} \eps^{-2})$-sized coreset, with
        a unified analysis, for \kzC\ for all $z\geq 1$ in Euclidean space.
		This upper bound improves upon the $\tilde{O}_z(k^2\eps^{-2})$ upper bound by~\cite{cohenaddad2022towards} (when $k\leq \eps^{-1}$), and matches the recent results~\cite{cohenaddad2022improved} for \kMedian\ and \kMeans\ ($z=1,2$) and extends them to all $z\geq 1$.
	\end{abstract}
	
	\newpage
	\tableofcontents
	\newpage

	\section{Introduction}
	\label{sec:intro}
	
	We study the problem of constructing small-sized coresets for the classic \kzC\ problem,
	where $z\geq 1$ is a given constant.
	The \kzC\ problem is defined as follows.
	
	\vspace{-0.2cm}
	\paragraph{\kzC.} 
	In the \kzC\ problem, the input consists of a metric space $(\calX,d)$, where $\calX$ is a ground set (continuous or discrete) and $d: \calX\times \calX\rightarrow \R_{\geq 0}$ is a distance function, and a dataset $P\subseteq \calX$ of $n$ points. 
	The goal is to find a set $C \subseteq \calX$ of $k$ points, called \emph{center set}, 
	that minimizes the objective function
	\begin{equation} \label{eq:DefCost}
	\cost_z(P, C) := \sum_{x \in P}{d^z(x, C)},
	\end{equation}
	where 
	$
	d(x, C):=\min\left\{d(x,c): c\in C\right\}
	$
	is the distance from $x$ to center set $C$ and
	$d^z$ denotes the distance raised to power $z\ge 1$.
	
	This formulation captures several classical clustering problems,
	including the well studied \kMedian\ (when $z = 1$) and \kMeans\ (when $z=2$).
	\textcolor{black}{We mainly focus on the problem under Euclidean metric $\calX=\R^d$ and more generally metrics with bounded doubling dimension.}
	The \kzC\ problem has numerous applications in a variety of domains, including data analysis, approximation algorithms, unsupervised learning and computational geometry (see e.g.,~\cite{lloyd1982least,tan2006cluster,arthur2007k,coates2012learning}).

	\vspace{-0.2cm}
	\paragraph{Coresets.} 
	%
	Motivated by the ever increasing volume of data, a powerful data-reduction technique, called \emph{coresets}, has been developed for harnessing large datasets for various problems~\cite{harpeled2004on,feldman2011unified,feldman2013turning}.
	Roughly speaking, for an optimization problem, a coreset is a small-sized subset of (weighted) data points that can be used to compute an approximation of the optimization objective for every possible solution.
	In the context of \kzC, we use $\calC$ to denote the collection of all solutions, i.e., all
	$k$-center sets in $\calX$.
	A coreset for \kzC\ is formally defined as follows.

	\begin{definition}[\bf Coreset~\cite{langberg2010universal,feldman2011unified}]
		\label{def:coreset}
		Given a metric space $(\calX,d)$ together with a dataset $P\subseteq \calX$ of $n$ points, an $\eps$-coreset for the \kzC\ problem is a weighted subset $S \subseteq P$ with weight $w : S \to \R_{\geq 0}$, such that 
		\begin{equation} \label{eq:DefCoreset}
		\forall C \in \calC,
		\qquad
		\sum_{x \in S}{w(x) \cdot d^z(x, C)}
		\in (1 \pm \eps) \cdot \cost_z(P, C).
		\end{equation}
	\end{definition}
	
	\noindent
	From the above definition, one can see that if we run any existing approximation or exact algorithm on the coreset instead of the full dataset, the resulting solution provides almost the same performance guarantee in terms of the clustering objective, but the running time can be much smaller.
	Hence, constructing small-sized coresets for \kzC\ has been an important research topic and studied extensively	in the literature, for various metric spaces including Euclidean metrics~\cite{feldman2011unified,feldman2013turning,huang2020coresets,cohenaddad2021new,cohenaddad2022towards}, doubling metrics~\cite{huang2018epsilon,cohenaddad2021new,cohenaddad2022towards}, shortest path metrics in graphs~\cite{baker2020coresets,braverman2021coresets,cohenaddad2021new} and general metrics~\cite{feldman2011unified,cohenaddad2021new,cohenaddad2022towards}.
	%
	
	%
     There is a long line of work attempting to obtain tight 
     upper and lower bounds of the optimal coreset size
     (e.g., \cite{harpeled2004on,chen2009on,feldman2013turning,sohler2018strong,huang2020coresets,cohenaddad2021improved,cohenaddad2022towards}).
     Despite the substantial effort, there is still a gap between the current best-known upper 
     and lower bounds for most metric spaces studied in the literature, specifically for the Euclidean space.
     As mentioned in~\cite{cohenaddad2021improved}, finding the tight coreset size bound for Euclidean space is ``the most important open problem in coresets''. 

	Now we briefly mention some of the existing results.
	Please refer to Table~\ref{tab:result} for the best-known upper and lower bounds,
	and  \cite[Table 1]{cohenaddad2022towards} for many earlier results.
    The best known upper bound of the coreset size of \kzC\ in Euclidean space is $\tilde{O}_z(\min\left\{k^2\eps^{-2}, k\eps^{-z-2}\right\})$ 
    for all $z\geq 1$ and $k\leq \eps^{-1}$, by    
    Cohen-Addad et al. \cite{cohenaddad2022towards}. 
    In the same work, they obtained a ``natural-looking" lower bound of $\Omega(k\eps^{-2})$ for Euclidean \kzC,
    and the worst case instance achieving this bound is simple and clean
    (the points form a simplex).
    Very recently, for \kMedian\ ($z=1$) and \kMeans\ $(z=2)$ 
    the upper bounds \textcolor{black}{(term $k^2 \eps^{-2}$)} are further improved to
    $\tilde{O}(k^{4/3} \eps^{-2})$ and $\tilde{O}(k^{3/2} \eps^{-2})$ respectively~\cite{cohenaddad2022improved}. The new improved bounds are 
    arguably ``less natural", which seem to suggest
    that the upper bound side can be further improved.
%
            %
    Hence, Cohen-addad et al.~\cite{cohenaddad2022improved} 
    stated in their paper that there is ``stronger evidence that the $\Theta(k \eps^{-2})$ bound will be the correct answer'', while they believed that ``further ideas will be necessary to reach the (conjectured) optimal bound of $\Theta(k \eps^{-2})$'' \cite{cohenaddad2022towards}.
    The above discussion motivates the following natural problem.

    \begin{problem}
    \label{problem:coreset}
        What are the optimal coreset size bounds for \kzC\ in
        Euclidean metrics? Specifically, is $\Theta(k \eps^{-2})$ the optimal bound?
    \end{problem}

    \noindent

\newcommand{\specialcell}[2][c]{%
  \begin{tabular}[#1]{@{}c@{}}#2\end{tabular}}
  
	\begin{table}[t]
		\centering
		\scriptsize
		
		\begin{tabular}{|cc|c|c|}
			\hline
			\multicolumn{2}{|c|}{Metric space}  & Prior results & Our results \\ \hline
			\multicolumn{1}{|c|}{\multirow{3}{*}{Euclidean metric}} & Upper bound &  \specialcell{$\tilde{O}(\min\left\{ k^2\eps^{-2}, k\eps^{-z-2}\right\})$~\cite{cohenaddad2022towards} \\ $\tilde{O}(k^{4/3}\eps^{-2})$ ($z=1$),  $\tilde{O}(k^{3/2}\eps^{-2})$ ($z=2$)~\cite{cohenaddad2022improved}} &  $\tilde{O}( k^{\frac{2z+2}{z+2}} \eps^{-2})$  
			\\ \cline{2-4}
                \multicolumn{1}{|c|}{} & Lower bound & $\Omega(k\eps^{-2} + k 2^{\frac{z}{20}})$~\cite{cohenaddad2022towards,huang2020coresets} & \specialcell{$\Omega(k \eps^{-z-2})$ for $\eps = \Omega(k^{-\nicefrac{1}{(z+2)}})$ \\ $\Omega(k^{\frac{2z+3}{z+2}} \eps^{-1}) $ for $\eps = \Omega(k^{-\frac{z+1}{z+2}})$}
                \\ \hline
			\multicolumn{1}{|c|}{\multirow{3}{*}{Doubling metric}} & Upper bound & $\tilde{O}(k\cdot \ddim\cdot \min\left\{k\eps^{-2},\eps^{-\max\{z,2\}}\right\} )$~\cite{cohenaddad2021new} &  $/$  
			\\ \cline{2-4}
                \multicolumn{1}{|c|}{} & Lower bound & $\Omega(k\cdot \ddim\cdot \eps^{-2})$~\cite{cohenaddad2022towards} & \specialcell{$\Omega(k \cdot \ddim\cdot \eps^{-\max\{z,2\}}/\log k)$ \\ for $\eps = \Omega(k^{-\nicefrac{1}{(z+2)}})$ \\ $\Omega(k^{\frac{2z+1}{z+2}} \ddim\cdot \eps^{-1})$ for $\eps = \Omega(k^{-\frac{z+1}{z+2}})$} \\ \hline
		\end{tabular}
		\caption{
			Comparison of the state-of-the-art coreset sizes and our results for \kzC.
			We assume that $z\geq 1$ is a constant and ignore $2^{O(z)}$ or $z^{O(z)}$ factors in the coreset size.
			``$\ddim$'' denotes the doubling dimension.
			\textcolor{black}{Note that our lower bound for doubling metrics is 
            a corollary of the lower bound for
            Euclidean case and
            is nearly optimal for $\eps = \Omega(k^{-\nicefrac{1}{(z+2)}})$ (up to poly-log factors).}
		}
		\label{tab:result}
	\end{table}

	%
	%
        %
        
	
		%

	\subsection{Our Contributions}
	\label{sec:contribution}

    In this paper, we attempt to provide an answer to  Problem~\ref{problem:coreset} and make progress on 
    both the lower and upper bounds.
    Our new bounds are summarized in Table~\ref{tab:result}
    and Figure~\ref{fig:curve}.
    
    %
    \vspace{0.2cm}
    \noindent
    {\bf Lower bounds:}
    Our first main contribution is an optimal coreset lower bound for Euclidean \kzC\ for a large range of parameter $\eps$.
    In particular, our lower bound implies that the previously conjectured bound $\Theta(k \eps^{-2})$ is not the right answer.
    Our lower bound for Euclidean space also implies improved lower bound for doubling metrics.
    %
	%
    %
    We first state our lower bounds for Euclidean spaces.

        \begin{figure}
		\centering
		\includegraphics[width=0.85\textwidth]{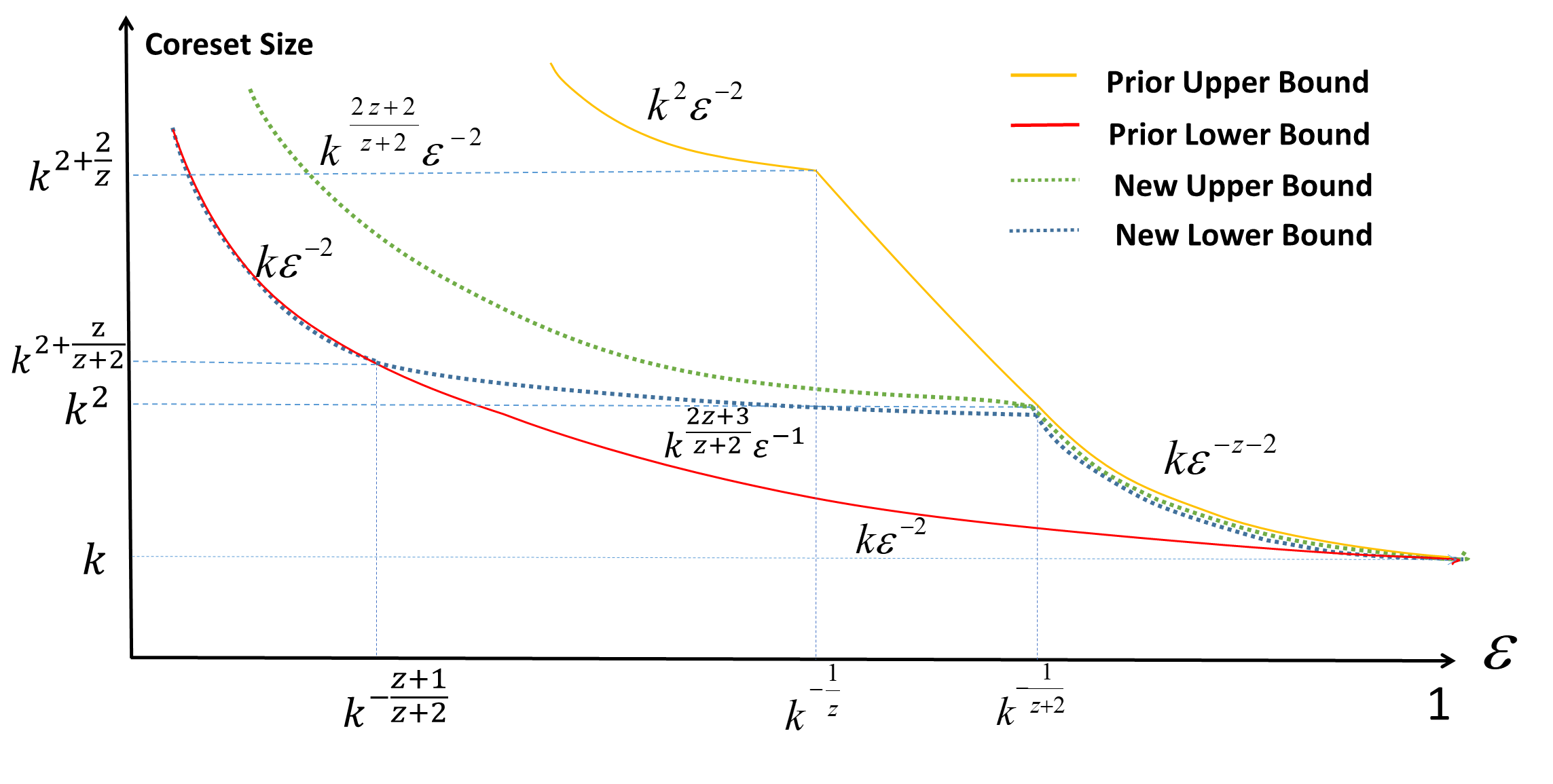}
		\caption{Comparison of prior coreset bounds and the new ones for Euclidean \kzC. Note that this figure is based on the case that $z>2$ in which $k^{-1/z}> k^{-0.5}$.}
		\label{fig:curve}
	\end{figure}

\begin{theorem}[\bf{Coreset lower bound for Euclidean \kzC}]
    \label{thm:lb}
    For any integer $k\geq 1$, constant $z\geq 1$, and error parameter $\eps \geq \Omega(k^{-\nicefrac{1}{(z+2)}})$,
    there exists an instance $P$ in Euclidean metric 
    $\mathbb{R}^d$ (for $d = O(k \eps^z)$)
    such that any $\eps$-coreset of $P$ is of size at least 
    $\Omega(k \eps^{-z-2})$.

    Moreover, for $\eps\geq \Omega(k^{-\frac{z+1}{z+2}})$, there exists an instance $P$ in Euclidean metric 
    $\mathbb{R}^d$ (for $d = O(k^{\frac{2}{z+2}})$) such that any $\eps$-coreset of $P$ is of size at least $\Omega(k^{\frac{2z+3}{z+2}} \eps^{-1})$. 
\end{theorem}

\noindent
The proof of the first lower bound can be found in Section~\ref{subsec:lbproof1}
and the second in Section~\ref{subsec:lbproof2}.
Compared to previous work~\cite{cohenaddad2022towards}, our first coreset lower bound $\Omega(k \eps^{-z-2})$ contains an additional factor $\eps^{-z}$ and matches the known upper bound~\cite{cohenaddad2022towards} when $\eps \geq \Omega(k^{-\nicefrac{1}{(z+2)}})$.
This result is surprising, as the tight dependence on $\eps$ is beyond quadratic, even for Euclidean \kMedian\ or \kMeans.
%
%
%
%
Since the coreset lower bound is monotonically non-increasing as $\eps$ increases, we directly have a lower bound of $\Omega(k^2)$ when $\eps\leq O(k^{-\nicefrac{1}{(z+2)}})$. 
By modifying the construction in Theorem~\ref{thm:lb},
we also obtain a better lower bound of $\Omega(k^{(2z+3)/(z+2)} \eps^{-1})$ 
(better than $\Omega(k^2)$ and the previous $k \eps^{-2}$ bound in \cite{cohenaddad2022towards}) for $O(k^{-\nicefrac{1}{(z+2)}})\leq \eps \leq \Omega(k^{-(z+1)/(z+2)})$.
See Figure~\ref{fig:curve}.

\color{black}



    Theorem~\ref{thm:lb} also implies a tight lower bound for doubling metrics
    for a wide range of parameters.
    In particular, noting the well known fact that the doubling dimension of Euclidean metric $\R^d$ is at most $O(d)$  (e.g., \cite{vergergaugry2005covering}), one can easily derive the following corollary via terminal embedding.

    \begin{corollary}[\bf{Coreset lower bound for \kzC\ in doubling metrics}]
    \label{cor:doubling}
    For any integer $k\geq 1$, constant $z\geq 1$, and error parameter $\eps \geq \Omega(k^{-\nicefrac{1}{(z+2)}})$,
    there exists an instance $P$ in a metric space $(\calX, d)$ with doubling dimension $\ddim$ such that any $\eps$-coreset of $P$ is of size at least 
    $\Omega(k\cdot \ddim \cdot \eps^{-\max\{z,2\}}/ \log k)$.

    Moreover, for $\eps\geq \Omega(k^{-\frac{z+1}{z+2}})$, there exists an instance $P$ in a metric space $(\calX, d)$ with doubling dimension $\ddim$ such that any $\eps$-coreset of $P$ is of size at least 
    $\Omega(k^{\frac{2z+1}{z+2}}\cdot \ddim \cdot \eps^{-1})$.
    \end{corollary}

\noindent
The proof can be found in Section~\ref{sec:doubling}.
We recall that the known upper bound is $\tilde{O}(k\cdot \ddim \cdot \min\left\{\eps^{-\max\{z,2\}}, k \eps^{-2} \right\})$~\cite{cohenaddad2021new}.
The first coreset lower bound $\Omega(k\cdot \ddim \cdot \eps^{-\max\{z,2\}}/ \log k)$ matches this upper bound when $\eps \geq \Omega(k^{-\nicefrac{1}{(z+2)}})$ (up to poly-log factors), which is near optimal.
The second bound $\Omega(k^{\frac{2z+1}{z+2}}\cdot \ddim \cdot \eps^{-1})$ is also better than the previous $\Omega(k\cdot \ddim\cdot \eps^{-2})$~\cite{cohenaddad2022towards} when $\eps = \Omega(k^{-\frac{z+1}{z+2}})$.

\color{black}

\vspace{0.2cm}
\noindent
{\bf Upper bounds: }
For the upper bound, we adopt the existing importance sampling-based coreset framework (Algorithm~\ref{alg:coreset}), proposed in~\cite{cohenaddad2021new}.
    Our contribution is a unified and tighter analysis that leads to optimal or improved coreset size upper bounds for Euclidean \kzC.
 
	\begin{theorem}[\bf{Improved upper bound for Euclidean \kzC; see also Theorem~\ref{thm:Euclidean}}]
		\label{thm:Euclidean_informal}
		Given a finite set $P\subset \R^d$ of $n$ points, there exists a randomized algorithm that constructs an $\eps$-coreset of $P$ of size $2^{O(z)}\cdot \tilde{O}(k^{\frac{2z+2}{z+2}} \eps^{-2})$ for \kzC.
		Provided an $O(1)$-approximate solution (the center set) for $P$,
		the algorithm runs in $O(nk)$ time.
	\end{theorem}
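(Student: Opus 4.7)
The plan is to instantiate the importance sampling framework of~\cite{cohenaddad2021new} (Algorithm~\ref{alg:coreset}) and extract the improved coreset size through a sharper, unified analysis that applies to all $z \geq 1$ simultaneously. The algorithm itself is essentially the same as the one producing the $\tilde{O}(k^2\eps^{-2})$ bound of~\cite{cohenaddad2022towards}: given the $O(1)$-approximate center set $C^\star=\{c_1,\ldots,c_k\}$, compute $d^z(x,C^\star)$ for every $x\in P$ in $O(nk)$ time, use these quantities to drive an importance-sampling distribution over $P$, and return a reweighted sample of size $m = 2^{O(z)}\cdot\tilde{O}(k^{(2z+2)/(z+2)}\eps^{-2})$; this directly gives the claimed running time.

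First, I would partition $P$ into groups $G_{i,j}$, where $i \in [k]$ indexes the closest center in $C^\star$ and $j$ indexes a geometric distance band $[2^j,2^{j+1})\cdot \Delta$, with $\Delta$ calibrated so that the average per-point cost corresponds to band $j=0$. As in prior work, only $\tilde{O}(k)$ bands are non-trivial: too-close bands can be represented exactly using few points, and too-far bands are handled by a standard outlier-shifting argument that exploits the $O(1)$-approximation guarantee of $C^\star$. Within each group I would draw points with probability proportional to the normalized sensitivity $s(x)\propto d^z(x,C^\star)/\cost_z(G_{i,j},C^\star) + 1/|G_{i,j}|$, and reweight by the usual Feldman--Langberg rule to obtain an unbiased estimator of $\cost_z(G_{i,j},C)$ on every candidate $C \in \calC$.

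The main work is to show that $m$ samples suffice for the additive error guarantee on every $C\in\calC$ simultaneously. Following the chaining strategy of~\cite{cohenaddad2021new,cohenaddad2022towards,cohenaddad2022improved}, I would (i) apply a terminal embedding to reduce the ambient dimension to $\tilde{O}(\eps^{-2}\log k)$ while preserving all relevant distances to centers; (ii) within each group, exploit the fact that all points have cost $\Theta(2^{jz}\Delta^z)$ to $C^\star$, which turns the multiplicative $\eps$-error budget into a variance-type tail bound of the right order; and (iii) bound the metric entropy of $\calC$ at each scale by a standard $\eta$-net argument in the embedded space. Summing across scales gives a per-group sample complexity of the form $\tilde{O}_z(k^{\alpha}\eps^{-2})$, where $\alpha$ is obtained by balancing the $\tilde{O}(k^2\eps^{-2})$ bound of~\cite{cohenaddad2022towards} against the $\tilde{O}(k\eps^{-z-2})$ bound from the same work; the optimal choice turns out to be $\alpha = z/(z+2)$, and multiplying by the $\tilde{O}(k)$ non-trivial groups produces the claimed total $\tilde{O}_z(k^{(2z+2)/(z+2)}\eps^{-2})$.

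The hardest part will be obtaining this exponent \emph{uniformly} in $z$. The previous work~\cite{cohenaddad2022improved} handled $z\in\{1,2\}$ by case-specific arguments, and the difficulty is that as $z$ grows the function $d^z$ becomes less Lipschitz and more ``spiky'', so the variance bounds and covering estimates must be carried out with constants that are simultaneously controllable in $z$ and in $\eps$. I expect this to require a $z$-dependent rebalancing of the chaining scales and a careful version of the group-wise sensitivity upper bound (in particular showing that the total sensitivity on each group is $2^{O(z)}$ rather than the naive $2^{\Omega(z^2)}$), but the overall skeleton of the argument remains dimension-free and should produce the clean exponent $(2z+2)/(z+2)$ in one shot for every $z\ge 1$.
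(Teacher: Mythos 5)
Your proposal correctly identifies the overall framework (importance sampling over a ring/group decomposition, reduction to a Gaussian process via chaining, terminal embedding to compress the ambient dimension), but it is missing the two concrete new ideas that actually produce the exponent $\frac{2z+2}{z+2}$, and the step where you claim to obtain it does not hold up.

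First, the accounting of groups is off in a way that hides the real work. In the paper there are only $\tilde{O}(1)$ groups (Lemma~\ref{lm:group_number}), not $\tilde{O}(k)$: each group $G^{(m)}(j,b)$ already unions rings from up to $k$ different clusters $P_i$. The entire $k$-dependence has to be extracted from the per-group sample size $\Gamma_G$, and your proposal offers no mechanism for that beyond ``a standard $\eta$-net argument in the embedded space,'' which is exactly what \cite{cohenaddad2022towards} already do and which yields only $\tilde{O}(k^2\eps^{-2})$. The claim that the exponent arises by ``balancing the $\tilde{O}(k^2\eps^{-2})$ and $\tilde{O}(k\eps^{-z-2})$ bounds'' is not a derivation: a bound that improves on the minimum of two existing bounds cannot be obtained by interpolating between them, and indeed the new bound is \emph{strictly below} the prior min for the full range $\eps < k^{-1/(z+2)}$, so there is nothing to trade off at that level.

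Second, the two missing ingredients are (i) a sharper variance bound coming from chaining over the \emph{difference} vectors $y^C = (d(p,C) - d(a^\star_i,C))_p$ rather than over $u^C=(d(p,C))_p$, together with a new ``relative covering error'' $\err(p,C)$ scaled by $\sqrt{\ZGCA{G}/\cost(G,A^\star)}$ (Definition~\ref{def:covering_main}, Lemma~\ref{lm:variance}); this is what removes the parasitic $\min\{\eps^{-1},k\}$ factor that, as \cite{cohenaddad2022towards} observe, is tight for their notion of variance. And (ii) an \emph{additive} terminal embedding (Definition~\ref{def:additive_terminal_embedding}, Theorem~\ref{thm:additive_terminal}) that maps a ring of radius $r$ to dimension $\tilde{O}(\alpha^{-2})$ with additive error $\alpha r$, used in a per-$\beta$ decomposition of the group by the ratio $d(a^\star_i,C)/d(p,A^\star)\approx 2^\beta$ (Definition~\ref{def:main_partition}). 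The genuine balance in the argument is between the target dimension $\tilde{O}(2^{\beta z}\alpha^{-2})$ of the ordinary terminal embedding and $\tilde{O}(|B|^{-1}k2^{-2\beta}\alpha^{-2})$ of the additive one, optimized at $2^\beta = k^{1/(z+2)}$ (Lemma~\ref{lm:beta_upper}/Lemma~\ref{lm:beta_upper_general}). Without this geometric decomposition and the additive embedding, the chaining you describe can only reproduce the $\tilde{O}(k^2\eps^{-2})$ bound. Finally, the concern about $z$-uniformity of the sensitivities is a side issue: the $2^{O(z)}$ factor enters via the relaxed triangle inequality (Lemma~\ref{lm:relaxed}) and is handled uniformly once the variance lemma is generalized, so the ``hardest part'' is not what you flag but the two items above.
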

	
	\noindent
	%
	%
	%
        For general $z\geq 1$ and $\eps = O(k^{-\nicefrac{1}{(z+2)}})$, our upper bound improves on the $O(k^2\eps^{-2})$ bound in~\cite{cohenaddad2022towards}.
        Recently, Cohen-Addad et al.~\cite{cohenaddad2022improved}
        obtained a coreset size bound $\tilde{O}(k^{3/2}\eps^{-2})$ for Euclidean \kMeans\ and $\tilde{O}(k^{4/3}\eps^{-2})$ for Euclidean \kMedian. 
        Our results match theirs for $z=1$ and $z=2$ and can be seen as a generalization of their results to all powers $z\geq 1$. 
        Their analysis also tries to tighten the variance bound. 
        But our details are quite different from theirs. In particular, we use a different and unified relative error bound (for defining the covering) that works for all $z\geq 1$, and we bound the covering number differently.
    Note that Cohen-Addad et al.~\cite{cohenaddad2022improved} used
    different error bounds for $k=1$ and $k=2$.
    \footnote{
    In their Arxiv version, they also defined 
    an error bound for all $z\geq 1$. 
    But it is unclear
    to us how to use this error bound to obtain the same results for general $z\geq 1$ as ours.
    }
    For bounding the covering numbers, We discover a new geometric observation that the covering number of a cluster of points is proportional to the square of the inverse distance between the center of this cluster and center sets.
    This cannot be derived from \cite{cohenaddad2022improved} since their covering number does not relate to such distances. This observation is based on a new notion of dimension reduction, called \emph{additive terminal embedding}, which may be of independent interest.
        
        It is also worth noting that 
        for $\eps = \Theta(k^{-\nicefrac{1}{(z+2)}})$, our upper bound 
        is $\tilde{O}(k^{\frac{2z+2}{z+2}} \eps^{-2}) = \tilde{O}(k^2)$
        which matches the lower bound $\Omega(k \eps^{-z-2}) = \Omega(k^2)$ of Theorem~\ref{thm:Euclidean_informal}.
        Although from the current lower bound, we can infer that 
        the new upper bound is optimal only at a single point $\eps = \Theta(k^{-\nicefrac{1}{(z+2)}})$, curiously, it reveals that the curve corresponding to the optimal coreset bound should be piecewise and $\eps = \Theta(k^{-\nicefrac{1}{(z+2)}})$ is a breaking point of the curve. See Figure~\ref{fig:curve}.
        %
        %
	Further closing the gap between our upper bound and the lower  bound in Theorem~\ref{thm:lb} for $\eps = O(k^{-\nicefrac{1}{(z+2)}})$ is an intriguing
    and important open question.

	\subsection{Technical Overview}
	\label{sec:overview}
 
\color{black}
 \subsubsection{Lower bounds}

    For convenience, we present an overview of our lower bound for Euclidean \kMeans\ ($z=2$).
    %
    %

    \vspace{-0.3cm}
    \paragraph{Prior lower bound instances~\cite{cohenaddad2022towards}.}
    We first briefly review the construction for the best-known 
    lower bound $\Omega(k \eps^{-2})$ in~\cite{cohenaddad2022towards}.
    Their instance is simply the set of unit basis vectors $X = \left\{e_i: i\in [k \eps^{-2}]\right\}\subset \R^{k \eps^{-2}}$.
    To prove the lower bound, we assume by contradiction there exists an $\eps$-coreset $S\subseteq X$ of size at most $|X|/2$.
    We arbitrarily partition $S$ into $k$ groups, each containing 
    roughly equal number of points ($\approx |S|/k$).
    W.l.o.g., assume one group is $e_1, \ldots, e_{|S|/k}$ and we construct a center $c = \sqrt{k/|S|}\sum_{i\in [|S|/k]} e_i$
    for this group.
    We can observe that
    $
    d^2(e_1, c) = \cdots = d^2(e_{|S|/k}, c) = 1 - \sqrt{k/|S|},
    $
    but for every $i > |S|/k$, $d^2(e_i, c) = 1 + \sqrt{k/|S|}$.
    We repeat the above construction for each group, and obtain a $k$-center set $C\subset \R^d$, which satisfies that 1) $d^2(p,C) = 1 - \sqrt{k/|S|}$ for every $p\in S$ and 
    2) $d^2(q,C) = 1 + \sqrt{k/|S|}$ for every $q\in X\setminus S$.
    Noting that $d^2(p,C)< (1-\eps)d^2(q,C)$ for any
    $p\in S$ and $q\in X\setminus S$, 
    we can conclude that $\cost_2(S,C) < (1-\eps)\cdot \cost_2(X,C)$, 
    and hence $S$ is not an $\eps$-coreset of $X$.
    \vspace{-0.2cm}
    \paragraph{A failed attempt.}
    %
    A natural idea to improve the lower bound 
    is to construct a dataset $P$ that consists of $m\leq k/2$ ``independent" copies of the above instance $X$ such that they are far away from each other, and attempt to argue that a constant proportion of points in every copy must be selected to form a coreset (see e.g.,~\cite{baker2020coresets}).
    Again, by contradiction we assume there exists an $\eps$-coreset $S\subseteq P$ of size at most $|P|/2$.

    We can assume without loss of generality that $w(S\cap X_i)\approx |X_i|$ holds for every copy $X_i$ of $X$.
    Otherwise, we can construct a center set $C\subset \R^d$ that contains the origin of every copy {\em except $X_i$}, and the remaining centers are far away from all points in $P$.
    In this case, the clustering cost is mainly from $X_i$ and $\frac{\cost_2(S,C)}{\cost_2(X,C)}\approx \frac{w(S\cap X_i)}{|X_i|}\notin [1-\eps, 1+\eps]$, which contradicts the assumption that $S$ is a coreset.
    Next, there must exist a copy $X_i$ of $X$ such that $|S\cap X_i| \leq |X_i|/2$.
    %
    We may construct a $k$-center set $C$ as follows: first, construct 
    $k - m + 1$ ($\geq k/2$) centers for this copy $X_i$ such that the distance gap between $d(p,C)$ for $p\in S\cap X_i$ and 
    $d(q,C)$ for $q\in X_i\setminus S$ is as large as possible; 
    we also place a center at the origin of every remaining copy of $X$ except $X_i$.

    Recall $|X_i| = k \eps^{-2}$.
    By applying the same construction of center set in~\cite{cohenaddad2022towards}, we can construct $k - m + 1$ centers in $C$ for $X_i$ such that
    the induced error on $X_i$ is     \[
    \left|\cost_2(X_i, C) - \cost_2(S\cap X_i, C)\right| = \Omega(\sqrt{k/|S\cap X_i|}\cdot k \eps^{-2}) = \Omega(k \eps^{-1}). 
    \]
    On the other hand, we can see that
    $\cost_2(P,C) = \Theta(m k \eps^{-2}) \gg \cost_2(X_i,C) = k \eps^{-2}$.
    Hence, the induced error $\Omega(k \eps^{-1})$ on $X_i$ is of order 
    much smaller than $\eps\cdot \cost_2(P,C)$,
    and we cannot conclude that $S$ is not an $\epsilon$-coreset.

    Another attempt is to construct $k - m + 1$ centers for $X_i$ such that
    \begin{enumerate}
        \item[P1.] $\cost_2(X_i, C) = \Theta(m k \eps^{-2})$, i.e., ensuring $\cost_2(X_i,C)$ and $\cost_2(P\setminus X_i,C)$ are in the same order;
        \item[P2.] $\left|\cost_2(X_i, C) - \cost_2(S\cap X_i, C)\right| = \Theta(m \sqrt{k/|S\cap X_i|}\cdot k \eps^{-2}) = \Theta(m k \eps^{-1})$, i.e., the induced error on $X_i$ is proportional to that of $\eps\cdot\cost_2(X_i,C)$. 
    \end{enumerate}
    Unfortunately, under P1, all points in $X_i$ are relatively close to each other (with distance $O(1)$), and hence there is no center set $C$ that can make $|\cost_2(X_i, C) - \cost_2(S\cap X_i, C)|$ large enough.
    %

    \vspace{-0.2cm}
    \paragraph{Our lower bound instance.}
    Motivated by the above attempts, we aim to construct an instance $X$ such that both P1 and P2 hold. 
    %
    Our key idea is to construct multiple clusters and ensure that each center can affect $\eps^{-2}$ points {\em from different clusters}
    (instead from one cluster as in \cite{cohenaddad2022towards})!
    For convenience, we discuss the case of $\eps = \Theta(k^{-1/4})$ and
    describe the construction of an instance $P$ for which the coreset size is at least 
    $\Omega(k \eps^{-4}) = \Omega(k^2)$.
    %
    %

    We first construct a point set $X$ and the instance $P$ consists of $m = \Theta(\eps^{-2}) = \Theta(k^{1/2})$ copies of $X$ that are far apart.
    Let $X$ consist of $B = 100 k^{1/2}$ clusters where the $j$-th cluster is centered at a scaled basis vector $a^\star_j = \eps^{-1} e_j\in \R^B$.
    Next, we carefully select $k/2$ unit vectors $v_1,\ldots, v_{k/2}\in \R^B$ that are almost orthogonal to each other.
    Every point in $X$ is of the form $a^\star_j + v_i$ for some $j\in [B]$ and $i\in [k/2]$.
    There are total $m B=k/2$ clusters (for all copies). 
    We ensure that every $v_i$ uniformly appears in $0.5 k^{1/2}$ clusters.
    Now let $S\subseteq X$ be any subset of size at most $|X|/2$.
    We show it is possible to create a center $c_i$ for every $i\in [k/2]$ which affects all points in different clusters with the same offset $v_i$ (\Cref{claim:property}).
    More concretely, for every $p = a^\star_j + v_i\in X$,
    \begin{enumerate}
        \item $c_i$ is the closest center of $p$ among all created centers $\left\{c_1,\ldots,c_{k/2}\right\}$;
        \item For any $p\in S$ and $q\in X\setminus S$, $d^2(p,c_i)-d^2(q,c_i) = \Theta(\eps^{-1})$;
    \end{enumerate}
    Consequently, the total induced error of these centers is
    \[
    \left|\cost_2(X, C) - \cost_2(S, C)\right| = \Theta(\eps^{-1} |X|) = \Theta(k^{7/4}),
    \]
    while $\cost_2(X,C) = \Theta(\eps^{-2} |X|) = \Theta(k^2)$.
    Hence both P1 and P2 hold, which implies that $S$ is not an $\eps$-coreset.

\color{black}
 \subsubsection{Upper Bounds}
 
	Next, we provide a high level overview for the analysis of
 our upper bound in Theorem~\ref{thm:Euclidean_informal}.
	Our algorithm is based on a recent importance sampling framework (Algorithm~\ref{alg:coreset}), developed in~\cite{cohenaddad2021new}.
	%
	%
	%
	In the analysis, it suffices to bound the number of samples 
    $\Gamma_G$ for each group $G$, which consists of at most $k$ rings, each having similar cost.
	%
	%
	$\Gamma_G$ should be large enough to ensure the uniform convergence of $|\sum_{p\in S_G} w(p)\cdot d_z(p,C)-\cost_z(G,C)|$ for all center sets $C$ (see Lemma~\ref{lm:key}). 
    To show the uniform convergence bound,
    \cite{cohenaddad2022towards} adopted a chaining argument 
    for tuples $u^C = (d(p,C))_{p\in G}$ over all center sets $C\in \calX^k$ in Euclidean metrics $\R^d$.
    Roughly speaking, they construct $2^{-h}$-coverings $V_h \subset \R^{|G|}_{\geq 0}$ at different scales $h\geq 0$, and bound the covering number $|V_h|$ and the maximum variance of $\Var_{C,h}$ for each scale.
    Their method leads to an upper bound of the sample number $\Gamma_G$ (ignoring polylog factor) being at most
    $k\eps^{-2}\cdot \min\left\{\eps^{-z},k\right\}$.
	The unexpected term $\min\left\{\eps^{-z},k\right\}$ appears due to the upper bound for the variance 
    $\Var_{C,h}\approx 2^{-2h}\cdot \min\left\{\eps^{-z},k\right\}$.
	
    \vspace{-0.4cm}
	\paragraph{A new chaining argument.}
    Our main improvement is to apply the chaining argument for $y^C = (d^z(p,C) - d^z(a^\star_i,C))_{i,p}$ instead of 
    $u^C = (d(p,C))_{p\in G}$ (Lemma~\ref{lm:main_small_y}) and construct coverings $V_h$s for tuples $y^C$ instead of $u^C$.
	The benefit of introducing $y^C$ is that $|d^z(p,C) - d^z(a^\star_i,C)| \leq O(z)\cdot d(p,a^\star_i)\cdot (d^{z-1}(p,C)+ d^{z-1}(a^\star_i,C))$ by 
	triangle inequality, which can be much smaller than $d^z(p,C)$.
	Moreover, considering the space of $y^C$ enables us to design a {\em tighter and unified relative covering error} $\err(p,C)$ for all powers $z\geq 1$ (see Definition~\ref{def:covering_main}) than the previous one defined in~\cite{cohenaddad2022towards} (which is $d^z(p,C) + d^z(p,A^\star)$), and the new relative covering error is key to achieving a smaller variance upper bound (Lemma~\ref{lm:variance}) for the chaining argument.
	Specifically, our variance $\Var_{C,h}$ is exactly proportional to $2^{2h}$
	and avoids the additional term $\min\left\{\eps^{-z},k\right\}$.
	%
	\vspace{-0.4cm}
	\paragraph{Bounding the covering number.}
    The remaining issue is to bound the covering number $|V_h|$s with respect to the new relative covering error $\err(p,C)$.
	To bound the size of $2^{-h}$-coverings $V_h$, 
    we introduce a new dimension reduction notion, called 
	{\em additive terminal embedding} (see Definition~\ref{def:additive_terminal_embedding}), 
	which embeds a ring of radius $r>0$ to a low dimensional space, and the required distance distortion should be an additive error $2^{-h} r$ (proportional to the radius), instead of 
	a multiplicative error as in the ordinary terminal embedding.
	Using the approach developed in \cite{mahabadi2018nonlinear,Narayanan2019OptimalTD}, 
	we can prove that the target dimension of additive terminal embedding can be bounded by $\tilde{O}(2^{-2h})$ (Theorem~\ref{thm:additive_terminal}).
    The new dimension reduction technique allows us to further reduce the target dimensions, which leads to better covering number $\exp(\tilde{O}(k^{\frac{2z+2}{z+2}} 2^{2h}))$.
	Consequently, the required sample number $\Gamma_G$ is at most $\tilde{O}(k^{\frac{2z+2}{z+2}} \eps^{-2})$ (see Theorem~\ref{thm:Euclidean_informal}).
	The notion of additive terminal embedding might be useful in 
    other contexts.

	\subsection{Other Related Work}
	\label{sec:related}
	
	\paragraph{Coresets for variants of clustering.} Coresets for several variants of clustering problems have also been studied.
	Cohen-Addad and Li~\cite{cohen2019on} first constructed a coreset of size $\tilde{O}(k^2\eps^{-3} \log^2 n )$ for capacitated \kMedian\ in $\R^d$, and the coreset size was improved to $\tilde{O}(k^3\eps^{-6})$ by~\cite{braverman2022power}.
	A generalization of capacitated clustering, called fair clustering, has also been shown to admit coresets of small size~\cite{schmidt2019fair,huang2019coresets,braverman2022power}.
	Another important variant of clustering is robust clustering, in which we can exclude at most $m$ points as outliers from the clustering objective.
	Recently, Huang et al.~\cite{huang2022near} provided a coreset construction for robust \kzC\ in Euclidean spaces of size $m+\poly(k,\eps^{-1})$.
	Other variants of clustering that admit small-sized coresets include fuzzy clustering~\cite{blomer2018coresets}, ordered weighted clustering~\cite{braverman2019coresets}, and time-series clustering~\cite{huang2021coresets}.

	\paragraph{Coresets for other problems.} Coresets have also been applied to a wide range of optimization and machine learning problems, including regression~\cite{drineas2006sampling,li2013iterative,boutsidis2013near,cohen2015uniform,jubran2019fast,chhaya2020coresets}, low rank approximation~\cite{cohen2017input}, projective clustering~\cite{feldman2011unified,tukan2022new}, and mixture model~\cite{lucic2017training,huang2020coresetsFR}.

\section{Optimal Coreset Lower Bounds}
\label{sec:lower}

\subsection{Proof of \Cref{thm:lb} (first part)}
\label{subsec:lbproof1}
    %
    

%

In this subsection, we prove the first part of \Cref{thm:lb}:
for any $k\geq 1$, $z\geq 1$, and $\eps \geq \Omega(k^{-\nicefrac{1}{(z+2)}})$,
there is an instance $P$ in $\mathbb{R}^d$ (for $d \geq \Omega(k \eps^{-z})$) 
such that any $\eps$-coreset of $P$ is of size at least $\Omega(k \eps^{-z-2})$.

\begin{proof}
%
For the convenience of analysis, we first consider the special case of \kMeans\ ($z=2$) and $\eps = 0.01 k^{-1/4}$.
Later we show how to extend the proof to general $z\geq 1$ and $\eps \geq \Omega(k^{-\nicefrac{1}{(z+2)}})$. 
We first show the construction of the lower bound instance $P$.

    \paragraph{Construction of $P$.}
    W.l.o.g., assume $k^{1/4}$ is an even integer $\geq 10000$.
    Let $B = 100 k^{1/2} $ and $t = k^{1/4}$.
    Let $A^\star = \left\{a^\star_j = t e_j: j\in [B]\right\}$ be a collection of scaled basis vectors in $\R^B$.
    We construct a family of $k/2$ subsets $B_1,\ldots, B_{k/2}\subseteq B$ satisfying that
    \begin{enumerate}
        \item for every $i\in [k/2]$, $|B_i| = k^{1/2}$;
        \item for every $i\neq i'\in [k/2]$, $|B_i \cap B_{i'}|\leq 0.1 k^{1/2}$.
    \end{enumerate}
    Indeed, the existence of such a family is well known (e.g.,
    similar design is used the celebrated Nisan-Wigderson generator~\cite{nisan1994hardness}), and it can be easily
    constructed with nonzero probability by i.i.d. randomly sampling $B_i$'s from $B$. Then we know that for every $i\neq i'\in [k/2]$, the expected size of intersection $\Exp\left[|B_i \cap B_{i'}|\right] = 0.01 k^{1/2}$, which implies that
    \[
    \Pr\left[|B_i \cap B_{i'}|> 0.1 k^{1/2}\right] \leq e^{-10^{-4} k^{1/2}}
    \]
    by Chernoff bound.
    Consequently, we have that
    \[
    \Pr\left[|B_i \cap B_{i'}|\leq 0.1 k^{1/2}, \forall i\neq i'\in [k/2]\right]\geq 1- 10^4 k\cdot e^{-10^{-4} k^{1/2}} > 0.5
    \]
    by the union bound and the assumption that $k^{1/4}\geq 10000$.
    Hence, we can construct such a family of $k/2$ subsets $B_1,\ldots, B_{k/2}\subseteq B$ as desired.

    Given a subset $A\subseteq B$, let $e_A \in \R^B$ denote the vector with $e_{A,i} = 1$ if $i\in A$ and $e_{A,i} = 0$ if $i\notin A$.
    For every $i\in [k]$, arbitrarily divide $B_i$ into $B_i = B_i^+ \cup B_i^-$ where $|B_i^+| = |B_i^-| = 0.5 k^{1/2}$.
    We construct a dataset $X\subset \R^B$ to be
    \[
    X := \left\{p = a_j^\star + \frac{1}{t} (e_{B_i^+} - e_{B_i^-}): i\in [k/2], j\in B_i^+ \right\},
    \]
    i.e., we create a point for every $B_i$ and every $j\in B_i^+$.
    By construction, we know that $|X| = 0.25 k^{3/2}$.
    Also, note that every $p= a_j^\star + \frac{1}{t} (e_{B_i^+} - e_{B_i^-})\in X$ satisfies 
    \[
    \|p\|_2^2 = (t+ \frac{1}{t})^2 + (k^{1/2} - 1)\cdot \frac{1}{t^2} = t^2 + 3 \text{ and } \|p - a_j^\star\|^2 = 1.
    \]
    Our final construction of $P$ is to make $\frac{k}{2B}$ copies of $X$ in $\R^B$, say $X_1, \ldots, X_{\frac{k}{2B}}$ such that every two copies $X_i, X_j$ is sufficiently far from each other.
    Let $o_l\in \R^B$ be the ``origin'' of each $X_l$.
    Note that $|P| = \Theta(k^2)$ and $d(o_i,o_j) \rightarrow +\infty$ for $i\neq j\in [\frac{k}{2B}]$.

    \paragraph{Coreset lower bound of $P$.}
    Next, we show that any $\eps$-coreset of $P$ is of size at least $\Omega(k \eps^{-z-2}) = \Omega(k^2)$.
    By contradiction assume that $(S,w)$ is an $\eps$-coreset of $P$ with $|S| = o(k^2)$.
    For every copy $X_l$ ($l\in [\frac{k}{2B}]$), let $S_l = S\cap X_l$.
    We first have the following claim.

    \begin{claim}[\bf{Weights of $S_i$}]
        \label{claim:coreset_weight}
        For every $l\in [\frac{k}{2B}]$, we have $w(S_l)\in (1\pm 2\eps)\cdot |X_l|$.
    \end{claim}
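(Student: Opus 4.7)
The plan is, for each copy index $l\in[m]$ (where $m:=k/(2B)$), to design a specific $k$-center set $C_l$ that \emph{isolates} copy $X_l$, so that the contribution of $X_l$ dominates both $\cost_2(P,C_l)$ and $\sum_{p\in S}w(p)d^2(p,C_l)$, and then to read off the desired bound on $w(S_l)$ from the $\eps$-coreset inequality applied to $C_l$.

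I would build $C_l$ in two blocks. First, for every other copy $l'\neq l$, I place $B$ local centers at the points $\{a^\star_j+o_{l'}:j\in[B]\}$. Using $\|a^\star_j-a^\star_{j'}\|^2=2t^2$ for $j\neq j'$ together with $\|(e_{B_i^+}-e_{B_i^-})/t\|=1\ll t$, a short Pythagorean calculation shows that each $p=a^\star_j+(e_{B_i^+}-e_{B_i^-})/t+o_{l'}\in X_{l'}$ has $a^\star_j+o_{l'}$ as its nearest center in this block with squared distance exactly $1$; this block uses $(m-1)B<k$ centers and yields $\cost_2(X_{l'},C_l)=|X|$ and $\sum_{p\in S_{l'}}w(p)d^2(p,C_l)=w(S_{l'})$ for every $l'\neq l$. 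Second, I place the remaining $k-(m-1)B$ centers all at a single point $c_0$ at distance $D$ from $o_l$. Since the copies in the construction of $P$ may be separated by arbitrarily large distances, I can choose the $o_{l'}$'s so that $d(o_l,o_{l'})\gg D$ for every $l'\neq l$; this forces $c_0$ to be the closest center of every $p\in X_l$, with $d^2(p,c_0)=D^2\pm O(tD)\pm O(t^2)$ uniformly in $p$.

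Substituting into $|\sum_p w(p)d^2(p,C_l)-\cost_2(P,C_l)|\le\eps\,\cost_2(P,C_l)$ and isolating the $D^2$-order terms yields a bound of the schematic form
\[
|w(S_l)-|X_l||\,D^2 \;\le\; \eps\,|X_l|\,D^2 \;+\; O(Dt)\,(|X_l|+w(S_l)) \;+\; O(|P|+t^2 w(S)).
\]
An easy a priori bound $w(S)=O(|P|)$, obtained by applying the coreset inequality to $k$ copies of a single extremely remote point, makes every error term vanish after dividing by $D^2$ and sending $D\to\infty$. The outcome is $|w(S_l)-|X_l||\le\eps|X_l|$, slightly stronger than the claimed $(1\pm 2\eps)$ bound; the factor $2$ provides slack for a cleaner finite-$D$ analysis.

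The step I expect to require the most care is the \emph{isolation} property underlying the whole argument, namely verifying that $c_0$ is indeed the closest center for every $p\in X_l$ (rather than some local center from another copy) and that each local center $a^\star_j+o_{l'}$ is indeed closest for its intended point of $X_{l'}$ (rather than $c_0$ or a local center from yet another copy). Both facts reduce to the scale separation $D\ll d(o_l,o_{l'})\ll d(o_{l'},o_{l''})$, which we can always arrange by placing the copies sufficiently far apart when constructing $P$; everything else is routine.
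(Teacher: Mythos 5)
Your proposal is correct and is essentially the same construction as the paper's proof: you place, for each copy $l'\neq l$, a local center at each $a^\star_j+o_{l'}$ (which captures every point of $X_{l'}$ at squared distance exactly $1$), and you park all remaining centers at a single far-away point $c_0$ that serves $X_l$ at a uniformly large squared distance, so that the coreset inequality forces $w(S_l)\approx|X_l|$. The paper takes $d(o_l,c^\star)=100kt\eps^{-1}$ and writes the resulting cost comparison with ``$\approx$'', while you go a bit further in making the swamping explicit via the a priori bound $w(S)=O(|P|)$ and a limit $D\to\infty$ (subject to $D$ remaining small relative to the inter-copy separations, which the construction of $P$ makes arbitrarily large); this is a mild tightening, not a different route.
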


    \begin{proof}
        The proof is easy. 
        By contradiction assume w.l.o.g. $w(S_1)\notin (1\pm 2\eps)\cdot |X_1|$.
        We construct a $k$-center set $C\subset \R^d$ as follows:
        \begin{enumerate}
            \item for every $l\in [\frac{k}{2B}]\setminus \{1\}$ and $i\in [B]$, construct a center $c$ at the copy of $a^\star_j$ in $X_l$;
            \item for $X_1$, construct a center $c^\star$ such that $d(o_1, c^\star) = 100 kt \eps^{-1}$.
        \end{enumerate}
        Note that the above procedure only constructs $\frac{k}{2} - B + 1$ centers.
        We let all remaining centers of $C$ be located at $c^\star$.
        Observe that $\cost_2(P,C)\approx |X_1|\cdot d^2(o_1, c^\star)$ and $\cost_2(S,C)\approx w(S_1)\cdot d^2(o_1, c^\star)$, which implies the claim. 
    \end{proof}

    \noindent
    By the pigeonhole principle, there must exist some $l\in [\frac{k}{2B}]$ such that $|S_l| = o(|X_l|) = o(k^{3/2})$.
    W.l.o.g., we assume $l = 1$ and $o_1 = 0$.

    Now we construct a $k$-center set $C\subset \R^d$ such that $\cost_2(S,C)\notin (1\pm \eps)\cdot \cost_2(P,C)$.
    We first construct $k/2 - B$ centers by the following way: for every $l\in [\frac{k}{2B}]\setminus \{1\}$ and $i\in [B]$, construct a center $c_{li}$ at the copy of $a^\star_i$ in $X_l$.
    By construction, the points in the cluster around $a^{\star}_i$ 
    in $X_l$ should be assigned to center $c_{li}$. 
    %
    %
    It is not difficult to see that
    \begin{align}
    \label{eq1:lower}
    \cost_2(X\setminus X_1, C)\leq k^2, \text{ and } \cost_2(S\setminus S_1, C)  \leq k^2,
    \end{align}
    since there are at most $k^2/4$ points in $X\setminus X_1$ and each point $p\in X_l$ ($l\in [\frac{k}{2B}]\setminus \{1\}$) contributes $d^2(p,c_{l i}) = 1$.
    In the following, we only consider the case that $\cost_2(X\setminus X_1, C) \leq \cost_2(S\setminus S_1, C)$.
    The argument for the reverse case is symmetric.

    For every $i\in [k/2]$, we define $X_{1,i} := \left\{p = a_j^\star + \frac{1}{t} (e_{B_i^+} - e_{B_i^-}): j\in B_i^+ \right\}$ and $S_{1,i} := S_1\cap X_{1,i}$.
    Note that $X_{1,i}$'s (resp. $S_{1,i}$'s) form a partition of $X_1$ (resp. $S_1$).
    Also define $T_i := \left\{ j\in B_i^+: p = a_j^\star + \frac{1}{t} (e_{B_i^+} - e_{B_i^-})\in S_{1,i}\right\}$ to be the collection of entries of $a_j^\star$ for $p\in S_{1,i}$.

    We will construct $k/2$ centers $c_1,\ldots, c_{k/2}$ for the first copy $X_1$, such that 
    the difference of $\cost_2(S_1,C)$ and $\cost_2(X_1,C)$ is sufficiently large.
    Recall that we have constructed $k/2-B$ centers for other copies.
    We place the remaining $B$ centers far away from all other points
    so that the clustering costs are not affected by them.
    Now, we describe the construction of $c_1,\ldots, c_{k/2}$.
    For every $i\in [k/2]$, $c_i\in \R^d$ is defined according to the following cases:
    \begin{enumerate}
        \item for every $j\notin B_i$, let $c_{i,j} = 0$;
        \item for every $j\in T_i$, let $c_{i,j} = 0.8$;
        \item for every $j\in B_i^+\setminus T_i$, let $c_{i,j} = 1$;
        \item arbitrarily select $|B_i^+\setminus T_i|$ distinct $j\in B_i^-$, let $c_{i,j} = -0.8$ for each such $j$;
        \item for the remaining $|T_i|$ entries $j\in B_i^-$, let $c_{i,j} = -1$ for each.
    \end{enumerate}
    By construction, for every $i\in [k/2]$, 
    \[
    \|c_i\|^2 = \frac{k^{1/2}}{2} \cdot (1 + 0.8^2) = 0.82 t^2.
    \]
    We also have the following claim.

    \begin{claim}[\bf{Properties of $C$}]
        \label{claim:property}
        Let $p= a_j^\star + \frac{1}{t} (e_{B_i^+} - e_{B_i^-})\in X_1$ for some $i\in [k/2]$ and $j\in B_i^+$.
        We have $d^2(p, C) = d^2(p,c_i)$ and moreover,
        \begin{enumerate}
            \item if $j\in T_i$, $d^2(p,C) = 1.82 t^2 - 3.4 t + 3$;
            \item if $j\in B_i^+\setminus T_i$, $d^2(p,C) = 1.82 t^2 - 3.8 t + 3$.
        \end{enumerate}
    \end{claim}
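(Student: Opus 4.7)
The plan is to compute $d^2(p, c_i)$ by direct expansion using the identity $d^2(p, c_i) = \|p\|^2 + \|c_i\|^2 - 2 \langle p, c_i \rangle$, and then separately verify that $c_i$ achieves the minimum over all centers in $C$. Since $\|p\|^2 = t^2 + 3$ and $\|c_i\|^2 = 0.82 t^2$ are already established in the preceding text, the only non-trivial piece is the inner product. Both $p$ and $c_i$ are supported on $B_i$, so I would split the sum over coordinates into four groups according to the construction of $c_i$ (the $|T_i|$ entries of $T_i$ with value $0.8$, the $0.5 k^{1/2} - |T_i|$ entries of $B_i^+\setminus T_i$ with value $1$, the symmetric negative groups in $B_i^-$), treating the coordinate $j$ separately. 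A careful bookkeeping shows that the terms linear in $|T_i|$ cancel since the coefficient collapses to $0.8 - 1 - 0.8 + 1 = 0$, so $\langle p, c_i\rangle$ depends only on whether $j\in T_i$ or $j \in B_i^+\setminus T_i$. Extracting the two boundary contributions yields $\langle p, c_i\rangle = 1.7 t$ in Case 1 and $\langle p, c_i\rangle = 1.9 t$ in Case 2, and substituting into the expansion gives the two claimed values $1.82 t^2 - 3.4 t + 3$ and $1.82 t^2 - 3.8 t + 3$.

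To prove $d^2(p, C) = d^2(p, c_i)$, I would compare $c_i$ against every other center in $C$. The centers $c_{l, i'}$ for $l \neq 1$ sit in distant copies with $d(o_1, o_l)\to \infty$, and the remaining dummy centers are placed arbitrarily far from $P$, so neither can be closer than $c_i$. For the other centers $c_{i'}$ with $i' \in [k/2]$, $i'\neq i$, the same decomposition reduces the task to showing $\langle p, c_{i'}\rangle \leq \langle p, c_i\rangle = 1.7 t$. The key observation is that $c_{i'}$ is supported on $B_{i'}$ while $p$ is supported on $B_i$, so the inner product is confined to coordinates in $B_i \cap B_{i'}$; by the almost-orthogonality property of the family $\{B_i\}_{i\in[k/2]}$, we have $|B_i \cap B_{i'}| \leq 0.1 k^{1/2} = 0.1 t^2$. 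Bounding $|c_{i', j'}| \leq 1$, the largest single contribution comes from the coordinate $j$ (only when $j \in B_{i'}$) and is at most $t + 1/t$, while the remaining overlapping coordinates contribute at most $0.1 t^2 \cdot (1/t) = 0.1 t$ in absolute value. Summing, $\langle p, c_{i'}\rangle \leq 1.1 t + 1/t$, which is strictly less than $1.7 t$ for $t = k^{1/4}\geq 10$.

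The main technical obstacle is the algebraic cancellation in the computation of $\langle p, c_i\rangle$: the statement crucially relies on the fact that the contribution proportional to $|T_i|$ vanishes, and that the boundary term at coordinate $j$ yields exactly a $0.2 t$ gap between Case 1 and Case 2. Both are consequences of the specific symmetric choice of the values $\pm 0.8$ and $\pm 1$ in the construction of $c_i$, which enforces a cancellation between the $B_i^+$ and $B_i^-$ sides. Once this symmetry is pinned down, the remaining verifications, namely the expansion of $d^2(p, c_i)$ in each case and the closeness comparison against other $c_{i'}$ via the almost-orthogonality bound, become routine arithmetic.
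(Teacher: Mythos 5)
Your proposal is correct and follows the same route as the paper: expand $d^2(p,c_i)$ via $\|p\|^2 + \|c_i\|^2 - 2\langle p,c_i\rangle$, observe that the $|T_i|$-dependent terms in the inner product cancel by the symmetric $\pm0.8/\pm1$ construction (leaving $\langle p,c_i\rangle = 0.9t$ plus the contribution of coordinate $j$, namely $0.8t$ or $t$), and then bound $\langle p,c_{i'}\rangle$ for $i'\neq i$ using $|B_i\cap B_{i'}|\leq 0.1 k^{1/2}$. Your bound $\langle p,c_{i'}\rangle \leq 1.1t + 1/t$ is in fact a slightly more careful accounting than the paper's stated $1.1t$ (you correctly note that coordinate $j$ contributes up to $t+1/t$ rather than just $t$), but both are far below $1.7t$ and the argument is unchanged.
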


    \begin{proof}
        The calculation of $d^2(p,c_i)$ is not difficult.
        Since we already know $\|p\|^2 = t^2+3$ and $\|c_i\|^2 = 0.82t^2$, it remains to compute $-2\langle p,c_i \rangle$, which equals $-3.4 t$ when $j\in T_i$ and equals $- 3.8 t$ when $j\in B_i^+\setminus T_i$ by the construction of $c_i$.

        For every $i'\neq i$, we have that
        \[
        \langle p,c_i \rangle \leq t + 0.1k^{1/2}\cdot \frac{1}{t} = 1.1 t,
        \]
        which implies that $d^2(p, C) = d^2(p,c_i)$.
        Thus, we prove \Cref{claim:property}.
    \end{proof}

    \noindent
    Now we are ready to prove that $S$ is not an $\eps$-coreset of $P$.
    By \Cref{claim:property}, we know that
    \begin{align*}
    \cost_2(P,C) = & \quad \cost_2(X_1,C) + \cost_2(P\setminus X_1,C) & \\
    \leq &\quad |X_1|\cdot (1.82t^2 - 3.4t + 3) + k^2 &(\text{Claim~\ref{claim:property} and Ineq.~\eqref{eq1:lower}})\\
    \leq & \quad 2k^2, &
    \end{align*}
    and
    \begin{align*}
    & \quad \cost_2(S,C) - \cost_2(X,C) \\
    \geq &\quad \cost_2(S_1,C) - \cost_2(X_1,C) \quad (\text{since } \cost_2(X\setminus X_1, C) \leq \cost_2(S\setminus S_1, C))\\
    = &\quad (w(S_1) - |S_1|)\cdot (1.82t^2 - 3.4t + 3) - (|X_1| - |S_1|)\cdot  (1.82t^2 - 3.8t + 3)  \quad (\text{Claim~\ref{claim:property}})\\ 
    \geq &\quad (|X_1| - |S_1|)\cdot 0.4t - (|X_1| - w(S_1))\cdot 1.82 t^2 \\
    > &\quad 0.02 k^{7/4}. \quad \quad (|X_1| - |S_1|> 0.2k^{3/2} \text{ and Claim~\ref{claim:coreset_weight}})
    \end{align*}
    Hence, we have that $\cost_2(S,C) - \cost_2(P,C) > \eps\cdot \cost_2(X,C)$, which completes the proof.

\paragraph{Extension to $\eps \geq 0.01 k^{-1/4}$.}
    For each copy of $X$, let $t = \eps^{-1}$ and $B = 100 k/t^2 = 100 k \eps^{2}$.
    Let every $B_i$ ($i\in [k/2]$) be of cardinality $t^2$.
    As $\eps$ increases, $B$ increases and $|B_i|$ decreases, and hence, we still can guarantee that $|B_i\cap B_{i'}|\leq 0.1t^2$ holds for all $i\neq i'$.
    The construction of $P$ is to make $\frac{k}{2B}$ copies of $X$ in $\R^B$, say $X_1, \ldots, X_{\frac{k}{2B}}$.
    Consequently, there are $\Theta(k t^2) = \Theta(k \eps^{-2})$ points in each copy $X_i$, which implies that $|P| = \Theta(k \eps^{-4})$.
    By contradiction assume $S$ is an $\eps$-coreset of size $o(k \eps^{-4})$.
    Again, there must exist a copy $X_i$ such that $S_i = S\cap X_i = o(k \eps^{-2})$.
    Focus on $X_i$ and apply the same construction of $C$ which ensures $\cost_2(P,C) = \Theta(k \eps^{-4})$.
    Similar to the case of $\eps = 0.01 k^{-1/4}$, the error induced by the coreset $S$ for every point is of order $t$.
    Then the total error is of order $\Theta(k t^3) = \Theta(k \eps^{-3})$, which is larger than $\eps\cdot \cost_2(P,C)$.

\paragraph{Extension to general $z\geq 1$.}
    The construction is almost identical to that for \kMeans.
    For $\eps = \Omega(k^{-\nicefrac{1}{(z+2)}})$, the only difference is that we set $t = \eps^{-1}$ and $B = 100k/t^z = \Theta(k \eps^z)$ in each copy of $X$.
    Let every $B_i$ ($i\in [k/2]$) be of cardinality $t^2 \leq B/100$.
    The construction of $P$ is still to make $\frac{k}{2B}$ copies of $X$ in $\R^B$, say $X_1, \ldots, X_{\frac{k}{2B}}$.
    Consequently, there are $\Theta(k t^2) = \Theta(k \eps^{-2})$ points in each copy $X_i$, which implies that $|P| = \Theta(k \eps^{-z-2})$.
    Also note that the dimension of $P$ is $B = \Theta(k \eps^{-z})$.

    By construction, Claim~\ref{claim:property} still holds and hence,  we have for $p\in S$ and $q\in X\setminus S$,
    \[
    d^z(p,C) = (1.82t^2 - 3.4t +3)^{z/2}, \text{ and }  d^z(q,C) = (1.82t^2 - 3.8t +3)^{z/2}
    \]
    Since we consider $d^z$, the difference $d^z(p,C) - d^z(q,C)$ is of order $\Omega(t^{z-1})$.
    %
    %
    Hence, the overall error is of order $k B t^{z-1} = \Omega(k \eps^{-z-1})$ and the clustering cost is of order $O_z(k \eps^{-z-2})$, which implies the first coreset lower bound $\Omega(k \eps^{-z-2})$ when $\eps = \Omega(k^{-\nicefrac{1}{(z+2)}})$.
\end{proof}

\subsection{Proof of \Cref{thm:lb} (second part)}
\label{subsec:lbproof2}

\color{black}
In this subsection, we extend our previous construction to prove the second lower bound in
\Cref{thm:lb}: 
for $\eps\geq \Omega(k^{-\frac{z+1}{z+2}})$, there exists an instance $P$ in 
$\mathbb{R}^d$ (for $d = O(k^{\frac{2}{z+2}})$) such that any $\eps$-coreset of $P$ is of size at least $\Omega(k^{\frac{2z+3}{z+2}} \eps^{-1})$. 

\begin{proof}
    Since $k^{\frac{2z+3}{z+2}} \eps^{-1} \leq k \eps^{-z-2}$ when $\eps \geq k^{-\nicefrac{1}{(z+2)}}$, we only need to prove the bound for any fixed $\eps\in [k^{-\frac{z+1}{z+2}} , k^{-\nicefrac{1}{(z+2)}}]$.
    We let $t = k^{\nicefrac{1}{(z+2)}}$ and $B = 100 k^{\nicefrac{2}{(z+2)}}$.
    Similar to the construction in Section~\ref{subsec:lbproof1}, the construction of $P$ is to make $\frac{k}{2B}$ copies of point set $X$ in $\R^B$, say $X_1, \ldots, X_{\frac{k}{2B}}$, where the construction of $X$ will be discussed shortly. 
    To prove the lemma, we want $|X| = k^{\frac{z+3}{z+2}} \eps^{-1}$ such that $|P| = \Theta(k^{\frac{2z+3}{z+2}} \eps^{-1})$.
    Using the same construction as in Section~\ref{subsec:lbproof1}, we construct $k^{\frac{z+1}{z+2}} \eps^{-1}$  different $B_i$'s from $B$ such that
    %
    \begin{enumerate}
        \item for every $i\in [k/2]$, $|B_i| = B/100 = k^{\nicefrac{2}{(z+2)}}$;
        \item for every $i\neq i'\in [k/2]$, $|B_i \cap B_{i'}|\leq 0.1 k^{\nicefrac{2}{(z+2)}}$.
    \end{enumerate}
    Recall that the existence of such a family is well known and it can be easily
    constructed with nonzero probability by i.i.d. randomly sampling $B_i$'s from $B$.
    Then we know that for every $i\neq i'\in [k^{\frac{z+1}{z+2}} \eps^{-1}]$, the expected size of intersection $\Exp\left[|B_i \cap B_{i'}|\right] = 0.01 k^{\nicefrac{2}{(z+2)}}$, which implies that
     \[
    \Pr\left[|B_i \cap B_{i'}|> 0.1 k^{2/(z+2)}\right] \leq e^{- \Omega(k^{2/(z+2)})}
    \]
    by Chernoff bound.
    Consequently, 
    by the union bound we have that
    \[
    \Pr\left[|B_i \cap B_{i'}|\leq 0.1 k^{2/(z+2)}, \forall i\neq i'\in [k^{\frac{z+1}{z+2}} \eps^{-1}]\right]\geq 1- O(k^{\frac{4z+4}{z+2}})\cdot e^{-\Omega( k^{2/(z+2)})} > 0.5,
    \]
    where the last inequality holds when $k\geq \Omega_z(1)$ is large enough.
    Similar to that in Section~\ref{subsec:lbproof1}, we construct $X\subset \R^B$ via these $B_i$'s in the following way: 
    For every $i\in [k^{\frac{z+1}{z+2}} \eps^{-1}]$, arbitrarily divide $B_i$ into $B_i = B_i^+ \cup B_i^-$ where $|B_i^+| = |B_i^-| = 0.5 k^{2/(z+2)}$.
    Define
    \[
    X := \left\{p = a_j^\star + \frac{1}{t} (e_{B_i^+} - e_{B_i^-}): i\in [k^{\frac{z+1}{z+2}} \eps^{-1}], j\in B_i^+ \right\},
    \]
    i.e., we create a point for every $B_i$ and every $j\in B_i^+$.
    By construction, we know that $|X| = \Theta(k^{\frac{z+3}{z+2}} \eps^{-1})$ as desired.

    Next, we prove  the coreset lower bound for
    such dataset $P$.
    By contradiction assume that $(S,w)$ is an $\eps$-coreset for \kzC\ of $P$ with $|S| = o(|P|)$.
    There must exist a copy $X_1$ such that $|S\cap X_1| = o(|X_1|)$ and we focus on this copy together with $S_1 = S\cap X_1$.
    For every $i\in [k^{\frac{z+1}{z+2}} \eps^{-1}]$, we define $X_{1,i} := \left\{p = t e_j + \frac{1}{t} (e_{B_i^+} - e_{B_i^-}): j\in B_i^+ \right\}$ and $S_{1,i} := S_1\cap X_{1,i}$.
    Then, we construct $k/2-B$ centers in $C$ for other copies $X_2, \ldots, X_{\frac{k}{2B}}$ 
    in the same way as in Section~\ref{subsec:lbproof1}:
    for every $l\in [\frac{k}{2B}]\setminus \{1\}$ and $i\in [B]$, construct a center $c_{li}$ at the copy of $a^\star_i = t e_i$ in $X_l$. 
    This construction ensures that $\cost_z(P\setminus X_1, C) = O(|P|) = O(k^{\frac{2z+3}{z+2}} \eps^{-1})$.
    We also construct a specific center $c^\star$ for $X_1$ as
    \[
    c^\star = \alpha (1,1,\ldots,1) 
    \]
    for some $\alpha > 0$ such that for every point $p\in X_1$, $d^2(p,c^\star) = 1.82t^2 - 3.4t + 3$.
    Such $\alpha$ must exist since for any $p,q\in X_1$, $\langle p, (1,1,\ldots,1) \rangle = \langle q, (1,1,\ldots,1) \rangle$ holds.
    By the same argument, Claim~\ref{claim:coreset_weight} still holds, i.e., $w(S_1) \in (1\pm 2\eps)\cdot |X_1|$.
    W.l.o.g., we assume $\cost_2(P\setminus X_1, C) \leq \cost_2(S\setminus S_1, C)$ and the inverse case is symmetric.

    We claim that there must exist a collection $A$ of $k/2$ distinct $i$'s such that 
    \[
    \Bigl|\bigcup_{i\in A} S_{1,i}\Bigr| = o\left(\Bigl|\bigcup_{i\in A} X_{1,i}\Bigr|\right) = o(k^{\frac{z+4}{z+2}}),
    \]
    and $\sum_{i\in A} w(S_{1,i}) \geq (1-3\eps) \cdot |\bigcup_{i\in A} X_{1,i}| = (1-3\eps) \cdot k^{\frac{z+4}{z+2}}/2$.
    This claim is not hard to verify by uniformly randomly selecting $A$ from $[k^{\frac{z+1}{z+2}} \eps^{-1}]$.
    Then we have 
    \[
    \Exp_A\left[\Bigl|\bigcup_{i\in A} S_{1,i}\Bigr| \right] = \frac{|A|}{k^{\frac{z+1}{z+2} }\eps^{-1}} \cdot |S_1| = \frac{|A|}{k^{\frac{z+1}{z+2} } \eps^{-1}} \cdot o(|X_1|) = \frac{k/2}{k^{\frac{z+1}{z+2} } \eps^{-1}} \cdot o(k^{\frac{z+3}{z+2} } \eps^{-1}) = o(k^{\frac{z+4}{z+2}}),
    \]
    and 
    \[
    \Exp_A\left[\sum_{i\in A} w(S_{1,i})\right] = \frac{|A|}{k^{\frac{z+1}{z+2} }\eps^{-1}} \cdot w(S_1) \geq (1-2\eps) \cdot \frac{|A|}{k^{\frac{z+1}{z+2}}\eps^{-1}}\cdot |X_1| \geq (1-2\eps) \cdot k^{\frac{z+4}{z+2}}/2.
    \] 
    These properties imply that the claim holds with a random $A$ with non-zero probability.

    Let $S' = \bigcup_{i\in A} S_{1,i}$ and $X' = \bigcup_{i\in A} X_{1,i}$.
    Now we apply the previous construction for $A$ such that for every $p\in S'$, $d^2(p,C) = 1.82t^2 - 3.4t +3$ and for every $p\in X'\setminus S'$, $d^2(p,C) = 1.82t^2 - 3.6t +3$.
    By a similar calculation as in Section~\ref{subsec:lbproof1}, this construction implies that
    \[
    \cost_2(S_1,C) - \cost_2(X_1,C) \geq \Omega(\sum_{i\in A} w(S_{1,i}) \cdot t^{z-1}) - (|X_1|- w(S_1))\cdot 1.82t^z = \Omega(k^{\frac{2z+3}{z+2}}),
    \]
    while $\cost_z(P, C) = \cost_z(P\setminus X_1, C) + \cost_z(X_1, C) = O(k^{\frac{2z+3}{z+2}} \eps^{-1})$.
    Hence, $S$ is not an $\eps$-coreset for \kzC\ of $P$.
    This completes the proof of the second half of Theorem~\ref{thm:lb}.
\end{proof}

\color{black}

 \subsection{New Lower Bounds for Doubling Spaces: Proof of Corollary~\ref{cor:doubling}}
 \label{sec:doubling}

In this subsection, we extend the lower bounds in Theorem~\ref{thm:lb} to doubling metrics.

\begin{proof}[Proof of Corollary~\ref{cor:doubling}]
    We first show the coreset lower bound $\Omega(k\cdot \ddim \cdot \eps^{-\max\{z,2\}}/ \log k)$ when $\eps \geq \Omega(k^{-\nicefrac{1}{(z+2)}})$.
    For $1\leq z\leq 2$, it is known that $\Omega(k\cdot \ddim\cdot \eps^{-2})$ is a lower bound by~\cite{cohenaddad2022towards}.
    Thus, we only need to consider the case that $z>2$.

    Let $P\subset \R^d$ be the instance constructed in the proof of Theorem~\ref{thm:lb} that consists of $\Theta(k \eps^{-z-2})$ points.
    We perform an $O(z^{-1}\eps)$-terminal embedding $f$ on $P$ (see Section~\ref{sec:notation_Euclidean}) and obtain an embedded dataset $f(P)\subset \R^m$ satisfying that
    \begin{enumerate}
    \item $m = O(\eps^{-2} \log (k \eps^{-1})) = O(\eps^{-2} \log k)$ since $\eps \geq \Omega(k^{-\nicefrac{1}{(z+2)}})$;
    \item for every point $p\in P$ and $q\in \R^d$, $d^z(p,q)\in (1\pm \eps)\cdot d^z(f(p),f(q))$.
    \end{enumerate}
    It is well known that the doubling dimension of $\R^m$ is at most $\ddim \leq O(m)$ (see e.g.,\cite{vergergaugry2005covering}).
    We assume by contradiction that there exists an $\eps$-coreset $f(S)\subseteq f(P)$ of size at most
    $$
    o(k \cdot \ddim \cdot \eps^{-z}/\log k) \leq 
    o(k \cdot m \cdot \eps^{-z}/\log k) \leq  
    o(k\cdot\eps^{-z-2})
    $$
    for the embedded instance $f(P)$ for \kzC.
    Since $f$ is an $O(z^{-1}\eps)$-terminal embedding, we know that $S$ is an $O(\eps)$-coreset of the original instance $P$ for \kzC.
    However, $|S| =  o(k\cdot \eps^{-z-2})$, which contradicts Theorem~\ref{thm:lb}.
    Hence, 
    any $\eps$-coreset of $f(P)$ should be of size at least 
    $\Omega(k\cdot \ddim \cdot \eps^{-\max\{z,2\}}/ \log k)$.

    The argument for the second lower bound $\Omega(k^{\frac{2z+1}{z+2}} \cdot \ddim \cdot \eps^{-1})$ is almost identical.
    We can derive this lower bound from the proof of Theorem~\ref{thm:lb} for the bound $\Omega(k^{\frac{2z+3}{z+1}} \eps^{-1})$.
    The only difference is that the dimension $B = k^{\nicefrac{2}{(z+2)}}$ now (noting that terminal
    embedding does not help here).
    Hence, we have $\ddim \leq O(k^{\nicefrac{2}{(z+2)}})$ which implies that any $\eps$-coreset of $P$ should be of size at least $\Omega(k^{\frac{2z+1}{z+2}} \cdot \ddim \cdot \eps^{-1})$.
    
    To sum up, we complete the proof of Corollary~\ref{cor:doubling}.
\end{proof}

 \color{black}


\section{Improved Coreset Upper Bounds for \kzC}
\label{sec:upper_short}

%
In this section, we introduce the main ideas for our analysis of the upper bound and discuss some key technical differences between ours and prior work~\cite{cohenaddad2022towards,cohenaddad2022improved}.
For completeness, we provide all 
technical details in the appendix.

\sloppy
We first briefly review the framework of prior work~\cite{cohenaddad2022towards,cohenaddad2022improved}.
Let $A^\star = \left\{a_1^\star, \ldots, a_k^\star\right\}\subset \R^d$ be an $O(1)$-approximation of $P$ for \kzC.
For every $i\in [k]$, let $P_i:= \left\{p\in P: a_i^\star = \arg\min_{c\in A^\star} d(p,c)\right\}$ denote the induced cluster of $P$ whose closest center is $a_i^\star$. 
Cohen-addad et al.~\cite{cohenaddad2021new,cohenaddad2022towards,cohenaddad2022improved} showed that we can make the following simplifying assumption without loss of generality.
\begin{enumerate}
	\item The number of distinct points $\|P\|_0$ is at most $2^{O(z)}\cdot\poly(k\eps^{-1})$;
    \item For every $i\in [k]$ and every $p,q\in P_i$, $d^z(p, A^\star) \leq 2 d^z(q,A^\star)$;
    \item For every $i,j\in [k]$, $d^z(P_i, A^\star) \leq 2d^z(P_j, A^\star)$.
\end{enumerate}

\noindent
The first assumption is via the iterative size reduction approach \cite{braverman2021coresets}.
The second assumption ensures that the distances of all points in $P_i$ to $a_i^\star$ are close.
This structure is called a \emph{ring} in \cite{cohenaddad2021new,cohenaddad2022towards}.
The third assumption ensures that the clustering cost of every partition $P_i$ to $A^\star$ is close.
Combining these assumptions, dataset $P$ consists of ``balanced'' rings, which is called \emph{a main group} in~\cite{cohenaddad2021new,cohenaddad2022towards}. 
Cohen-addad et al. \cite{cohenaddad2021new} showed that a general dataset can be decomposed into $\tilde{O}_z(1)$ main groups, where $\tilde{O}_z(1) = \poly(z, \log k, \log \eps^{-1})$.
Consequently, if there exists a coreset for a main group of size $t$, the coreset size without these data assumptions is at most $\tilde{O}_z(t)$.
Then the most essential part is to analyze the coreset size for a main group (under the aforementioned assumptions).

The key of our analysis is to show that a collection $S$ of $\Gamma = 2^{O(z)}\cdot \tilde{O}(k^{\frac{2z+2}{z+2}} \eps^{-2})$ samples from $P$ is an $\eps$-coreset, where each sample $p\in P$ is selected with probability $\frac{d^z(p, A^\star)}{\cost_z(P, A^\star)}$ and weighted by $w(p) = \frac{\cost_z(P,A^\star)}{\Gamma\cdot d^z(p, A^\star)}$.
It suffices to bound the (expected) estimation error of the sample $S$:
\begin{align}
\label{eq:sketch_1}
\Exp_{S}\sup_{C\in \calC}\left[|\cost_2(P,C) - \cost_2(S,C)|\right]\leq \frac{\eps\cdot \cost_z(P, C+A^\star)}{\tilde{O}_z(1)},
\end{align}
where $\cost_z(P,C+A^\star):= \cost_z(P, C) + \cost_z(P, A^\star)$.
Theorem~\ref{thm:Euclidean_informal} is a direct corollary of the above lemma by the Markov inequality.
To prove the above inequality, we also adopt a chaining argument, which is different from the one in \cite{cohenaddad2022towards,cohenaddad2022improved}.

Define $r_i:= \min_{p\in P_i} d^z(p, a_i^\star)$ for every $i\in [k]$.\footnote{$r_i$ is referred to as $2^j \Delta_i$ in Appendix \ref{sec:notation}.}
Given a subset $B\subseteq [k]$ and an integer $\beta\geq 0$, let $\calC_{B,\beta}$ denote the collection of center sets $C\in \calC$ satisfying $2^\beta\cdot r_i\leq d^z(a_i^\star, C) < 2^{\beta+1}\cdot r_i$ for every $i\in B$.
We say a set $V\subset \R^{|S|}$ of \emph{cost vectors} is an \emph{$\alpha$-covering} of $S$ if for each $C\in \calC_{B,\beta}$, there exists a cost vector $v\in V$ such that for any $i\in B$ and $p\in P_i\cap S$, the following inequality holds:
\begin{align*}
		\left|d^z(p,C)-d^z(a^\star_i,C)-v_p\right|\leq \alpha\cdot \err(p,C),
\end{align*}
where $\err(p,C)$ is the {\em relative covering error} of $p$ to $C$ defined as follows
\begin{align*}
        \err(p,C):=\sqrt{d^z(p,C)\cdot d^z(p,A^\star)}+d^z(p,A^\star))\cdot \sqrt{\frac{\ZZGCA{G}}{\cost_z(G,A^\star)}}.
\end{align*}
The key is to bound the covering number $|V_\alpha|$.

We remark that our notion of covering is different from that in prior work \cite{cohenaddad2022towards,cohenaddad2022improved}. 
For instance, our definition has the following differences from \cite[Definition 3]{cohenaddad2022towards}, which are crucial for improving the coreset sizes.
One difference is that our covering is constructed on the distance difference $d^z(p,C)-d^z(a^\star_i,C)$ instead of $d^z(p,C)$, which is upper bounded by $z\cdot d(p,a^\star_i)\cdot (d^{z-1}(p,C) + d^{z-1}(p, a^\star_i))$ by the triangle inequality. 
The other one is that our relative covering error $\err(p,C)$ is proportional to $\sqrt{d^z(p,C)\cdot d^z(p,A^\star)}$, which is typically larger (especially when $d(p,C)\gg d(p,A^\star)$ ).
For comparison, the covering error defined in \cite[Definition 3]{cohenaddad2022towards} is $d^z(p,C)+d^z(p,A^\star)$. 
Compared to \cite{cohenaddad2022improved}, our notion contains an additional factor $\sqrt{\frac{\ZZGCA{G}}{\cost_z(G,A^\star)}}$ that is at least 1.
Our smaller error bound is essential for the improvement of coreset sizes, since its scale affects the variance of our sampling scheme (Lemma~\ref{lm:variance}) and the variance bound directly appears in the coreset size (Lemma~\ref{lm:Gaussian_sup}).
Specifically, our covering notion enables us to bound the variance of our sampling scheme by $\Var_\alpha = O(\frac{\alpha^2}{\Gamma})$ (Lemma~\ref{lm:variance}), saving a factor of $\min\left\{\eps^{-z}, k\right\}$ compared to that in \cite[Lemma 23]{cohenaddad2022towards}.
We remark that \cite{cohenaddad2022improved} also introduces coverings of $S$ for \kMedian\ and \kMeans, but uses different covering error bounds.
However, it is unclear to us how to use their error bounds to obtain the same results for general power $z\geq 1$ as ours.

By a standard chaining argument, it suffices to prove the existence of an $\alpha$-covering $V_\alpha$ of $S$ with
\begin{align}
\label{eq:sketch_2}
\log |V_\alpha| \leq \tilde{O}_z(1) \cdot  k \alpha^{-2}\cdot \min\left\{2^{\beta z}, 1+k 2^{-2\beta}\right\}.
\end{align}
By simple calculation, we have $\min\left\{2^{\beta z}, 1+k 2^{-2\beta}\right\} \leq k^{\frac{z}{z+2}}$, where the equality can be achieved when $2^\beta = k^{1/(z+2)}$.
Then we can select the sample number $\Gamma$ to be
\[
\eps^{-2}\cdot \Var_{\alpha} \cdot \log (|V_\alpha|) \approx k^{\frac{2z+2}{z+2}}\eps^{-2}
\]
such that Inequality~\eqref{eq:sketch_1} holds.
It remains to prove Inequality~\eqref{eq:sketch_2}.
Note that the first bound $\log |V_\alpha| \leq \tilde{O}_z(1) \cdot  k \alpha^{-2}\cdot 2^{\beta z}$ has been proved in the literature by applying terminal embedding (see e.g., \cite[Lemma 4]{cohenaddad2022improved}).
Thus, we only need to prove the second bound $\log |V_\alpha| \leq \tilde{O}_z(1) \cdot  k \alpha^{-2}\cdot (1+k 2^{-2\beta})$.
To this end, we introduce a new notion for dimension reduction, called 
{\em additive terminal embedding}.

\begin{definition}[\bf{Additive terminal embedding}]
\label{def:additive_terminal_embedding}
Let $\alpha\in (0,1)$, $r>0$, and $X\subset \R^d$ be a collection of $n$ points within a ball $B(0, r)$ with $0^d\in X$.
A mapping $f:\R^d\rightarrow \R^m$ is called an $\alpha$-additive terminal embedding of $X$ if for any $p\in X$ and $q\in \R^d$, $|d(p,q) - d(f(p),f(q))|\leq \alpha\cdot r$.
\end{definition}
	
\noindent
The main difference of the above definition from the classic terminal embedding~\cite{Narayanan2019OptimalTD,Cherapanamjeri2022TerminalEI} is that we consider an additive error $\alpha\cdot r$ instead of a multiplicative error $\alpha\cdot d(p,q)$. 
Our error is smaller for remote points $q\in \R^d\setminus B(0,2r)$. 
Since $p\in B(0,r)$, we know that $d(p,q)\geq r$ by the triangle inequality.
Hence, $\alpha\cdot r$ is a smaller error compared to $\alpha\cdot d(p,q)$. 
We have the following theorem showing that the target dimension of an additive terminal embedding is the same as the multiplicative version, using the same mapping as in~\cite{mahabadi2018nonlinear,Narayanan2019OptimalTD}.
The proof can be found in Appendix~\ref{sec:notation_Euclidean}.

\begin{theorem}[\bf{Additive terminal embedding}]
\label{thm:additive_terminal_main}
Let $\alpha\in (0,1)$, $r>0$, and $X\subset \R^d$ be a collection of $n$ points within a ball $B(0, r)$ with $0^d\in X$.
There exists an $\alpha$-additive terminal embedding with a target dimension $O(\alpha^{-2}\log n)$.
\end{theorem}

\noindent
Now we prove $\log |V_\alpha| \leq \tilde{O}_z(1) \cdot  k \alpha^{-2}\cdot (1+k 2^{-2\beta})$.
Fix $C\in \calC_{B,\beta}$ and we have $\frac{\cost_1(P, C)}{\cost_1(P,A^\star)} \geq \frac{|B| \cdot 2^{\beta}}{4k}$ by definition.
For each $i\in B$, let $f_i$ be an $\sqrt{|B|k^{-1}} 2^{\beta - 5z}\alpha$-additive terminal embedding of $P_i$ into $m=O(|B|^{-1} k 2^{10z-2\beta}\alpha^{-2}\log \|P\|_0)$ dimensions given by Theorem~\ref{thm:additive_terminal_main}.
By definition and some standard calculation, we obtain that
\begin{align*}
& \quad  \left|(d^z(f_i(p),f_i(C)) - d^z(f_i(a^\star_i),f_i(C))) - (d^z(p,C) - d^z(a^\star_i, C)) \right| \\
\leq & \quad  O(\alpha) \cdot \sqrt{d^z(p,C)\cdot d^z(p,A^\star)}\cdot \sqrt{\frac{\cost_z(P,C)}{\cost_z(P,A^\star)}} = \alpha\cdot \err(p, C),
\end{align*}
which implies that the covering number for a single cluster $P_i$ is at most $\exp(\tilde{O}_z(k m))$.
Hence, the covering number $|V_\alpha|$ over all $|B|$ clusters satisfy that
\[
\log |V_\alpha| \leq |B|\cdot \tilde{O}_z(k m) \leq \tilde{O}_z(1) \cdot  k \alpha^{-2}\cdot (1+k 2^{-2\beta}).
\]

We remark that the target dimension $m$ is proportional to $2^{-2\beta}$, indicating that the covering number of a cluster $P_i$ decreases as the distance $d(a_i^\star,C)$ to $C$ increases.
This new geometric observation is essential for us to reduce the covering number, which may be of independent interest together with the notion of additive terminal embedding.
%

 \section{Final Remarks}
    In this paper, we make significant progress on understanding the 
    optimal coreset sizes for Euclidean \kzC, in terms of both upper and lower bounds. In particular, for Euclidean space, when 
    $\eps \geq \Omega(k^{-\nicefrac{1}{(z+2)}})$, we prove a new coreset lower bound $\Omega(k \eps^{-z-2})$, which matches the upper bound in ~\cite{cohenaddad2022towards} (upto poly-log factors).
    For the upper bound, we provide efficient coreset construction algorithms with improved or optimal coreset sizes, matching the previous bounds for \kMedian\ and \kMeans\ \cite{cohenaddad2022improved} and extending them to all powers $z\geq 1$.
    In fact, before discovering our lower bound, we have tried several other methods to improve the covering number bound, including various dimension notions 
    (such as Natarajan dimension \cite{natarajan1989learning,guermeur2007vc})
    and covering number bounds in
    multiclass prediction (e.g., \cite{zhang2002covering}). 
    \footnote{
        The multi-class classification problem (e.g., in \cite{guermeur2007vc}) considers the setting
        where each function $f$ in the hypothesis class $\calF$ maps a point $x\in X$ to a $k$-dimensional vector
        $\{f_1(x),\ldots, f_k(x)\}$ and assign $x$ to class $y\in [k]$ if $f_y(x)> \max_{i\ne y}f_i(x)$.
        To show learnability and uniform convergence, an important component in this line of work is to bound the covering number of $\calF$ via suitably defined notion of dimensions (which generalize VC-dimension or fat-shattering dimension for binary classification).
        The setting and the goal are very similar to our problem where we also aim to bound certain 
        covering number for all $k$-center sets $C\in \calC$. Each $k$-center set corresponds to a function which maps a point $p$ to the $k$-dimensional distance vector $\{d(p,c_1),\ldots, d(p, c_k\}$
        and $p$ is assigned to the $i$-th center if $d(p,i)\leq \min_{j\ne i} d(p,j)$.
    }
    However, all techniques
    we have tried led to more or less the same upper bound (we omit the details), which led us to suspect 
    that $O(k\eps^{-2})$ may not be the optimal coreset size bound. 
    In our lower bound instance, we allow a center to influence points from various clusters, which 
    may be useful in other contexts, e.g., understanding the covering number bound and sample complexity in multiclass prediction.

Our work reveals several avenues for future investigation. Firstly, addressing the outstanding gap for $\epsilon < \Omega(k^{-\frac{1}{(z+2)}})$ in the context of \kzC\ in Euclidean space remains a significant unsolved problem. 
Secondly, it has been established that the coreset size for a set of base vectors is $\Theta(k\epsilon^{-2})$, which is notably smaller than our derived general coreset lower bound. Exploring instance-optimal coreset size limits constitutes an intriguing pursuit.
Another promising direction involves examining whether storing a coreset represents the most efficient compression method in terms of space complexity for clustering tasks. A recent finding \cite{Zhu2024SpaceCO} demonstrated that the space complexity for Euclidean \kzC\ is $\Omega(kd\epsilon^{-2})$, suggesting that coreset is indeed the optimal compression approach when $k$ is constant.
For the general case of varying $k$, it is worthwhile to explore whether our coreset lower bound construction can be extended to establish a space lower bound of $\Omega(kd \epsilon^{-z-2})$. This extension would imply that, under the condition $\epsilon < \Omega(k^{-\frac{1}{(z+2)}})$, the coreset is also the optimal compression scheme.
%

	
\bibliographystyle{plain}
\bibliography{references}
	
\newpage
\appendix

\section{The Coreset Construction Algorithm for \kzC}
	\label{sec:prior}
	
	In this section, 
	we present a unified coreset construction algorithm (Algorithm~\ref{alg:coreset}) for \kzC\ via importance sampling for all metric spaces considered in this paper.
	Our algorithm largely follows the one proposed in~\cite{cohenaddad2021new,cohenaddad2022towards}, with some minor variations.
    For preparation, we provide the following relaxed triangle inequality.

	\begin{lemma}[\bf{Relaxed triangle inequality}]
		\label{lm:relaxed}
		Let $a,b,c\in \calX$ and $z\geq 1$.
		For every $t\in (0,1]$, the following inequalities hold:
		\[
		d^z(a,b) \leq (1+t)^{z-1} d^z(a,c) + (1+\frac{1}{t})^{z-1} d^z(b,c).
		\]
	\end{lemma}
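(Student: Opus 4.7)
The plan is to reduce the claim to an inequality between nonnegative reals by first applying the standard metric triangle inequality and then exploiting the convexity of the map $s \mapsto s^z$ on $[0,\infty)$ (which holds since $z \geq 1$). Concretely, setting $x = d(a,c)$ and $y = d(b,c)$, the triangle inequality together with monotonicity of $s \mapsto s^z$ gives $d^z(a,b) \leq (x+y)^z$, so it suffices to prove the purely algebraic bound
\[
(x+y)^z \leq (1+t)^{z-1} x^z + \bigl(1+\tfrac{1}{t}\bigr)^{z-1} y^z
\]
for all $x,y \geq 0$ and $t \in (0,1]$.

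To establish this bound, I would rewrite $x+y$ as a convex combination: choose $\lambda = \tfrac{1}{1+t} \in (0,1)$, so that $1-\lambda = \tfrac{t}{1+t}$, and write
\[
x+y \;=\; \lambda \cdot \tfrac{x}{\lambda} + (1-\lambda)\cdot \tfrac{y}{1-\lambda}.
\]
Applying Jensen's inequality to the convex function $s \mapsto s^z$ then yields
\[
(x+y)^z \;\leq\; \lambda \cdot \bigl(\tfrac{x}{\lambda}\bigr)^z + (1-\lambda)\cdot \bigl(\tfrac{y}{1-\lambda}\bigr)^z \;=\; \lambda^{1-z} x^z + (1-\lambda)^{1-z} y^z.
\]
Substituting $\lambda^{1-z} = (1+t)^{z-1}$ and $(1-\lambda)^{1-z} = \bigl(\tfrac{1+t}{t}\bigr)^{z-1} = (1+\tfrac{1}{t})^{z-1}$ gives the desired inequality.

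I do not anticipate any real obstacle in this proof; it is a one-line consequence of convexity once the right weights are identified. The only point that requires a tiny bit of care is simply verifying that the algebraic identities $\lambda^{1-z} = (1+t)^{z-1}$ and $(1-\lambda)^{1-z} = (1+\tfrac{1}{t})^{z-1}$ match the stated form exactly, and noting that the restriction $t \in (0,1]$ is not actually needed for the proof itself (the inequality holds for all $t > 0$)—the restriction is likely imposed for convenience of downstream applications in the paper.
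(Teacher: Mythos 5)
Your proof is correct: the reduction via the metric triangle inequality followed by Jensen's inequality for $s \mapsto s^z$ with weights $\lambda = \tfrac{1}{1+t}$ and $1-\lambda = \tfrac{t}{1+t}$ is a clean and valid derivation, and the algebra $\lambda^{1-z} = (1+t)^{z-1}$, $(1-\lambda)^{1-z} = (1+\tfrac{1}{t})^{z-1}$ checks out. The paper itself states this lemma without proof, treating it as a standard fact from the coreset literature, so there is no paper proof to compare against; your argument is one of the usual ways to prove it. You are also right that the hypothesis $t \in (0,1]$ is not needed for the inequality and is just the parameter range the paper happens to use later.
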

\noindent
We first introduce some useful notations.

	\subsection{Partition into Rings and Groups}
	\label{sec:notation}
	
	\sloppy
	Let $A^\star = \left\{a^\star_1,\ldots,a^\star_k\in \calX\right\}\in \calC$ be a constant approximation for the \kzC\ problem.
	Let $P_i=\left\{x\in P: \arg\min_{j\in [k]}d(x,a_j) = i\right\}$ be the $i$-th cluster induced by $A^\star$ (breaking ties arbitrarily).
	For each $i\in [k]$, denote $\Delta_i:= \frac{1}{|P_i|}\cost_z(P_i, A^\star)$ to be the average cost of $P_i$ to $A^\star$.

	\paragraph{Ring structure and group structure.}
	Following \cite{cohenaddad2021new,cohenaddad2022towards},
	we first partition the points into a collection of rings, and then group
	some rings into the group structure, based on $P_i$s and $A^\star$.
	We first partition each clusters $P_i$ into rings according to the ratio $\frac{d^z(p,A^\star)}{\Delta_i}$ for $p\in P_i$.

	\begin{definition}[\bf{Ring structure~\cite{cohenaddad2021new,cohenaddad2022towards}}]
		\label{def:ring}
		\sloppy
		For each $i\in [k]$ and $j\in \calZ$, define \emph{ring} $R_{ij} := \left\{p\in P_i: 2^j\Delta_i\leq d^z(p,A^\star) < 2^{j+1}\Delta_i\right\}$ to be the set of points in $P_i$ whose $z$-th power distances to $A^\star$ is within $[2^j\Delta_i, 2^{j+1}\Delta_i)$.
		For each $\in \calZ$, define $R(j):=\bigcup_{i\in [k]} R_{ij}$ to be the union of all $j$-th level rings $R_{ij}$.
		For each $P_i$, we divide rings in the following way:
		\begin{itemize}
			\item If $j\leq z\log (\eps/z)$, we call $R_{ij}$ an \emph{inner ring}, where each point $p\in R_{ij}$ satisfies $d(p, A^\star)\leq \eps \Delta_i$.
			\item Let $R^{(o)}_i = \bigcup_{j\geq 2z\log(z\eps^{-1})} R_{ij}$ denote an \emph{outer ring}, where each point $p\in R_{ij}$ satisfies $d(p, A^\star)\geq \eps^{-2} \Delta_i$.
			Let $R^{(o)} = \bigcup_{i\in [k]} R^{(o)}_i$ denote the collection of all points in outer rings.
			\item If $z\log (\eps/z)<j<2z\log(z\eps^{-1})$, we call $R_{ij}$ a \emph{main ring}.
		\end{itemize} 
	\end{definition}

	\noindent
	%
	By merging rings $R$ at the same level from different clusters with similar $\cost_z(R,A^\star)$, we propose the following group structure.

	\begin{definition}[\bf{Group structure}]
		\label{def:group}
		For each integer $z\log (\eps/z)<j<2z\log(z\eps^{-1})$ and $b\leq 0$, we 
		denote the interval 
		$I(j,b)=[2^b\cdot \cost_z(R(j), A^\star) ,2^{b+1}\cdot \cost_z(R(j), A^\star)$
		and define a \emph{main group}
		\[
		G^{(m)}(j,b) := \bigcup_{i\in [k]: \cost_z(R_{ij},A^\star) \in I(j,b)} R_{ij}
		\]
		as the collection of main rings $R_{ij}$ with $2^b\cdot \cost_z(R(j), A^\star) \leq \cost_z(R_{ij},A^\star)< 2^{b+1}\cdot \cost_z(R(j), A^\star)$.
		%
		%
		Let $\calG^{(m)}(j):=\left\{G^{(m)}(j,b): z\log(\eps/4z) - \log k <b\leq 0\right\}$ be the collection of $j$-th level main groups.
		Let $\calG^{(m)}:=\left\{ G\in \calG^{(m)}(j): z\log (\eps/z)<j<2z\log(z\eps^{-1})\right\}$ be the collection of all main groups.

		Similarly, for each integer $b\leq 0$, 
		we define
		$I(o,b)=[2^b\cdot \cost_z(R^{(o)}(j), A^\star) ,2^{b+1}\cdot 
		\cost_z(R^{(o)}(j), A^\star)$
		and an \emph{outer group}
		\[
		G^{(o)}(b) := \bigcup_{i\in [k]:  \cost_z(R^{(o)}_i,A^\star)
			\in I(o,b)} R^{(o)}_i
		\]
		to be the collection of outer rings $R_i^{(o)}$ with $2^b\cdot \cost_z(R^{(o)}, A^\star) \leq \cost_z(R^{(o)}_i,A^\star)< 2^{b+1}\cdot \cost_z(R^{(o)}, A^\star)$.
		%
		%
		Let $\calG^{(o)}:=\left\{ G^{(o)}(b):z\log(\eps/4z) - \log k <b\leq 0\right\} $ be the collection of outer groups. 
		%
		
		Let $\calG:= \calG^{(m)} \cup \calG^{(o)}$ be the collection of all groups.
	\end{definition}
	
	\noindent
	By definition, we know that all groups in $\calG$, including main groups $G^{(m)}(j,b)$s and outer groups $G^{(o)}(b)$s, are pairwise disjoint.
	For main groups, we provide an illustration in Figure~\ref{fig:group}.
	We also have the following observation that lower bounds $\frac{d^z(p,A^\star)}{\cost_z(G,A^\star)}$ for main groups $G\in \calG^{(m)}$. 
	\begin{observation}[\bf{Main group cost~\cite{cohenaddad2022towards}}]
		\label{ob:group}
		Let $G\in \calG^{(m)}$ be a main group.
		Let $i\in [k]$ be an integer satisfying that $P_i\cap G\neq \emptyset$.
		For any $p\in P_i\cap G$, we have
		\[
		\cost_z(G,A^\star) \leq 2k\cdot \cost_z(P_i\cap G,A^\star)\leq 4k\cdot |P_i\cap G|\cdot d(p,A^\star), \text{ and } |P_i\cap G|\cdot d(p,A^\star) \leq 2\cost_z(P_i\cap G).
		\]
	\end{observation}
	
	\noindent
	Note that $\calG$ may not contain all points in $P$ -- actually, we discard some ``light'' rings in Definition~\ref{def:group}.
	We will see that we only need to take samples from groups in $\calG$ for coreset construction and the remaining points can be ``represented by'' points in 
	$A^\star$ with a small estimation error.
	Our group structure is slightly different from that in~\cite{cohenaddad2021new,cohenaddad2022towards}: they gather groups $G^{(m)}(j,b)$/$G^{(o)}(b)$ with $ -\log k\leq b \leq 0$ as an entirety and sample from them together.
	%
	%
	
	\begin{figure}
		\centering
		\includegraphics[width=0.9\textwidth]{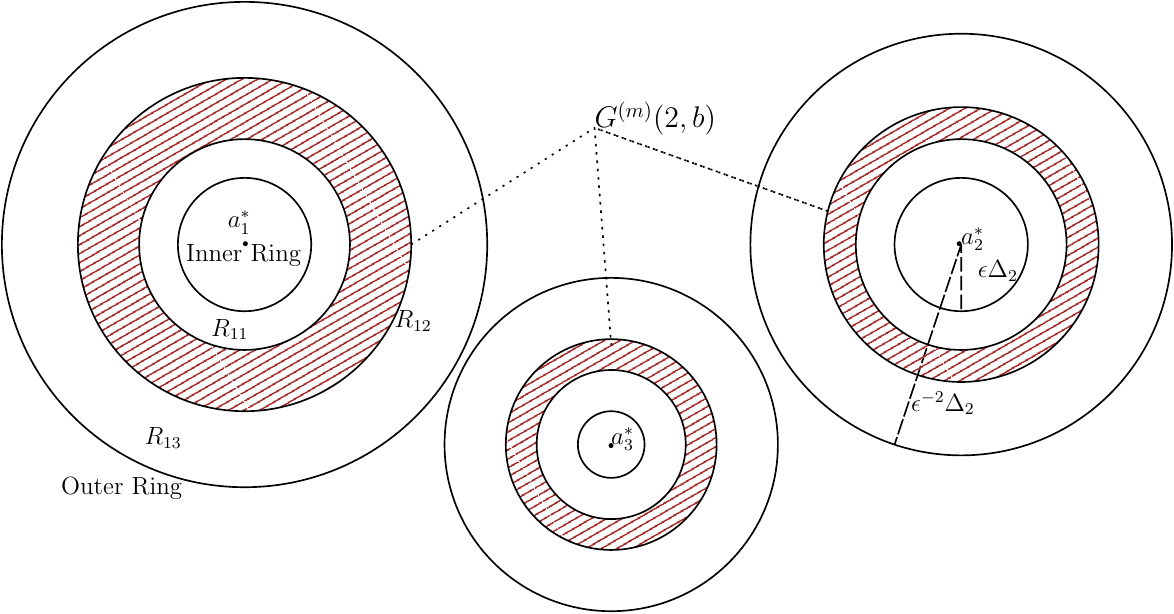}
		\caption{An example of Definition \ref{def:group}}
		\label{fig:group}
	\end{figure}

	\begin{lemma}[\bf{Group number}]
		\label{lm:group_number}
		There exist at most $O(z^2\log(k\eps^{-1})\log (z\eps^{-1}))$ groups in $\calG$.
	\end{lemma}
	
	\noindent
	We also have the following lemma for the construction time of $\calG$.

	\begin{lemma}[\bf{Construction time of $\calG$}]
		\label{lm:construction_time}
		Given a constant approximation $A^\star\in \calC$, it takes $O(nk)$ time to construct $\calG$.
	\end{lemma}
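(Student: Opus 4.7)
The plan is to observe that once each point's distance to its nearest center in $A^\star$ is known, all remaining work (rings, level sums, groups) can be bucketed in linear time using elementary arithmetic and integer floors of logarithms. So the bottleneck is the nearest-center assignment. I will proceed in three phases.

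First, compute the partition $P_1,\ldots,P_k$ induced by $A^\star$ together with $d(p,A^\star)$ for every $p\in P$. For each $p\in P$, iterate over $a^\star_1,\ldots,a^\star_k$, call the distance oracle $k$ times, take the minimum, and record both $\arg\min$ and the minimum value. This costs $O(nk)$ time in total and is the only step responsible for the $k$ factor in the final bound. While accumulating these distances, I also maintain the per-cluster sums $\cost(P_i,A^\star)$ and sizes $|P_i|$, from which $\Delta_i=\cost(P_i,A^\star)/|P_i|$ follows in $O(k)$ additional time.

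Second, assign each $p\in P_i$ to its ring $R_{ij}$ from Definition~\ref{def:ring}: compute $j=\lfloor\log_2(d(p,A^\star)/\Delta_i)\rfloor$ in $O(1)$ and index into a hash table keyed by $(i,j)$. Since each point lies in exactly one ring, the number of non-empty rings is at most $n$, so this phase takes $O(n)$ time and simultaneously yields $\cost(R_{ij},A^\star)$ (by incrementing the stored sum at the hash-table entry) and classifies each ring as inner, main, or outer using the thresholds $j\le \log\eps$ and $j\ge 2\log\eps^{-1}$. Aggregating over $i$ for each level then gives $\cost(R(j),A^\star)$ and $\cost(R^{(o)},A^\star)$ in time linear in the number of non-empty rings, i.e.\ $O(n)$.

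Third, form the groups from Definition~\ref{def:group}. For each non-empty main ring $R_{ij}$, compute the unique $b$ with $\cost(R_{ij},A^\star)\in I(j,b)$ via $b=\lfloor\log_2(\cost(R_{ij},A^\star)/\cost(R(j),A^\star))\rfloor$ in $O(1)$, then insert $R_{ij}$ into the bucket $G^{(m)}(j,b)$ keyed by $(j,b)$ in a second hash table; rings with $b\le\log(\eps/4)-\log k$ are discarded. Outer rings are handled identically using $\cost(R^{(o)},A^\star)$. This third phase again runs in time proportional to the number of non-empty rings, which is $O(n)$. Combining the three phases, the total running time is $O(nk)+O(n)=O(nk)$, as claimed. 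There is no real obstacle here; the only thing to be careful about is that $\Delta_i$ and the level sums must be available before the bucketing step, which is guaranteed by doing one pass to accumulate sums and a second pass to place points/rings into buckets.
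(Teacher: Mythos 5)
Your proof is correct and follows essentially the same three-phase approach as the paper: $O(nk)$ for nearest-center assignment and the $P_i$ partition, then $O(n)$ for rings and their costs, then $O(n)$ for bucketing rings into groups. You simply spell out the bookkeeping (hash tables, floor-of-log bucket indices) in more detail than the paper's terse proof.
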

	
	\begin{proof}
		Firstly, it takes $O(nk)$ time to compute all distances $d(p,A^\star)$ for $p\in P$ and construct clusters $P_i$s.
		Then it takes $O(n)$ time to compute all $\Delta_i$s and construct all rings and their corresponding cost $\cost_z(R_{ij}, A^\star)$ and $\cost_z(R^{(o)}_i, A^\star)$.
		Finally, it takes at most $O(n)$ time to construct all groups $G^{(m)}(j,b)$ and $G^{(o)}(b)$.
		Overall, we complete the proof.
	\end{proof}

	\subsection{A Coreset Construction Algorithm for \kzC~\cite{cohenaddad2021new,cohenaddad2022towards}}
	\label{sec:algorithm}

	Now we are ready to present our coreset construction algorithm
	for \kzC\ (see Algorithm~\ref{alg:coreset}).
	We use the same importance sampling procedure proposed in~\cite{cohenaddad2021new,cohenaddad2022towards}.
	The key is to perform an importance sampling procedure for each group $G\in \calG$ with a carefully selected number of samples $\Gamma_G$ (Line 1), where $\Gamma_G$ is defined in Theorem~\ref{thm:coreset}.
	We also weigh each center $a^\star_i\in A^\star$ by the number of remaining points in $P_i\setminus \calG$ (We slightly abuse the notation $P_i\setminus \cal G$ to denote 
	$P_i\setminus (\cup_{G\in\calG} G)$. See Line 2).

Next, we provide a new analysis for Algorithm~\ref{alg:coreset} (Theorem~\ref{thm:coreset}).

	\begin{algorithm}[h]
		\caption{Coreset Construction Algorithm for \kzC~\cite{cohenaddad2021new,cohenaddad2022towards}}
		\label{alg:coreset}
		\begin{algorithmic}[1]
			\Require a metric space $(\calX, d)$, a dataset $P\subseteq \calX$, $z\geq 1$, integer $k\geq 1$;
			A constant approximation $A^\star\in \calC$ for $P$, and the partition $P_1,\ldots, P_k$ of $P$ according to $A^\star$; 
			The collection $\calG$ of groups as in Definition~\ref{def:group} together with
			the number of samples $\Gamma_G$ for each $G\in \calG$.
			\Ensure a weighted subset $(S,w)$  
			\State For each group $G\in \calG$, sample a collection $S_G$ of $\Gamma_G$ points from $G$, where each sample $p\in G$ is selected with probability $\frac{d^z(p,A^\star)}{\cost_z(G,A^\star)}$ and weighted by $w(p)=\frac{ \cost_z(G,A^\star)}{\Gamma_G\cdot d^z(p,A^\star)}$.
			\State For each $i\in [k]$, set the weight of $a^\star_i\in A^\star$ as $w(a^\star_i) = |P_i\setminus \calG|$.
			\State \Return $S:=A^\star\cup (\bigcup_{G\in \calG} S_G)$ together with the weights $w$.
		\end{algorithmic}
	\end{algorithm}

	\subsection{$\alpha$-Covering and Covering Number of Groups}
	\label{sec:covering}
	
	We first define $\alpha$-covering of groups in $\calG$.
	The new definition of  $\alpha$-covering for main groups is crucial for improving the coreset size.

	\paragraph{Coverings of main groups.}
	Recall that $A^\star = \left\{a^\star_1,\ldots,a^\star_k\in \calX\right\}\in \calC$ is a constant approximation for the \kzC\ problem.
	We define the following \emph{huge subset} of main groups $G\in \calG^{(m)}(j)$ w.r.t. a $k$-center set $C\in \calC$:
	\[
	H(G,C):= \left\{p\in R_{ij}\cap G: i\in [k], d^z(a^\star_i,C)\geq 8z \eps^{-1} 2^{j}\Delta_i\right\},
	\]
	i.e., the collection of rings $R_{ij}\in G$ with $d^z(a^\star_i,C)\geq 8z \eps^{-1} 2^{j} \Delta_i$.
        Let $H(S,C) := H(G,C)\cap S$ for every (weighted) subset $S\subseteq G$.
	By construction, for all points $p\in H(G,C)$, the distances $d(p,C)$ are ``close'' to each other, which is an important property. We summarize the property in the following observation.
	\begin{observation}
		\label{ob:main_huge}
		For a $k$-center set $C\in \calC$, $i\in [k]$ and $p\in P_i\cap H(G,C)$, we have 
		$d^z(p,C)\in (1\pm \frac{\eps}{4})\cdot d^z(a^\star_i,C)$.
	\end{observation}

    \noindent
	By this observation, we can use a single distance $d^z(a^\star,C)$ to approximate $d^z(p,C)$ with a small error for all points $p\in H(G,C)$.
	For points in $G\setminus H(G,C)$, we define the following notions of coverings and covering numbers of main groups.

	\begin{definition}[\bf{$\alpha$-Coverings of main groups}]
		\label{def:covering_main}
		Let $G\in \calG^{(m)}$ be a main group. 
		Let $S\subseteq G$ be a subset and $\alpha > 0$. 
		%
		We say a set $V\subset \R^{|S|}$ of \emph{cost vectors} is an \emph{$\alpha$-covering} of $S$ if for each $C\in \calC$, there exists a cost vector $v\in V$ such that for any $i\in [k]$ and $p\in P_i\cap S\setminus H(G,C)$, the following holds:
		\begin{align}
		\label{eq:covering}
		\left|d^z(p,C)-d^z(a^\star_i,C)-v_p\right|\leq \alpha\cdot \err(p,C),
		\end{align}
		where $\err(p,C)$ is called the \emph{relative covering error} of $p$ to $C$ defined as follows
		\[
		\err(p,C) := \left(\sqrt{d^z(p,C)\cdot d^z(p,A^\star)}+d^z(p,A^\star)\right) \cdot \sqrt{\frac{\ZGCA{G}}{\cost_z(G,A^\star)}},
		\]
		where we use $\ZGCA{G}$ as a shorthand notation of $\cost_z(G,C) +\cost_z(G,A^\star)$ throughout.
	\end{definition}
	
	\begin{definition}[\bf{Covering numbers of main groups}]
		\label{def:coveringnumber_main}
		Define $\mN(S,\alpha)$ to be the minimum cardinality $|V|$ of any $\alpha$-covering $V$ of $S$.
		Given an integer $\Gamma \geq 1$, define the \emph{$(\Gamma,\alpha)$-covering number} of $G$ to be
		\[
		\mN_G(\Gamma,\alpha) := \max_{S\subseteq G:|S|\leq \Gamma} \mN(S,\alpha),
		\]
		i.e., the maximum cardinality $\mN(S,\alpha)$ over all possible subsets $S\subseteq G$ of size at most $\Gamma$.
	\end{definition}
	
	\noindent
	Intuitively, the covering number is a complexity measure of the set of the all possible distance difference vectors $\left\{d^z(p,C)-d^z(a^\star_i,C)\right\}_{p\in G}$.
	As $\mN_G(\Gamma,\alpha)$ becomes larger, all possible center sets in $\calC$ can induce more types of such vectors (up to the relative covering error).

    By the triangle inequality, we directly have for any $C\in \calC$, $i\in [k]$ and $p\in P_i\cap G\setminus H(G,C)$
        \begin{align}
            \label{eq:range_general}
            \begin{aligned}
            & \quad d^z(a^\star_i,C)- d^z(p, C) \\
            \leq & \quad (d^z(a^\star_i,C) - d(p,C))\cdot z(d^{z-1}(p,C) + d^{z-1}(a^\star_i,C)) &\\
            \leq & \quad 3z d(p,A^\star)\cdot (8z\eps^{-1}\cdot 2^j\Delta_i)^{z-1} & (p\notin H(G,C)) \\
            \leq & \quad (8z)^z \eps^{-z+1} d^z(p,A^\star),\quad & (p\in R_{ij}).
            \end{aligned}
        \end{align}
		Hence, for any $\alpha\geq (8z)^z \eps^{-z+1}$, $\mN_G(\Gamma,\alpha) = 1$ always holds by Definition~\ref{def:covering_main} and we can trivially use $0^{|S_G|}$ as an $\alpha$-covering of $S_G$.
		Consequently, we only need to consider the range $0< \alpha\leq (8z)^z \eps^{-z+1}$.

	\paragraph{Coverings of outer groups.} 
	Similarly, we define coverings of an outer group $G\in \calG^{(o)}$.
	The same as~\cite{cohenaddad2022towards}, we define the following \emph{far subset} of $G$:
	\[
	F(G,C):= \left\{p\in R_i^{(o)}\cap G: i\in [k], \exists q\in R_i^{(o)}\cap G \text{ with } d(q,C)\geq 4\cdot d(q,A^\star)\right\},
	\]
	i.e., the collection of rings $R_i^{(o)}\in G$ that contain at least one point $q$ with $d(q,C)\geq 4\cdot d(q,A^\star)$.
	Note that for any $p\in G\setminus F(G,C)$, we have $\frac{d(p,C)}{d(p,A^\star)}\leq 4$.
	%
	%
	Again, we formally define coverings and covering numbers of outer groups, which is very similar to those for main groups.

	\begin{definition}[\bf{Coverings and covering numbers of outer groups}]
		\label{def:covering_outer}
		Let $G\in \calG^{(o)}$ be an outer group. 
		%
		%
		Let $S\subseteq G$ be a subset and $\alpha > 0$. 
		We say a set $V\subset \R^{|S|}$ of \emph{cost vectors} is an \emph{$\alpha$-covering} of $S$ if for each $C\in \calC$, there exists a cost vector $v\in V$ such that for any $p\in S\setminus F(G,C)$, the following inequality holds:
		\[
		|d^z(p,C)-v_p|\leq \alpha\cdot (d^z(p,C)+d^z(p,A^\star)).
		\]
		Define $\oN(S,\alpha)$ to be the minimum cardinality $|V|$ of an arbitrary $\alpha$-covering $V$ of $S$.
		Given an integer $\Gamma \geq 1$, define the \emph{$(\Gamma,\alpha)$-covering number} of $G$ to be
		\[
		\oN_G(\Gamma,\alpha) := \max_{S\subseteq G:|S|\leq \Gamma} \oN(S,\alpha),
		\]
		i.e., the maximum cardinality $\oN(S,\alpha)$ over all possible subsets $S\subseteq G$ of size at most $\Gamma$.
	\end{definition}
	
	\noindent
	%
	Our definition for outer groups is almost the same as \cite[Definition 3]{cohenaddad2022towards}, except that we consider coverings for subsets $S\subseteq G$ instead of $G$, which enables us to design a one-stage sampling algorithm.

	\subsection{The Main Theorem for Algorithm~\ref{alg:coreset}}
	\label{sec:main_theorem}
	
	Now we are ready to state the main theorem for Algorithm~\ref{alg:coreset}.

	\begin{theorem}[\bf{Coreset for \kzC}]
		\label{thm:coreset}
		\sloppy
		Let $(\calX, d)$ be a metric space and $P\subseteq \calX$ be a set of 
		$n$ points.
		Let integer $k\geq 1$ and $\eps\in (0,1)$ be the precision parameter.
		We define the number $\Gamma_G$ of samples for group $G$ 
		(in Algorithm~\ref{alg:coreset}) as follows:
		\begin{enumerate}
			\item For each main group $G\in \calG^{(m)}$, let $\Gamma_G$ be the smallest integer satisfying that
			\begin{align}
			\label{eq:main_sample}
			\Gamma_G\geq O\left(\eps^{-2} \left(\int_{0}^{+\infty} \sqrt{\log \mN_G(\Gamma_G, \alpha)} d\alpha \right)^2 + k\eps^{-2}\log (k\eps^{-1})\right);
			\end{align}
			\item For each outer group $G\in \calG^{(o)}$, let $\Gamma_G$ be the smallest integer satisfying that
			\begin{align}
			\label{eq:outer_sample}
			\Gamma_G\geq O\left(\eps^{-2} \left(\int_{0}^{+\infty} \sqrt{\log \oN_G(\Gamma_G, \alpha)} d\alpha \right)^2 + k\eps^{-2}\log (k\eps^{-1})\right);
			\end{align}
		\end{enumerate}
		With probability at least 0.9, Algorithm~\ref{alg:coreset} outputs an $O(\eps)$-coreset $(S,w)$ of size 
		$
		k+ 2^{O(z)}\sum_{G\in \calG} \Gamma_G,
		$
		and the running time is $O(nk)$.
	\end{theorem}
	
	\noindent
	$\Gamma_G$ appears in the familiar form of the entropy integral
	(or Dudley integral) commonly used in the chaining argument (see e.g., ~\cite[Corollary 5.25]{Handel2014ProbabilityIH}).
	Hence, the remaining task is to upper bound the entropy integrals. 
	%
	%
	Moreover, combining with Lemma~\ref{lm:construction_time}, we know that the coreset construction time is at most $O(nk)$ time (for building the rings and groups), given 
	a constant factor approximation $A^\star$.

	\noindent
	%
	Compared to Theorem~\ref{thm:coreset}, the coreset size for general $z\geq 1$ has an additional multiplicative factor $2^{O(z)}$. Typically, we treat $z$ as a constant
	and thus this additional factor is also a constant.

	\subsection{Proof of Theorem~\ref{thm:coreset}: Performance Analysis of Algorithm~\ref{alg:coreset}}	
	\label{sec:proof_coreset}
	
		The coreset size is directly from Line 1 of Algorithm~\ref{alg:coreset}.
	Thus, it remains to verify that $S$ is an $O(\eps)$-coreset.
	%
	%
	
	The key is the following lemma that summarizes the estimation error induced by each group $G\in \calG$ and the remaining points $P\setminus \calG$.
	For a group $G\in \calG$, we define $P^{G}:= \left\{p\in P: \exists i\in [k], p\in P_i \text{ and } P_i\cap G\neq \emptyset\right\}$ to be the union of all clusters $P_i$ that intersects with $G$. Recall that $\ZGCA{G}= \cost_z(G,C)+\cost_z(G,A^\star)$.

	\begin{lemma}[\bf{Error analysis of groups and remaining points}]
		\label{lm:key}
		We have the following:
		\begin{enumerate}
			\item For each main group $G\in \calG^{(m)}$, we have
			\[
			\Exp_{S_G} \sup_{C\in \calC} \left[\frac{1}{\ZGCA{G}}\cdot 
			\Bigl|\sum_{p\in S_G} w(p)\cdot d^z(p,C) - \cost_z(G,C) \Bigr| \right] \leq 3\eps.
			\]
			\item For each outer group $G\in \calG^{(o)}$, we have
			\[
			\Exp_{S_G} \sup_{C\in \calC} 
			\Bigl[\frac{1}{\ZGCA{P^G}}\cdot \left|\sum_{p\in S_G} w(p)\cdot d^z(p,C) - \cost_z(G,C) \Bigr| \right] \leq 2\eps.
			\]
			\item For any center set $C\in \calC$, we have
			\[
			\Bigl|\cost_z(P\setminus \calG, C) - \sum_{i\in [k]} w(a^\star_i) \cdot d^z(a^\star_i, C) \Bigr|\leq \eps\cdot \ZGCA{P}.
			\]
		\end{enumerate}
		Here, the expectations above are taken over the randomness of the sample set $S_G$. 
	\end{lemma}
	
	%
	%
	%
	
	\noindent
	Note that the normalization term of a main group $G\in \calG^{(m)}$ is $\frac{1}{\ZGCA{G}}$ that only relates to $G$, but the normalization term of an outer group $G\in \calG^{(o)}$ is $\frac{1}{\ZGCA{P^G}}$ that depends on a larger subset $P^G$.
	This is because ``remote'' points in outer groups $G\in \calG^{(o)}$ introduce an even larger empirical error than $\ZGCA{G}$, but the total number of these remote points is small compared to $|P^G|$, which enables us to upper bound the empirical error by $\eps\cdot \ZGCA{P^G}$.

	The proof of Lemma~\ref{lm:key} requires the chaining argument and the improved bound of variance
	and is deferred to Appendix~\ref{sec:proof_key}.
	Theorem~\ref{thm:coreset} is a direct corollary of Lemma~\ref{lm:key}. The argument is almost the same as the proof of Theorem 4 in~\cite{cohenaddad2022towards}.
	Intuitively, we can verify that for any center set $C\in \calC$, the expected estimation error of $(S,w)$ is small, say 
	\[
	\Exp_{S} \sup_{C\in \calC} \left[\frac{1}{\ZGCA{P}}\cdot \Bigl|\sum_{p\in S} w(p)\cdot d^z(p,C) - \cost_z(P,C) \Bigr| \right]\leq O(\eps).
	\]
	Then we only need to apply the Markov inequality and use the fact that $A^\star$ is an $O(1)$-approximate solution.
	For completeness, we provide the proof in the following.
	
	\begin{proof}[of Theorem~\ref{thm:coreset}]
		We have
		\begin{align*}
		& \quad \Exp_{S} \sup_{C\in \calC} \left[\frac{1}{\ZGCA{P}}\cdot \left|\sum_{p\in S} w(p)\cdot d^z(p,C) - \cost_z(P,C) \right| \right]  \\
		\leq & \quad \Exp_{S} \sup_{C\in \calC}  \sum_{G\in \calG}  \left[\frac{1}{\ZGCA{P}}\cdot \left|\sum_{p\in S_G} w(p)\cdot d^z(p,C) - \cost_z(G,C) \right| \right]  \\
		& \quad + \sup_{C\in \calC} \left[\frac{1}{\ZGCA{P}}\cdot \left|\sum_{i\in [k]} w(a^\star_i)\cdot d^z(a^\star_i,C) - \cost_z(P\setminus G,C) \right| \right]  \\
		\leq & \quad \Exp_{S}  \sup_{C\in \calC}\sum_{G\in \calG}  \left[\frac{\ZGCA{G}}{\ZGCA{P}}\cdot \Exp_{S_G}\sup_{C\in \calC}\left|\frac{\sum_{p\in S_G} w(p)\cdot d^z(p,C) - \cost_z(G,C) }{\ZGCA{G}} \right| \right]  \\
		& \quad + \sup_{C\in \calC} \left[\frac{1}{\ZGCA{P}}\cdot \left|\sum_{i\in [k]} w(a^\star_i)\cdot d^z(a^\star_i,C) - \cost_z(P\setminus G,C) \right| \right] & \\
		\leq & \quad \Exp_{S} \sup_{C\in \calC} \sum_{G\in \calG} \left[\frac{\ZGCA{G}}{\ZGCA{P}}\cdot 3\eps \right]  + \eps \quad \quad (\text{Lemma~\ref{lm:key}})  \\
		\leq & \quad 4\eps. &
		\end{align*}
		By the Markov inequality, with probability at least 0.9, the following inequality holds
		\[
		\sup_{C\in \calC} \left[\frac{1}{\ZGCA{P}}\cdot \left|\sum_{p\in S} w(p)\cdot d^z(p,C) - \cost_z(P,C) \right| \right] \leq 40\eps.
		\]
		Then for any $C\in \calC$, we have
		\[
		\left|\sum_{p\in S} w(p)\cdot d^z(p,C) - \cost_z(P,C) \right| \leq 40\eps \cdot \ZGCA{P} \leq O(\eps)\cdot \cost_z(P,C),
		\]
		since $A^\star$ is an $O(1)$-approximate solution.
		This completes the proof.
	\end{proof}

	\subsection{Proof of Lemma~\ref{lm:key}: Error Analysis of Groups}
	\label{sec:proof_key}
	
	\paragraph{Item 1.}
	Item 1 is about main groups.
	Fix a main group $G\in \calG^{(m)}(j)$.
        Let $\Gamma'_G = 2^{O(z)}\Gamma_G\geq 2^{3z}\Gamma_G$ be the sample number of $G$.
	We reduce the left hand side to a Gaussian process and apply the chaining argument,
	as done in~\cite{cohenaddad2022towards}, but many details are very different, especially the contruction
	of the $\alpha$-covering.
	Define event $\xi_G$ to be for any $i\in [k]$,
	\begin{align}
	\label{eq:good_event}
		\sum_{p\in P_i\cap S_G} w(p) = \sum_{p\in P_i\cap S_G} \frac{ \cost_z(G,A^\star)}{\Gamma'_G\cdot d^z(p,A^\star)}\in (1\pm \eps)\cdot |P_i\cap G|,
	\end{align}
	Event $\xi_G$ implies that for each cluster $P_i$, the total sample weight of $P_i\cap S_G$ is close to the cardinality $|P_i\cap G|$.
	It is not difficult to show that $\xi_G$ happens with high probability via standard concentration inequality.
	
	\begin{lemma}[\bf{$\xi_G$ happens with high probability~\cite[Lemma 19]{cohenaddad2022towards}}]
		\label{lm:good_event}
		With probability at least $1-k\cdot \exp(-\eps^2\Gamma_G/9k)$, event $\xi_G$ happens.
	\end{lemma}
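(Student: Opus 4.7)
The plan is to reduce \Cref{lm:good_event} to a standard multiplicative Chernoff bound applied cluster-by-cluster, then union bound over the $k$ clusters. Fix $i\in [k]$ and write $\sum_{p\in P_i\cap S_G} w(p) = \sum_{s=1}^{\Gamma_G} X_s$, where $X_s$ is the contribution of the $s$-th i.i.d.\ sample, namely $X_s = \frac{\cost(G,A^\star)}{\Gamma_G \cdot d(p_s,A^\star)}$ if the sample $p_s$ lands in $P_i\cap G$, and $X_s=0$ otherwise. A direct calculation shows
\[
\Exp[X_s] \;=\; \sum_{p\in P_i\cap G} \frac{d(p,A^\star)}{\cost(G,A^\star)} \cdot \frac{\cost(G,A^\star)}{\Gamma_G\cdot d(p,A^\star)} \;=\; \frac{|P_i\cap G|}{\Gamma_G},
\]
so $\Exp[\sum_s X_s] = |P_i\cap G|$, matching the target of \eqref{eq:good_event}.

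Next, I would upper bound the range of each $X_s$. Since $G\in \calG^{(m)}$ is a main group, \Cref{ob:group} gives $\cost(G,A^\star)\leq 4k\cdot |P_i\cap G|\cdot d(p,A^\star)$ for every $p\in P_i\cap G$. Plugging this into the definition of $w(p)$ yields
\[
0 \;\leq\; X_s \;\leq\; \frac{4k\,|P_i\cap G|}{\Gamma_G} \;=\; M_i.
\]
Thus the rescaled variables $Y_s := X_s/M_i$ lie in $[0,1]$ and have mean $\mu'_i := |P_i\cap G|/(\Gamma_G M_i) = 1/(4k)$.

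Paragraph three: apply the standard multiplicative Chernoff bound to the i.i.d.\ sum $\sum_s Y_s$. This gives
\[
\Pr\!\left[\,\Bigl|\tfrac{1}{\Gamma_G}\sum_s Y_s - \mu'_i\Bigr| > \eps\,\mu'_i \right] \;\leq\; 2\exp\!\left(-\tfrac{\eps^2 \Gamma_G \mu'_i}{3}\right) \;\leq\; 2\exp\!\left(-\tfrac{\eps^2 \Gamma_G}{12 k}\right),
\]
which, after rescaling back to $X_s$, is exactly the statement that $\sum_{p\in P_i\cap S_G} w(p)\in (1\pm\eps)\cdot |P_i\cap G|$ fails for this particular $i$ with probability at most $2\exp(-\eps^2\Gamma_G/(12k))$. (One can tighten the constant from $12$ to $9$ by invoking the sharper one-sided Chernoff form or by absorbing the factor of $2$ into the exponent.) Finally, I would union bound this failure probability over the $k$ choices of $i\in[k]$ to conclude that $\xi_G$ holds with probability at least $1 - k\exp(-\eps^2 \Gamma_G/(9k))$, as claimed.

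The argument is essentially mechanical; the only nontrivial ingredient is the uniform bound $M_i$ on a single sample's weight, which is what makes the Chernoff exponent scale like $\Gamma_G/k$ rather than $\Gamma_G/k^2$. This bound leverages the defining property of a main group from \Cref{ob:group} — that $\cost(G,A^\star)$ is only a factor $O(k)$ larger than the contribution of any single cluster-intersection $P_i\cap G$ — and is the only place where the main-group structure is used.
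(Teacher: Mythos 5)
The paper does not prove this lemma itself; it imports it verbatim from~\cite[Lemma~19]{cohenaddad2022towards}. Your proof is the natural reconstruction of that argument, and it is essentially correct: you correctly write $\sum_{p\in P_i\cap S_G} w(p)$ as a sum of $\Gamma_G$ i.i.d.\ terms with expectation $|P_i\cap G|$, use Observation~\ref{ob:group} to get the uniform range bound $X_s\leq 4k|P_i\cap G|/\Gamma_G$ (which is indeed the only place the main-group structure enters), rescale to $[0,1]$, apply multiplicative Chernoff to get a per-cluster failure probability of order $\exp(-\eps^2\Gamma_G/O(k))$, and union-bound over $i\in[k]$.

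The only blemish is the constant bookkeeping, which you have already flagged: your argument yields $2k\exp(-\eps^2\Gamma_G/(12k))$ rather than $k\exp(-\eps^2\Gamma_G/(9k))$. Your proposed fixes (sharper one-sided Chernoff $\exp(-\delta^2\mu/(2+\delta))$ for the upper tail, $\exp(-\delta^2\mu/2)$ for the lower tail, and absorbing the factor $2$ into the exponent using the fact that $\eps^2\Gamma_G/k\gtrsim\log(k\eps^{-1})$ is large) do close the gap, so this is a cosmetic issue rather than a gap in the argument. Since the source lemma's exact constant depends on which form of Chernoff or Bernstein \cite{cohenaddad2022towards} invoke, matching $9$ precisely is not essential; any bound of the form $k\exp(-\Omega(\eps^2\Gamma_G/k))$ suffices for all downstream uses in the paper (in particular, for concluding $\Pr[\overline{\xi_G}]\leq\eps/4k$ in the proof of Lemma~\ref{lm:main_small_x}).
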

	
\color{black}
	
 \noindent
     We define the following \emph{tiny subset} of 
    main groups $G\in \calG^{(m)}(j)$ w.r.t. a $k$-center set $C\in \calC$:
	\[
	T(G,C):= \left\{p\in R_{ij}\cap G: i\in [k], d^z(a^\star_i,C)\leq 2^{j+1}\Delta_i\right\},
	\]
	i.e., the collection of rings $R_{ij}\in G$ with $d^z(a^\star_i,C)\leq 2^{j+1} \Delta_i$.
	Next, we define the following cost vectors for each $C\in \calC$.
	\begin{definition}[\bf{Cost vectors of main groups}]
		\label{def:split_main}
		Given a $k$-center set $C\in \calC$, we define the following cost vectors:
		\begin{itemize}
			\item Let $u^C\in \R_{\geq 0}^{|G|}$ be a cost vector satisfying that 1) for any $p\in H(G,C)$, $u^C_p := d^z(p,C)$; 2) for any $p\in G\setminus H(G,C)$, $u^C_p := 0$.
			\item Let $x^C\in \R_{\geq 0}^{|G|}$ be a cost vector satisfying that 1) for any $i\in [k]$ and $p\in P_i\cap G\setminus (H(G,C)\cup T(G,C))$, $x^C_p := d^z(a^\star_i,C) - (8z)^z \eps^{-z+1} 2^{j} \Delta_i$; 2) for any $p\in  H(G,C)\cup T(G,C)$, $x^C_p := 0$.
			\item Let $y^{C}\in \R_{\geq 0}^{|G|}$ be a cost vector satisfying that 1) for any $i\in [k]$ and $p\in P_i\cap G\setminus (H(G,C)\cup T(G,C))$, $y^{C}_p := d^z(p,C)- d^z(a^\star_i, C) + (8z)^z \eps^{-z+1} 2^{j} \Delta_i$; 2) for any $p\in T(G,C)$, $y^{C}_p := d^z(p,C)$; 3) for any $p\in H(G,C)$, $y^{C}_p := 0$.
		\end{itemize}
	\end{definition}
	
	\noindent
	The main difference from~\cite{cohenaddad2022towards} is that we create two cost vectors $x^C$ and $y^{C}$ for $G\setminus H(G,C)$ for variance reduction.
	By the definition of $T(G,C)$, we can check that $u^C,x^C,y^C\in \R_{\geq 0}^{|G|}$, and consequently, $\cost_z(G,C) = \|u^C\|_1 + \|x^C\|_1 + \|y^{C}\|_1$ always holds.
 \color{black}
	Hence, to prove Item 1, it suffices to upper bound the estimation error of $u^C$, $x^C$ and $y^{C}$ by $\eps$ respectively and we give the corresponding lemmas (Lemmas~\ref{lm:main_huge} to~\ref{lm:main_small_y}) in the following.

	For $u^C$, we have the following lemma from~\cite{cohenaddad2022towards} since $\Gamma_G\geq O(k\eps^{-2} \log k)$.

	\begin{lemma}[\bf{Estimation error of $u^C$~\cite[Lemma 15]{cohenaddad2022towards}}]
		\label{lm:main_huge}
		The following inequality holds:
		\[
		\Exp_{S_G} \sup_{C\in \calC} \left[\frac{1}{\ZGCA{G}} \left|\sum_{p\in S_G}w(p)\cdot u^C_p - \|u^C\|_1 \right|\right] \leq \eps.
		\]
	\end{lemma}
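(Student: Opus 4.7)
\textbf{Proof proposal for Lemma~\ref{lm:main_huge}.} The plan is to exploit the fact that the ``huge'' mask is defined at the ring level: whether a point $p\in R_{ij}$ belongs to $H(G,C)$ depends only on $(i,j)$ and $C$, not on $p$ itself. Consequently, $u^C$ is piecewise constant on each cluster's contribution to $G$, and the estimation error reduces to a one-per-cluster statistic. Combined with the almost-uniform bound $d(p,C)\in(1\pm\eps/4)\,d(a^\star_i,C)$ from Observation~\ref{ob:main_huge}, the problem collapses to controlling at most $k$ scalar estimates simultaneously, which is exactly what the sample size $\Gamma_G\geq\Omega(k\eps^{-2}\log(k\eps^{-1}))$ in Theorem~\ref{thm:coreset} is designed for.

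Concretely, I would first condition on the event $\xi_G$ of Lemma~\ref{lm:good_event}, which by the choice of $\Gamma_G$ fails with probability at most $k\exp(-\eps^2\Gamma_G/9k)\leq \poly(\eps,1/k)^{-1}$ smaller than any desired polynomial. Under $\xi_G$, for every $i\in[k]$,
\[
\sum_{p\in P_i\cap S_G} w(p)\in(1\pm\eps)\cdot|P_i\cap G|.
\]
Fix any $C\in\calC$ and let $I(C):=\{i\in[k]: P_i\cap G\cap H(G,C)\neq\emptyset\}$. Because the huge subset is ring-determined, $u^C_p=d(p,C)$ for every $p\in P_i\cap G$ with $i\in I(C)$ and $u^C_p=0$ otherwise, so
\[
\|u^C\|_1=\sum_{i\in I(C)}\sum_{p\in P_i\cap G} d(p,C)\in\Bigl(1\pm\tfrac{\eps}{4}\Bigr)\sum_{i\in I(C)}|P_i\cap G|\cdot d(a^\star_i,C),
\]
by Observation~\ref{ob:main_huge}, and analogously
\[
\sum_{p\in S_G} w(p)\,u^C_p=\sum_{i\in I(C)}\sum_{p\in P_i\cap S_G} w(p)\,d(p,C)\in\Bigl(1\pm\tfrac{\eps}{4}\Bigr)\sum_{i\in I(C)} d(a^\star_i,C)\sum_{p\in P_i\cap S_G} w(p).
\]
Combining these two displays with $\xi_G$ yields $\bigl|\sum_{p\in S_G} w(p)u^C_p-\|u^C\|_1\bigr|\leq O(\eps)\cdot\sum_{i\in I(C)}|P_i\cap G|d(a^\star_i,C)$, and the right-hand side is at most $O(\eps)\cdot\|u^C\|_1\leq O(\eps)\cdot\cost(G,C)\leq O(\eps)\cdot\ZGCA{G}$. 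Crucially, once $\xi_G$ holds, this bound is valid \emph{simultaneously} for every $C\in\calC$, so the supremum costs nothing.

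To finish, I would convert the high-probability bound into an expectation bound by splitting on $\xi_G$ and $\bar\xi_G$. On $\bar\xi_G$, we use the crude deterministic bound $w(p)\,u^C_p\leq w(p)(d(p,A^\star)+d(a^\star_i,C))$ together with the importance weights $w(p)=\cost(G,A^\star)/(\Gamma_G d(p,A^\star))$ to control $\sum_{p\in S_G}w(p)u^C_p/\ZGCA{G}$ by a $\poly(k,1/\eps)$ factor, which is absorbed by the doubly-exponentially small probability $k\exp(-\eps^2\Gamma_G/9k)$. The dominant term is the conditional contribution on $\xi_G$, giving the stated bound of $\eps$ (after adjusting constants). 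The main ``obstacle'' is really a bookkeeping one — verifying that $\ZGCA{G}$ (which includes $\cost(G,A^\star)$) dominates the deterministic bound on $\bar\xi_G$ — but this is routine since $\cost(G,A^\star)$ appears multiplicatively in $w(p)$ while the failure probability is subpolynomial.
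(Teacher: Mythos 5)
The paper itself does not prove this lemma; it simply cites \cite[Lemma~15]{cohenaddad2022towards}, so there is no paper proof to compare against. Your argument, however, is a valid self-contained derivation and in all likelihood mirrors the cited one. The key observations are all present and correct: (i) membership in $H(G,C)$ is ring-determined, so for a main group $G\in\calG^{(m)}(j)$ each $P_i\cap G$ is either entirely inside $H(G,C)$ or entirely outside, which makes $u^C$ constant-scale on each such set; (ii) Observation~\ref{ob:main_huge} collapses the per-cluster contributions to $(1\pm\eps/4)\,|P_i\cap G|\,d(a^\star_i,C)$ and $(1\pm\eps/4)\,d(a^\star_i,C)\sum_{p\in P_i\cap S_G}w(p)$; (iii) conditioning on $\xi_G$ controls every cluster's sampled mass uniformly, so the bound holds simultaneously for all $C$ and the supremum costs nothing; (iv) the normalization $\ZGCA{G}\geq\cost(G,C)\geq\|u^C\|_1$ closes the loop. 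The $\bar\xi_G$ branch works: using Observation~\ref{ob:group} gives $w(p)\leq 4k|P_i\cap G|/\Gamma_G$, and for $p\in H(G,C)$ one has $d(a^\star_i,C)\leq\cost(P_i\cap G,C)/\bigl((1-\eps/4)|P_i\cap G|\bigr)\leq 2\cost(G,C)/|P_i\cap G|$, so $\sup_C X(C)\leq O(k)$ deterministically; multiplied by $\Pr[\bar\xi_G]\leq k\exp(-\eps^2\Gamma_G/9k)$, which is polynomially small in $\eps/k$ for the given $\Gamma_G$, this is below $\eps$.

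Two minor imprecisions worth flagging. First, ``doubly-exponentially small'' is not the right description of $k\exp(-\eps^2\Gamma_G/9k)$: with $\Gamma_G\geq Ck\eps^{-2}\log(k\eps^{-1})$ it equals $k\cdot(k\eps^{-1})^{-C/9}$, which is just polynomially small in $\eps/k$ (albeit with a tunable exponent $C/9$). That is still sufficient to absorb the $O(k)$ factor, but the terminology overstates the decay. Second, the crude bound on $\bar\xi_G$ silently uses $d(a^\star_i,C)\lesssim\cost(G,C)/|P_i\cap G|$ for huge rings; you should note explicitly that this follows from Observation~\ref{ob:main_huge} applied to all of $P_i\cap G$, since otherwise the term $\sum_p w(p)\,d(a^\star_i,C)$ has no a priori scale relative to $\ZGCA{G}$.
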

	
	\noindent
	For $x^C$, we have the following lemma whose proof idea is from that in~\cite[Lemma 15]{cohenaddad2022towards}.

	\begin{lemma}[\bf{Estimation error of $x^C$}]
		\label{lm:main_small_x}
		The following inequality holds:
		\[
		\Exp_{S_G} \sup_{C\in \calC} \left[\frac{1}{\ZGCA{G}} \left|\sum_{p\in S_G} w(p)\cdot x^C_p - \|x^C\|_1 \right|\right] \leq \eps.
		\]
	\end{lemma}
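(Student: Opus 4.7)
}
The crucial observation that drives the proof is that $x^C$ is \emph{piecewise constant along rings}: whether $p$ sits in $H(G,C)$ or $T(G,C)$ depends only on the ring index $i$ of $p$ (through $d(a_i^\star,C)$ and $\Delta_i$), and on the complement of $H(G,C)\cup T(G,C)$ the value $x^C_p = d(a_i^\star,C) - 2^{j+1}\Delta_i$ also depends only on $i$. Thus, letting $I^C\subseteq [k]$ collect those ring indices $i$ with $R_{ij}\cap G\neq\emptyset$ and $R_{ij}\not\subseteq H(G,C)\cup T(G,C)$, we can rewrite
\[
\|x^C\|_1 \;=\; \sum_{i\in I^C} (d(a^\star_i,C)-2^{j+1}\Delta_i)\cdot |P_i\cap G|, \qquad
\sum_{p\in S_G} w(p)\,x^C_p \;=\; \sum_{i\in I^C} (d(a^\star_i,C)-2^{j+1}\Delta_i)\cdot W_i,
\]
where $W_i:=\sum_{p\in P_i\cap S_G} w(p)$ is a purely cluster-level quantity, independent of $C$.

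The plan is then to condition on the good event $\xi_G$ of Lemma~\ref{lm:good_event}, under which $W_i \in (1\pm\eps)\cdot |P_i\cap G|$ simultaneously for every $i\in[k]$. On $\xi_G$, the two displayed sums differ by at most
\[
\Bigl|\sum_{p\in S_G} w(p)\,x^C_p - \|x^C\|_1\Bigr| \;\leq\; \eps\sum_{i\in I^C} (d(a^\star_i,C)-2^{j+1}\Delta_i)\cdot |P_i\cap G| \;\leq\; \eps\sum_{i\in I^C} |P_i\cap G|\cdot d(a^\star_i,C).
\]
Since $p\in P_i$ means $d(p,A^\star)=d(p,a^\star_i)$, the triangle inequality yields $d(a^\star_i,C)\leq d(p,C)+d(p,A^\star)$ for every $p\in P_i\cap G$; averaging over $p\in P_i\cap G$ and summing over $i$ gives
\[
\sum_{i\in I^C} |P_i\cap G|\cdot d(a^\star_i,C) \;\leq\; \sum_{i\in I^C}\sum_{p\in P_i\cap G}\bigl(d(p,C)+d(p,A^\star)\bigr) \;\leq\; \cost(G,C)+\cost(G,A^\star) \;=\; \ZGCA{G}.
\]
This bound holds uniformly over all $C\in\calC$ (because the uniform-in-$C$ part only uses the weight-closeness of $W_i$), which is exactly the point of working with the ring-constant vector $x^C$ rather than with $u^C$ or $y^C$: no chaining is needed here.

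It remains to control the contribution of the complementary event $\overline{\xi_G}$. Using $x^C_p \leq 2^{j+4}\eps^{-1}\Delta_i$ for $p\in P_i\cap G$ with $i\in I^C$ (from the definition of $H(G,C)$) together with $d(p,A^\star)\geq 2^j\Delta_i$ for $p\in R_{ij}$, one obtains the deterministic worst-case bound $\sum_{p\in S_G} w(p)\,x^C_p \leq 16\eps^{-1}\cost(G,A^\star)$, while trivially $\|x^C\|_1\leq \ZGCA{G}$. Hence the supremum is always at most $O(\eps^{-1})\cdot \ZGCA{G}$. Combining with Lemma~\ref{lm:good_event}, which gives $\Pr[\overline{\xi_G}]\leq k\exp(-\eps^2\Gamma_G/9k)$, and using the assumption $\Gamma_G \geq \Omega(k\eps^{-2}\log(k\eps^{-1}))$ with a sufficiently large hidden constant, this bad-event contribution is at most $\eps\cdot \ZGCA{G}$. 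Putting together the two cases yields the desired bound of $\eps$ on the expectation (up to a constant that can be absorbed by adjusting constants in $\Gamma_G$). The only mildly delicate step is this tail argument; the in-event analysis is entirely elementary because $x^C$ was designed precisely to be ring-constant.
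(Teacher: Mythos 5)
Your proposal is correct and matches the paper's argument step for step: both exploit that $x^C$ is ring-constant, condition on the event $\xi_G$ from Lemma~\ref{lm:good_event}, obtain an in-event error of $\eps\cdot\|x^C\|_1\le\eps\cdot\ZGCA{G}$ via the triangle inequality, and then control the tail with a deterministic worst-case bound times $\Pr[\overline{\xi_G}]$. The one cosmetic difference is the tail bound itself: you use the cap $d(a^\star_i,C)<2^{j+4}\eps^{-1}\Delta_i$ coming from the definition of $H(G,C)$ together with $d(p,A^\star)\ge 2^j\Delta_i$ to get a worst case of $O(\eps^{-1})$, whereas the paper invokes Observation~\ref{ob:group} and $|P_i\cap S_G|\le\Gamma_G$ to get a worst case of $4k$; both are polynomial in $k\eps^{-1}$ and hence killed by the exponentially small failure probability of Lemma~\ref{lm:good_event} once $\Gamma_G=\Omega(k\eps^{-2}\log(k\eps^{-1}))$.
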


	\begin{proof}
		We have
		\begin{eqnarray}
		\label{ineq1:proof_main_small_x}
		\begin{aligned}
		& \Exp_{S_G} \sup_{C\in \calC} \left[\frac{1}{\ZGCA{G}} \left|\sum_{p\in S_G} w(p)\cdot x^C_p - \|x^C\|_1 \right|\right] \\
		= &  \Exp_{S_G} \sup_{C\in \calC} \left[\frac{1}{\ZGCA{G}} \left|\sum_{p\in S_G} w(p)\cdot x^C_p - \|x^C\|_1 \mid \xi_G\right|\right]\cdot \Pr\left[\xi_G\right] \\
		&  + \Exp_{S_G} \sup_{C\in \calC} \left[\frac{1}{\ZGCA{G}} \left|\sum_{p\in S_G} w(p)\cdot x^C_p - \|x^C\|_1 \mid \overline{\xi_G}\right|\right]\cdot \Pr\left[\overline{\xi_G}\right].
		\end{aligned}
		\end{eqnarray}

\color{black}
		For the first term on the right side, a trivial upper bound for $\Pr\left[\xi_G\right]$ is 1.
		For each $C\in \calC$, let $M_C\subseteq [k]$ be the collection of $i$ with $P_i\cap (H(G,C)\cup T(G,C)) = \emptyset$. 
		Assuming $\xi_G$ holds, we have that
		\begin{eqnarray}
		\label{ineq2:proof_main_small_x}
		\begin{aligned}
		\sum_{p\in S_G} w(p)\cdot x^C_p & =  \sum_{i\in M_C} \sum_{p\in P_i\cap S_G}  w(p)\cdot x^C_p & \\
		& =  \sum_{i\in M_C} (d^z(a^\star_i,C)-2^{j+1}\Delta_i)\cdot \sum_{p\in P_i\cap S_G}  w(p) & (\text{Definition~\ref{def:split_main}}) \\
		& \in  (1\pm \eps)\sum_{i\in M_C} (d^z(a^\star_i,C)-2^{j+1}\Delta_i)\cdot |P_i\cap G| & (\text{Defn. of $\xi_G$}) \\
		& \in  (1\pm \eps) \cdot \|x^C\|_1, & (\text{Definition~\ref{def:split_main}})
		\end{aligned}
		\end{eqnarray}
\color{black}
		which implies that
		\begin{eqnarray}
		\label{ineq3:proof_main_small_x}
		\begin{aligned}
		&  \quad\Exp_{S_G} \sup_{C\in \calC} \left[\frac{1}{\ZGCA{G}} \left|\sum_{p\in S_G} w(p)\cdot x^C_p - \|x^C\|_1 \right|\mid \xi_G\right]\cdot \Pr\left[\xi_G\right] & \\
		\leq & \quad \eps \cdot \Exp_{S_G} \sup_{C\in \calC} \left[\frac{1}{\ZGCA{G}} \|x^C\|_1 \right] & (\text{Ineq.~\eqref{ineq2:proof_main_small_x}}) \\
		\leq & \quad \eps. & (\text{triangle ineq.})
		\end{aligned}
		\end{eqnarray}
		On the other hand, assuming $\xi_G$ does not hold, we have $\Pr\left[\overline{\xi_G}\right] \leq k\cdot \exp(-\eps^2\Gamma_G/9k)$ by Lemma~\ref{lm:good_event}.
		Since $\Gamma_G\geq O(k\eps^{-2} \log k)$, we have $\Pr\left[\overline{\xi_G}\right] \leq \eps/4k$.
		Moreover,
		\begin{eqnarray}
		\label{ineq4:proof_main_small_x}
		\begin{aligned}
		\sum_{p\in S_G} w(p)\cdot x^C_p  = & \quad  \sum_{i\in M_C} \sum_{p\in P_i\cap S_G}  w(p)\cdot x^C_p & \\
		= & \quad  \sum_{i\in M_C} d^z(a^\star_i,C) \cdot \sum_{p\in P_i\cap S_G} \frac{\cost_z(G,A^\star)}{\Gamma_G\cdot d^z(p,A^\star)} & (\text{Defn. of $w(p)$}) \\
		\leq & \quad  \sum_{i\in M_C} d^z(a^\star_i,C) \cdot \sum_{p\in P_i\cap S_G} \frac{4k |P_i\cap G|}{\Gamma_G} & (\text{Observation~\ref{ob:group}}) \\
		\leq & \quad  4k \sum_{i\in M_C} d^z(a^\star_i,C)\cdot |P_i\cap G| & (|P_i\cap S_G|\leq \Gamma_G) \\
		= & \quad  4k \|x\|_1, &
		\end{aligned}
		\end{eqnarray}
		which implies that
		\begin{eqnarray}
		\label{ineq5:proof_main_small_x}
		\begin{aligned}
		& \quad  \Exp_{S_G} \sup_{C\in \calC} \left[\frac{1}{\ZGCA{G}} \left|\sum_{p\in S_G} w(p)\cdot x^C_p - \|x^C\|_1 \mid \overline{\xi_G}\right|\right]\cdot \Pr\left[\overline{\xi_G}\right] & \\
		\leq & \quad  4k \cdot \Exp_{S_G} \sup_{C\in \calC} \left[\frac{1}{\ZGCA{G}} \|x^C\|_1 \right]\cdot \Pr\left[\overline{\xi_G}\right] & (\text{Ineq.~\eqref{ineq4:proof_main_small_x}}) \\
		\leq & \quad  4k\cdot \Pr\left[\overline{\xi_G}\right] &\\
		\leq & \quad  4k\cdot \frac{\eps}{4k} &  \\
		= & \quad  \eps. 
		\end{aligned}
		\end{eqnarray}
		The lemma is a direct corollary of Inequalities~\eqref{ineq1:proof_main_small_x}, \eqref{ineq3:proof_main_small_x} and \eqref{ineq5:proof_main_small_x}.
	\end{proof}
	
	\noindent
	The main difficulty is to upper bound the estimation error for $y^{C}$ by $\eps$
	as in the following lemma. 

	\begin{lemma}[\bf{Estimation error of $y^{C}$}]
		\label{lm:main_small_y}
		The following inequality holds:
		\[
		\Exp_{S_G} \sup_{C\in \calC} \left[\frac{1}{\ZGCA{G}} \left|\sum_{p\in S_G} w(p)\cdot y^{C}_p - \|y^{C}\|_1 \right|\right] \leq \eps.
		\]
	\end{lemma}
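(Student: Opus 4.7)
The plan is to follow a chaining-and-symmetrization template analogous to the one used for $u^C$ in \cite{cohenaddad2022towards}, but with the new $\alpha$-covering of Definition~\ref{def:covering_main} and the sharpened relative error $\err(p,C)$. First, I condition on the good event $\xi_G$ of Lemma~\ref{lm:good_event} (which holds with probability $1-k\exp(-\eps^2\Gamma_G/(9k))$ and is harmless thanks to the additive $k\eps^{-2}\log(k\eps^{-1})$ term in \eqref{eq:main_sample}), and reduce via a standard ghost-sample/Rademacher symmetrization to bounding
\[
\Exp_{S_G,\sigma}\,\sup_{C\in\calC}\,\frac{1}{\ZGCA{G}}\Bigl|\sum_{p\in S_G}\sigma_p\, w(p)\, y^C_p\Bigr|,
\]
then replace the Rademacher variables by Gaussians using contraction, turning the problem into control of the supremum of a Gaussian process indexed by $C\in\calC$.

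Next, I apply Dudley's entropy integral to the Gaussian process in its intrinsic $L^2$ metric $\rho(C,C')^2 = \sum_{p\in S_G} w(p)^2 (y^C_p - y^{C'}_p)^2$, using the $\alpha$-coverings from Definition~\ref{def:covering_main}. The decisive step is the variance calculation (which becomes Lemma~\ref{lm:variance}): if $v\in V$ is an $\alpha$-approximant of $C$, then on $S_G\setminus H(G,C)$ we have $|y^C_p-(v_p+2^{j+1}\Delta_i)|\le \alpha\,\err(p,C)$, and plugging in the definition of $\err(p,C)$ gives
\[
\Exp\Bigl[\sum_{p\in S_G} w(p)^2\bigl(y^C_p-\tilde y_p\bigr)^2\Bigr] \;\lesssim\; \frac{\cost(G,A^\star)}{\Gamma_G}\sum_{p\in G}\frac{\alpha^2\,\err(p,C)^2}{d(p,A^\star)} \;\lesssim\; \frac{\alpha^2\,\ZGCA{G}^2}{\Gamma_G},
\]
since $\err(p,C)^2/d(p,A^\star) = O((d(p,C)+d(p,A^\star)))\cdot \ZGCA{G}/\cost(G,A^\star)$ and summing over $p\in G$ telescopes into $\ZGCA{G}$. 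This variance bound is exactly in the form that Dudley's inequality converts into the integral $\Gamma_G^{-1/2}\ZGCA{G}\int_0^\infty\sqrt{\log\mN_G(\Gamma_G,\alpha)}\,d\alpha$, which is bounded by $\eps\cdot\ZGCA{G}$ by the choice of $\Gamma_G$ in \eqref{eq:main_sample}.

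The main obstacle, and where the proof departs from a routine chaining argument, lies in handling the points for which the $\alpha$-covering bound in Definition~\ref{def:covering_main} does not apply, namely those in $T(G,C)$ (the covering only controls $p\in S_G\setminus H(G,C)$, and in Definition~\ref{def:split_main} we have $y^C_p=d(p,C)$ on $T(G,C)$). For these, I exploit that $p\in T(G,C)$ forces $d(a^\star_i,C)\le 2^{j+1}\Delta_i$, so by the triangle inequality $y^C_p=d(p,C)\le d(a^\star_i,C)+d(p,a^\star_i)=O(d(p,A^\star))$, giving a pointwise bound independent of $C$. This lets me absorb the tiny-subset contribution into a direct Bernstein-type concentration argument (using $\Gamma_G\ge k\eps^{-2}\log(k\eps^{-1})$) that only costs the additive slack already present in \eqref{eq:main_sample}, so only the "normal" coordinates need to be pushed through the chaining. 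Combining the chaining bound on the complement of $H(G,C)\cup T(G,C)$ with the direct bound on $T(G,C)$ and $y^C_p=0$ on $H(G,C)$ yields the claimed $\eps$-bound.
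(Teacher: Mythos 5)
Your high-level plan (Gaussian symmetrization, chaining with the sharpened $\err(p,C)$ variance, Dudley's integral converted to the sample-size condition~\eqref{eq:main_sample}) matches the paper's strategy for the middle-ring coordinates, but the treatment of the tiny-ring part $T(G,C)$ has a genuine gap.

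First, a misreading: Definition~\ref{def:covering_main} \emph{does} apply to $T(G,C)$ — the covering controls all $p\in S\setminus H(G,C)$, tiny rings included. The real mismatch is that the covering approximates $d(p,C)-d(a^\star_i,C)$, whereas $y^C_p=d(p,C)$ on $T(G,C)$; the two differ by the offset $d(a^\star_i,C)$, which is bounded by $2^{j+1}\Delta_i$ but \emph{not} deterministic, so it cannot simply be added back as you do for the middle part with the fixed shift $2^{j+1}\Delta_i$. Second, and more seriously, replacing chaining on $T(G,C)$ by ``a direct Bernstein-type concentration argument'' does not close the gap. Both the set $T(G,C)$ and the values $d(p,C)$ for $p$ in it vary with $C$; Bernstein gives concentration of the empirical sum for one fixed $C$, but uniformizing over $C\in\calC$ again requires a net over the continuously varying vectors $(d(p,C))_{p\in T(G,C)}$ — i.e., a covering/chaining argument, not a single deviation bound. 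The pointwise bound $d(p,C)\le 3d(p,A^\star)$ controls the magnitude of $y^C_p$ but tells you nothing about how the sample $S_G$ could correlate adversarially with the arrangement of $d(p,C)$ across the ring.

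The paper resolves exactly this by introducing the \emph{layered covering} of Definition~\ref{def:covering_layered}, which directly approximates $d(p,C)$ on $T(G,C)\cap S$ (with a $\sqrt{3}\alpha$ error) and $d(p,C)-d(a^\star_i,C)+2^{j+1}\Delta_i$ on the middle part, and Lemma~\ref{lm:covering_relation} shows that the layered covering number exceeds the ordinary one by only an extra $O(k\log\alpha^{-1})$ term, obtained by discretizing the offsets $d(a^\star_i,C)\in[0,2^{j+1}\Delta_i]$ for the tiny rings. That extra cost is exactly what the $k\eps^{-2}\log(k\eps^{-1})$ slack in~\eqref{eq:main_sample} absorbs, which is the intuition you had, but the correct mechanism is a covering augmentation, not a separate Bernstein step. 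The entire $y^C$ is then chained in one shot via the layered covering, with the variance bound of Lemma~\ref{lm:variance}. A smaller issue: your variance estimate is stated as an expectation over $S_G$, but Dudley needs a bound on the intrinsic $L^2$ metric \emph{conditionally} on $S_G$; the paper proves the conditional bound under the event $\xi_G$ (the second part of Lemma~\ref{lm:variance}) and handles $\overline{\xi_G}$ separately by a crude $O(k)$-factor bound paid against the exponentially small failure probability from Lemma~\ref{lm:good_event}.
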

	
	\noindent
	Adding the inequalities in the above lemmas, we directly conclude Item 1 since for any $C\in \calC$,
	\begin{align*}
	&  \left|\sum_{p\in S_G} w(p)\cdot d^z(p,C) - \cost_z(G,C) \right| \\
	\leq & \left|\sum_{p\in S_G}w(p)\cdot u^C_p - \|u^C\|_1 \right| + \left|\sum_{p\in S_G} w(p)\cdot x^C_p - \|x^C\|_1 \right|+ \left|\sum_{p\in S_G} w(p)\cdot y^{C}_p - \|y^{C}\|_1 \right|,
	\end{align*}
	which implies the first item of Lemma~\ref{lm:key}:
	\[
	\Exp_{S_G} \sup_{C\in \calC} \left[\frac{1}{\ZGCA{G}}\cdot \left|\sum_{p\in S_G} w(p)\cdot d^z(p,C) - \cost_z(G,C) \right| \right] \leq 3\eps.
	\]
	Hence, it remains to prove Lemma~\ref{lm:main_small_y}.
\color{black}
     For preparation, we define another notion of coverings and covering numbers of main groups based on Definition~\ref{def:split_main}. 

	\begin{definition}[\bf{$\alpha$-Layered-Coverings of main groups}]
		\label{def:covering_layered}
		Let $G\in \calG^{(m)}(j)$ be a main group. 
		Let $S\subseteq G$ be a subset and $\alpha > 0$. 
		We say a set $V\subset \R^{|S|}$ of \emph{cost vectors} is an \emph{$\alpha$-layered covering} of $S$ if for each $C\in \calC$, there exists a cost vector $v\in V$ such that 
        \begin{itemize}
        \item for any $p\in T(G,C)\cap S\setminus H(G,C)$, the following holds:
		\[
		\left|d^z(p,C)-v_p\right|\leq \sqrt{3}\alpha\cdot \err(p,C);
		\]
        \item for any $i\in [k]$ and $p\in P_i\cap S\setminus (H(G,C)\cup T(G,C))$, the following holds:
		\[
		\left|d^z(p,C)-d^z(a^\star_i,C) + (8z)^z \eps^{-z+1} 2^{j} \Delta_i - v_p\right|\leq \alpha\cdot \err(p,C);
		\]
        \end{itemize}
	where $\err(p,C)$ is the relative covering error defined in Definition~\ref{def:covering_main}.
	\end{definition}
	
	\begin{definition}[\bf{Layered covering numbers of main groups}]
		\label{def:coveringnumber_layered}
		Define $\myN(S,\alpha)$ to be the minimum cardinality $|V|$ of any $\alpha$-layered covering $V$ of $S$.
		Given an integer $\Gamma \geq 1$, define the \emph{$(\Gamma,\alpha)$-covering number} of $G$ to be
		\[
		\myN_G(\Gamma,\alpha) := \max_{S\subseteq G:|S|\leq \Gamma} \myN(S,\alpha),
		\]
		i.e., the maximum cardinality $\myN(S,\alpha)$ over all possible subsets $S\subseteq G$ of size at most $\Gamma$.
	\end{definition}
	
	\noindent
	Different from Definition~\ref{def:covering_main}, we focus on estimations for two parts of points in $S$: $T(G,C)\cap S$ and $S\setminus (H(G,C)\cup T(G,C))$.
        The layered covering number is a complexity measure of the set of the combinations of all possible distance vectors $\left\{d(p,C)\right\}_{p\in T(G,C)\cap S}$ for the first part and all possible distance difference vectors $\left\{d^z(p,C)-d^z(a^\star_i,C): i\in [k], p\in S\setminus (H(G,C)\cup T(G,C))\right\}$ for the second part.
        Also, note that for any $i\in [k]$ and $p\in P_i\cap T(G,C)\cap S\setminus H(G,C)$, we have
        \[
        d^z(p,C)\leq (d(p, a^\star_i) + d^z(a^\star_i,C))^z \leq (16z)^z \eps^{-z+1} d^z(p, A^\star).
        \]
        Since $\err(p, C)\geq \sqrt{d^z(p,C)\cdot d^z(p,A^\star)}$, we again only need to focus on the range $0\leq \alpha\leq (8z)^z \eps^{-z+1}$ for layered coverings.
	We have the following relations between two covering numbers.

 \begin{lemma}[\bf{Relations between two covering numbers}]
    \label{lm:covering_relation}
    Let $G\in \calG^{(m)}(j)$ be a main group. 
    Let $S\subseteq G$ be a subset and $\alpha > 0$. 
    We have
    \[
    \log \myN(S,\alpha) = O\left(\log \mN(S, \alpha) + zk \log (\alpha^{-1}\eps^{-1})\right).
    \]
    Moreover, for any integer $\Gamma\geq 1$, the following holds:
    \[
    \log \myN(\Gamma,\alpha) = O\left(\log \mN(\Gamma, \alpha) + zk \log (\alpha^{-1}\eps^{-1})\right).
    \]
 \end{lemma}

 \begin{proof}
    Let $V\subset \R^{|S|}$ be an $\alpha$-covering of $S$.
    We construct $V'\subset \R^{|S|}$ of $S$ as follows: for every $v\in V$, every subset $M\subseteq [k]$ and every integral vector $\lambda \in [0, z^{O(z)} \eps^{-z+1} \alpha^{-1}]^M$, we add the following vector $v^{(M,\lambda)}$ to $V'$:
    \begin{itemize}
        \item for $i\in M$ and $p\in P_i\cap S$, let $v^{(M,\lambda)}_p := v_p + \lambda_i \alpha 2^{j-1} \Delta_i$;
        \item for $i\in [k]\setminus M$ and $p\in P_i\cap S$, let $v^{(M,\lambda)}_p := v_p + 2^{j+1}\Delta_i$.
    \end{itemize}
    By construction, we know that
    \[
    \log |V'| \leq \log |V| + O(zk \log (\alpha^{-1} \eps^{-1})).
    \]
    Then it suffices to prove that $V'$ is an $\alpha$-layered covering of $S$.

    For any $k$-center set $C\subset \R^d$, let $v\in V$ be a vector satisfying Inequality~\eqref{eq:covering}.
    Let $M\subseteq [k]$ be the collection of indices $i$ with $R_{ij}\in T(G,C)$, which implies that $d^z(a^\star_i, C)\leq 2^{j+1} \Delta_i$.
    Then we can let $\lambda\in [0, z^{O(z)} \eps^{-z+1} \alpha^{-1}]^M$ be an integral vector such that for any $i\in M$, 
    \[
    |d^z(a^\star_i, C) - \lambda_i \alpha 2^{j-1} \Delta_i|\leq \alpha 2^{j-1} \Delta_i \leq 0.5\alpha\cdot d^z(p, A^\star).
    \]
    We can check the following properties for $v^{(M,\lambda)}\in V'$:
    \begin{itemize}
        \item for any $p\in T(G,C)\cap S$, we have
		\begin{align*}
		& \quad \left|d^z(p,C)-v^{(M,\lambda)}_p\right| &\\
            \leq & \quad |d(p,C) - d^z(a^\star_i,C) - v_p| + |d^z(a^\star_i,C) - (v^{(M,\lambda)}_p - v_p)| &  \\
            \leq & \quad \err(p,C) + |d^z(a^\star_i,C) - \lambda_i \alpha 2^{j-1} \Delta_i| & (\text{Defns. of $v$ and $v^{(M,\lambda)}$}) \\ 
            \leq & \quad \err(p,C) + 0.5 \alpha\cdot d^z(p, A^\star) & (\text{Defn. of $\lambda$})\\
            \leq & \quad \sqrt{3}\alpha\cdot \err(p,C); & (\err(p,C)\geq d(p, A^\star))
		\end{align*}
        \item for any $i\in [k]$ and $p\in P_i\cap S\setminus (H(G,C)\cup T(G,C))$, we have
		\[
		\left|d^z(p,C)-d^z(a^\star_i,C) + 2^{j+1}\Delta_i - v^{(M,\lambda)}_p\right| = \left|d^z(p,C)-d^z(a^\star_i,C) - v_p\right|\leq \alpha\cdot \err(p,C).
		\]
    \end{itemize}
    Thus, we conclude that $V'$ is an $\alpha$-layered covering of $S$, which completes the proof of Lemma~\ref{lm:covering_relation}.
 \end{proof}

\color{black}

\noindent
Now we are ready to prove Lemma~\ref{lm:main_small_y}.

	\begin{proof}[of Lemma~\ref{lm:main_small_y}]
		
		We first use the symmetrization trick to reduce the left-hand side to a Gaussian process, 
		and then apply a chaining argument.

		\paragraph{Reduction to a Gaussian process.}
		Let $\xi=\left\{\xi_p\sim N(0,1): p\in S_G\right\}$ be a collection of $\Gamma_G$ independent standard Gaussian random variables.
		Note that $S_G$ is drawn from the unbiased importance sampling algorithm (Line 1 of Algorithm~\ref{alg:coreset}), which implies that $\Exp_{S_G}\left[\sum_{p\in S_G} w(p)\cdot y_p^C\right] = \|y^C\|_1$ holds for any $C\in \calC$.
		By~\cite[Lemma 7.4]{Handel2014ProbabilityIH}, we have
		\begin{align*}
		& \Exp_{S_G} \sup_{C\in \calC} \left[\frac{1}{\ZGCA{G}} \left|\sum_{p\in S_G} w(p)\cdot y^{C}_p - \|y^{C}\|_1 \right|\right] \\
		\leq & \sqrt{2\pi}\cdot \Exp_{S_G, \xi} \sup_{C\in \calC} \left[\frac{1}{\ZGCA{G}} \left|\sum_{p\in S_G}\xi_p\cdot w(p)\cdot y^{C}_p \right|\right],
		\end{align*}
		where $\xi_p$s are i.i.d. standard Gaussian variables.
		Thus, it suffices to prove that for any (multi-)set $S_G\subseteq G$ of size $\Gamma_G$,
		\begin{align}
		\label{ineq:gaussian_process}
		\Exp_{\xi} \sup_{C\in \calC} \left[\frac{1}{\ZGCA{G}} \left|\sum_{p\in S_G}\xi_p\cdot w(p)\cdot y^{C}_p \right|\right]\leq \frac{\eps}{6}.
		\end{align}
		Note that we fix $S_G$ and the only randomness is from $\xi$ (hence we get a Gaussian process).

		For integer $h\geq 0$, let $V_h\subset \R^{\Gamma_G}_{\geq 0}$ be a $2^{-h}$-layered covering of $S_G$.
		For any $C\in \calC$, define $v^{C,h}\in V_h$ to be a vector satisfying that for each $p\in S_G$, 
		\[
		|y^C_p-v^{C,h}_p|\leq \sqrt{3}\cdot 2^{-h}\cdot \err(p,C) = 
		\sqrt{3}\cdot 2^{-h}\cdot \left(\sqrt{d^z(p,C)\cdot d^z(p,A^\star)}+d^z(p,A^\star)\right)\cdot \sqrt{\frac{\ZGCA{G}}{\cost_z(G,A^\star)}}.
		\]
		Now we consider the following estimator
		\[
		X_{C,h} := \frac{1}{\ZGCA{G}} \sum_{p\in H(S_G,C)} \xi_p\cdot w(p)\cdot (v^{C,h+1}_p-v^{C,h}_p),
		\]
		and let $X_C:= \sum_{h\geq 0} X_{C,h}$.
		Note that for any $p\in S_G\setminus H(S_G,C)$, we have $y^{C}_p = 0$ by definition.
		Thus, Inequality~\eqref{ineq:gaussian_process} is equivalent to the following inequality:
		\[
		\Exp_{\xi} \sup_{C\in \calC} |X_C|\leq \frac{\eps}{6},
		\]
		since $X_C$ is a separable process~\cite[Definition 5.22]{Handel2014ProbabilityIH} with $\lim_{h\rightarrow +\infty} v^{C,h}_p = y^{C}_p$.
		Since $X_C:= \sum_{h\geq 0} X_{C,h}$, it suffices to prove
		\begin{align}
		\label{ineq:chaining}
		\sum_{h\geq 0} \Exp_{\xi} \sup_{C\in \calC} |X_{C,h}|\leq \frac{\eps}{6}
		\end{align}
		In the following, we focus on proving Inequality~\eqref{ineq:chaining}.

		\paragraph{A chaining argument for Inequality~\eqref{ineq:chaining}.}
		Fix an integer $h\geq 0$ and we first show how to upper bound $\Exp_{\xi} \sup_{C\in \calC} |X_{C,h}|$ by a chaining argument.
		Note that each $X_{C,h}$ is a Gaussian variable. 
		The main idea is to upper bound the variance of each $X_{C,h}$, which leads to an upper bound for $\Exp_{\xi} \sup_{C\in \calC} |X_{C,h}|$ by the following lemma.
		
		\begin{lemma}[\bf{\cite[Lemma 2.3]{Massart2007ConcentrationIA}}]
			\label{lm:Gaussian_sup}
			Let $g_i\sim N(0,\sigma_i^2)$ for each $i\in [n]$ be Gaussian random variables (not need to be independent) and suppose $\sigma = \max_{i\in [n]} \sigma_i$.
			Then
			\[
			\Exp\left[\max_{i\in [n]} |g_i|\right]\leq 2\sigma\cdot \sqrt{2\ln n}.
			\]
		\end{lemma}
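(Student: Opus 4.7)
The plan is to use the classical moment generating function (MGF) argument for the supremum of Gaussians, which only requires the marginal distributions of the $g_i$'s and therefore does not need any independence assumption. First, I would apply Jensen's inequality: for any $\lambda > 0$, $\exp(\lambda \cdot \Exp[\max_i |g_i|]) \leq \Exp[\exp(\lambda \max_i |g_i|)]$, so it suffices to obtain a good upper bound on the latter MGF.

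Next, I would bound the MGF by a union-type step. Since $e^{\lambda \max_i |g_i|} = \max_i e^{\lambda |g_i|} \leq \sum_i e^{\lambda |g_i|}$ and $e^{\lambda |g_i|} \leq e^{\lambda g_i} + e^{-\lambda g_i}$, the standard Gaussian MGF identity $\Exp[e^{\pm \lambda g_i}] = e^{\lambda^2 \sigma_i^2 / 2} \leq e^{\lambda^2 \sigma^2 / 2}$ yields $\Exp[e^{\lambda \max_i |g_i|}] \leq 2n \cdot e^{\lambda^2 \sigma^2 / 2}$. Crucially, this inequality only invokes the marginal law of each $g_i$, so it holds even when the $g_i$'s are arbitrarily correlated.

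Taking logarithms and dividing by $\lambda$ gives $\Exp[\max_i |g_i|] \leq \lambda^{-1} \ln(2n) + \lambda \sigma^2 / 2$. Optimizing at $\lambda = \sqrt{2 \ln(2n)} / \sigma$ produces the sharp bound $\Exp[\max_i |g_i|] \leq \sigma \sqrt{2 \ln(2n)}$. For $n \geq 2$, we have $2\ln(2n) = 2\ln 2 + 2 \ln n \leq 8 \ln n$, hence $\sigma \sqrt{2 \ln(2n)} \leq 2 \sigma \sqrt{2 \ln n}$, which matches the stated bound; the case $n = 1$ is trivial since $\Exp[|g_1|] \leq \sigma_1 \leq \sigma$.

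There is no real obstacle here; this is a textbook computation in high-dimensional probability (essentially identical arguments appear in both \cite{Handel2014ProbabilityIH} and \cite{Massart2007ConcentrationIA}, which are already cited in the paper). The only conceptual point worth emphasizing is that the MGF/union step preserves the bound under arbitrary dependence, and the factor $2$ in the statement is the slack that absorbs the $\ln 2$ contribution coming from the symmetrization $|g_i| \leq e^{\lambda g_i} + e^{-\lambda g_i}$.
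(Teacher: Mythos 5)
The paper does not prove this lemma; it is simply cited as Lemma~2.3 of Massart's book, so there is no in-paper argument to compare against. Your derivation is the standard one: Jensen on the exponential, a union step over the $2n$ signed variables $\pm g_i$ (which correctly uses only the marginal laws, so no independence is needed), the Gaussian MGF, and optimization over $\lambda$, giving $\sigma\sqrt{2\ln(2n)}\leq 2\sigma\sqrt{2\ln n}$ for $n\geq 2$. This is exactly how the result is proved in the cited sources, and every step you write is correct.

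One small inaccuracy in your closing remark: the case $n=1$ is \emph{not} ``trivial'' in the sense that the stated bound holds, because the right-hand side $2\sigma\sqrt{2\ln 1}$ equals $0$ while $\Exp\left[\,|g_1|\,\right]=\sigma_1\sqrt{2/\pi}>0$ whenever $\sigma_1>0$. As literally written the lemma is false at $n=1$; the clean statement is either $\Exp[\max_i|g_i|]\leq \sigma\sqrt{2\ln(2n)}$ (valid for all $n\geq 1$) or the paper's form restricted to $n\geq 2$. This is a defect of the lemma's phrasing, not of your argument, and it is harmless in context because the lemma is applied with $n=|V_{h+1}|\cdot|V_h|$, a product of covering-set cardinalities. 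You should simply drop the sentence claiming the $n=1$ case holds, or note explicitly that $n\geq 2$ is implicitly assumed.
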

		Note that each $X_{C,h}$ is a Gaussian variable with mean 0.
		Hence, the key is upper bounding the variance of $X_{C,h}$ as in the following lemma.

		\begin{lemma}[\bf{Variance of $X_{C,h}$}]
			\label{lm:variance}
			Fix a $k$-center set $C\in \calC$ and an integer $h\geq 0$. 
			The variance of $X_{C,h}$ is always at most
			\[
			\Var\left[X_{C,h}\right]=\sum_{p\in S_G\setminus H(S_G,C)} \left(\frac{w(p)\cdot (v^{C,h+1}_p - v^{C,h}_p)}{\ZGCA{G}}\right)^2 \leq \frac{2^{-2h+4} k}{\Gamma_G}.
			\]
			Moreover, conditioned on event $\xi_G$ (Inequality~\eqref{eq:good_event}), the variance of $X_{C,h}$ is always at most
			\[
			\Var\left[X_{C,h}\mid \xi_G\right]=\sum_{p\in S_G\setminus H(S_G,C)} \left(\frac{w(p)\cdot (v^{C,h+1}_p - v^{C,h}_p)}{\ZGCA{G}}\right)^2 \leq \frac{2^{- 2h+6}}{\Gamma_G}.
			\]
		\end{lemma}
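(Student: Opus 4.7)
The plan is a direct calculation that passes through three steps: a pointwise bound on the chain increment $v^{C,h+1}_p - v^{C,h}_p$, substitution of the importance-sampling weight $w(p)$ together with the definition of the relative covering error $\err(p,C)$, and finally an aggregate bound on one residual sum. For the first step, since both $v^{C,h+1}_p$ and $v^{C,h}_p$ approximate the target coordinate $y^C_p$ (Definition~\ref{def:split_main}) to within $O(2^{-h})\err(p,C)$ by the $2^{-h}$-layered-covering property (Definition~\ref{def:covering_layered}), the triangle inequality gives $|v^{C,h+1}_p - v^{C,h}_p| \leq O(2^{-h})\err(p,C)$ for every $p \in S_G \setminus H(S_G,C)$.

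For the second step, I would plug in $w(p) = \cost(G,A^\star)/(\Gamma_G\cdot d(p,A^\star))$ and expand $\err(p,C)^2 \leq 4(d(p,C)\,d(p,A^\star) + d(p,A^\star)^2)\cdot \ZGCA{G}/\cost(G,A^\star)$ to obtain, after cancellation,
\[
\left(\frac{w(p)\,(v^{C,h+1}_p - v^{C,h}_p)}{\ZGCA{G}}\right)^2 \leq \frac{O(2^{-2h})\,\cost(G,A^\star)}{\Gamma_G^2\cdot \ZGCA{G}}\cdot\left(\frac{d(p,C)}{d(p,A^\star)} + 1\right).
\]
It then remains to upper bound $\Sigma := \sum_{p\in S_G\setminus H(S_G,C)}(d(p,C)/d(p,A^\star) + 1)$ under each of the two regimes. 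For the unconditional claim I would decompose by cluster index $i$: for $p \in P_i \cap S_G \setminus H(S_G,C)$, either $p \in T(G,C)$ (so $d(p,C)/d(p,A^\star) \leq 4$) or else $d(a^\star_i,C) < 2^{j+4}\eps^{-1}\Delta_i$, whence $d(p,C)/d(p,A^\star) \leq O(d(a^\star_i,C)/(2^j\Delta_i))$; combining these per-cluster bounds with $\sum_i |P_i \cap S_G| = \Gamma_G$ and the fact that $G$ intersects at most $k$ clusters of $A^\star$ yields $\Sigma \leq O(k\Gamma_G)$, which delivers the first claim.

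The key improvement lies in the conditional bound under $\xi_G$. Because all points in $P_i \cap G$ inhabit the same ring $R_{ij}$, their importance-sampling weights differ by at most a factor of $2$, so $\xi_G$ converts into the cardinality estimate $|P_i \cap S_G| \leq O(\cost(P_i \cap G, A^\star)\cdot \Gamma_G/\cost(G,A^\star))$. Substituting this into the per-cluster decomposition turns $\Sigma$ into $O(\Gamma_G/\cost(G,A^\star))\cdot \bigl(\cost(G,A^\star) + \sum_i |P_i \cap G|\cdot d(a^\star_i,C)\bigr)$. I would then control the last sum by a case split on whether the ring $R_{ij}$ is tiny relative to $C$: for $d(a^\star_i,C) \leq 2^{j+1}\Delta_i$ one has $|P_i \cap G|\cdot d(a^\star_i,C) \leq O(\cost(P_i\cap G,A^\star))$ directly, while for non-tiny, non-huge rings the reverse triangle inequality $d(p,C) \geq d(a^\star_i,C) - d(p,a^\star_i) \geq d(a^\star_i,C)/2$ (valid since $d(p,a^\star_i) \leq 2^{j+1}\Delta_i < d(a^\star_i,C)$) gives $|P_i \cap G|\cdot d(a^\star_i,C) \leq 2\cost(P_i \cap G, C)$. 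Summing over clusters yields $\sum_i |P_i \cap G|\cdot d(a^\star_i,C) \leq O(\ZGCA{G})$, hence $\Sigma \leq O(\Gamma_G \cdot \ZGCA{G}/\cost(G,A^\star))$; substituting back cancels the prefactor $\cost(G,A^\star)/\ZGCA{G}$ exactly and delivers the conditional bound $O(2^{-2h}/\Gamma_G)$.

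The hard part is precisely this final cancellation. The naive per-point bound $d(a^\star_i,C)/(2^j\Delta_i) \leq O(\eps^{-1})$ would introduce an extra $\eps^{-1}$ factor into $\Sigma$ and only reproduce the $\min(k,\eps^{-1})$-type variance bound of~\cite{cohenaddad2022towards}. Removing that factor requires amortizing across the cluster via the reverse-triangle estimate, and this is why the new form of $\err(p,C)$ matters: only the geometric-mean form $\sqrt{d(p,C)\,d(p,A^\star)}$ (as opposed to the additive $d(p,C)+d(p,A^\star)$ used previously) produces a squared summand with a \emph{linear} dependence on $d(p,C)/d(p,A^\star)$, which is exactly what the cluster-wise amortization can absorb into $\cost(P_i\cap G,C)$.
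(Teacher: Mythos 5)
Your overall strategy matches the paper's: bound the chain increment $|v^{C,h+1}_p - v^{C,h}_p|$ by $O(2^{-h})\err(p,C)$ via the layered-covering property, substitute $w(p)$ and expand $\err(p,C)^2$, and reduce to a cluster-wise amortization of the residual sum. The central insight—that the geometric-mean form of $\err(p,C)$ leaves a factor \emph{linear} in $d(p,C)/d(p,A^\star)$ which can be absorbed into $\cost(P_i \cap G, C)$—is exactly what the paper exploits, and your conditional derivation correctly uses $\xi_G$ together with the within-ring uniformity of $d(\cdot,A^\star)$ to reach the $O(2^{-2h}/\Gamma_G)$ bound.

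Two issues, one minor and one substantive. The minor one: you claim for non-tiny rings that $d(p,a^\star_i) \leq 2^{j+1}\Delta_i < d(a^\star_i,C)$ yields $d(p,C) \geq d(a^\star_i,C)/2$, but the strict inequality alone does not produce the factor $1/2$; you need the tiny threshold at $2^{j+2}\Delta_i$ (handled directly by $|P_i\cap G|d(a^\star_i,C)\leq O(\cost(P_i\cap G,A^\star))$) so that non-tiny gives $d(p,a^\star_i) < d(a^\star_i,C)/2$. The paper sidesteps the case split entirely by taking $q = \arg\min_{p'\in P_i\cap G} d(p',C)$ and writing $d(p,C)\leq d(q,C)+3d(p,A^\star) \leq \cost(P_i\cap G,C)/|P_i\cap G| + 3d(p,A^\star)$, which feeds directly into the amortization without distinguishing tiny from non-tiny rings.

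The substantive gap is in the unconditional bound. You assert that the per-point bound $d(p,C)/d(p,A^\star)\leq O(d(a^\star_i,C)/(2^j\Delta_i))$ combined with $\sum_i|P_i\cap S_G|=\Gamma_G$ and the fact that $G$ meets at most $k$ clusters gives $\Sigma \leq O(k\Gamma_G)$. It does not: that per-point ratio can be as large as $\Theta(\eps^{-1})$, and $k$ never enters the argument as you describe it. What actually produces the factor $k$ is Observation~\ref{ob:group} (main group cost): $\cost(G,A^\star)\leq 4k\cdot|P_i\cap G|\cdot d(p,A^\star)$, a consequence of the group construction grouping rings of comparable cost. Substituting this into $w(p)=\cost(G,A^\star)/(\Gamma_G\,d(p,A^\star))$ is what turns the cluster-wise cost $\cost(P_i\cap G, C)/|P_i\cap G|$ into $O(k)\cdot\cost(P_i\cap G, C)/\cost(G,A^\star)$, and after summing over clusters one gets $\Sigma = O(k\Gamma_G\cdot\ZGCA{G}/\cost(G,A^\star))$—note the extra factor $\ZGCA{G}/\cost(G,A^\star)$, which then cancels in the final step. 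Your claimed target $O(k\Gamma_G)$ is false in general (take $\cost(G,C)\gg\cost(G,A^\star)$), though the step you omitted happens to be exactly what the prefactor needs. Without Observation~\ref{ob:group} (or an equivalent) the unconditional derivation does not close.
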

		
		\noindent
		The proof and analysis of the above lemma can be found in Appendix~\ref{sec:proof_variance}.
		Compared with~\cite[Lemma 23]{cohenaddad2022towards}, the above lemma provides a tighter variance bound,  saving a $\min\left\{\eps^{-z},k\right\}$ factor.
		%
		%

		Now we are ready to upper bound $\Exp_{\xi} \sup_{C\in \calC} |X_{C,h}|$.
		%
		By law of total expectation, we have
		\begin{align}
		\label{ineq1:proof_main_small_y}
		\Exp_{\xi} \sup_{C\in \calC} |X_{C,h}| = \Exp_{\xi} \left[\sup_{C\in \calC} |X_{C,h}| \mid \xi_G \right]\cdot \Pr\left[\xi_G\right] + \Exp_{\xi} \left[\sup_{C\in \calC} |X_{C,h}| \mid \overline{\xi_G} \right]\cdot \Pr\left[\overline{\xi_G}\right].
		\end{align}
		For each $C\in \calC$, we define $\psi^{(C,h)} = v^{C,h+1} - v^{C,h}$ as a certificate of $C$. 
		Note that there are at most $|V_{h+1}|\cdot |V_h|$ different certificates $\psi^{(C,h)}$s and let $\Psi_h = \left\{\psi^{(C,h)}: C\in \calC\right\}$ be the collection of all possible certificates $\psi^{(C,h)}$s.
		For each certificate $\psi\in \Psi$, we select a $k$-center set $C_\psi\in \calC$ with 
		\[
		C_\psi := \arg\min_{C\in \calC: \psi^{(C,h)} = \psi} \cost_z(G,C),
		\]
		to be the representative center set of $\psi$.
		By definition, we note that for any $C\in \calC$ with $\psi^{(C,h)} = \psi$,
		\begin{align}
		\label{eq:property_Cpsi}
		\begin{aligned}
		&  \left|\frac{1}{\ZGCA{G}} \sum_{p\in S_G\setminus H(S_G,C)} \xi_p\cdot w(p)\cdot (v^{C,h+1}_p-v^{C,h}_p) \right| \\
		\leq & \,\, \left|\frac{1}{\cost_z(G,C_\psi+A^\star)} \sum_{p\in S_G\setminus H(S_G,C_\psi)} \xi_p\cdot w(p)\cdot (v^{C_\psi,h+1}_p-v^{C_\psi,h}_p) \right|.
		\end{aligned}
		\end{align}
		Let $\mathcal{C}_h = \left\{C_\psi: \psi\in \Psi\right\}$ be the collection of all such $C_\psi$s.
		By construction, we have $|\mathcal{C}_h|\leq |V_{h+1}|\cdot |V_h|$.
		Now we have
		\begin{eqnarray}
		\label{ineq2:proof_main_small_y}
		\begin{aligned}
		&  \Exp_{\xi} \left[\sup_{C\in \calC} |X_{C,h}|\mid \xi_G\right]\cdot \Pr\left[\xi_G\right] \\
		\leq &\,\,\,  \Exp_{\xi} \left[\sup_{C\in \calC} \left|\frac{1}{\ZGCA{G}} \sum_{p\in S_G\setminus H(S_G,C)} \xi_p\cdot w(p)\cdot \left(v^{C,h+1}_p-v^{C,h}_p\right) \right| \mid \xi_G\right] \\
		\\
		= &\,\,\, \Exp_{\xi} \left[\sup_{C\in \mathcal{C}_h} \left|\frac{1}{\ZGCA{G}} \sum_{p\in S_G\setminus H(S_G,C)} \xi_p\cdot w(p)\cdot \left(v^{C,h+1}_p-v^{C,h}_p\right) \right| \mid \xi_G \right] \\
		\leq & \,\,\, 2\sqrt{\sup_{C\in \mathcal{C}_h} \Var\left[X_{C,h}\mid \xi_G\right]}\cdot \sqrt{2\ln(|V_{h+1}|\cdot |V_h|)} \\
		\leq & \,\,\, \frac{2^{6-h}}{\sqrt{\Gamma_G}} \cdot \sqrt{\ln |V_{h+1}|}, \\
		\end{aligned}
		\end{eqnarray}
		where the first inequality follows from the definition of $X_{C,h}$, the first equality holds since the supreme must achieve at some center set $C\in \mathcal{C}_h$ due to Inequality~\eqref{eq:property_Cpsi}, the second inequality follows from Lemma~\ref{lm:Gaussian_sup}, and the last inequality holds due to Lemma~\ref{lm:variance}.
		Furthermore, we have
		\begin{eqnarray}
		\label{ineq3:proof_main_small_y}
		\begin{aligned}
		&  \Exp_{\xi} \left[\sup_{C\in \calC} |X_{C,h}|\mid \overline{\xi_G}\right]\cdot \Pr\left[\overline{\xi_G}\right] \\
		\leq &  \,\,\, 2\sqrt{\sup_{C\in \mathcal{C}_h} \sum_{p\in S_G\setminus H(S_G,C)} \left(\frac{w(p)\cdot (v^{C,h+1}_p - v^{C,h}_p)}{\ZGCA{G}}\right)^2}\cdot \sqrt{2\ln(|V_{h+1}|\cdot |V_h|)} \cdot \Pr\left[\overline{\xi_G}\right]  \\
		\leq & \,\,\, \frac{2^{6-h} k}{\sqrt{\Gamma_G}} \cdot \sqrt{\ln |V_{h+1}|}\cdot \Pr\left[\overline{\xi_G}\right]  \\
		\leq & \,\,\,  \frac{2^{6-h} \eps}{\sqrt{\Gamma_G}} \cdot \sqrt{\ln |V_{h+1}|}, & \\
		\end{aligned}
		\end{eqnarray}
		where the first inequality follows from Inequality~\eqref{ineq1:proof_main_small_y}, the second inequality from Lemma~\ref{lm:variance}, and the last inequality from Lemma~\ref{lm:good_event}.

\color{black}
        Finally, we observe that
        \begin{align}
            \label{ineq4:proof_main_small_y}
            \begin{aligned}
            \Gamma_G \geq & \quad \Omega\left(\eps^{-2} \left(\int_{0}^{(8z)^z \eps^{-z+1}} \sqrt{\log \mN_G(\Gamma_G, \alpha)} d\alpha \right)^2 + k\eps^{-2}\log (k\eps^{-1})\right) & (\text{Defn. of $\Gamma_G$})\\
            \geq &\quad \Omega\left(\eps^{-2} \left(\int_{0}^{(8z)^z \eps^{-z+1}} \sqrt{\log \mN_G(\Gamma_G, \alpha) + k \log \alpha^{-1}} d\alpha \right)^2\right) & \\
            \geq & \quad \Omega(\int_{0}^{(8z)^z \eps^{-z+1}} \sqrt{\log \myN_G(\Gamma_G, \alpha)} d\alpha). & (\text{Lemma~\ref{lm:covering_relation}})\\
            \end{aligned}
        \end{align}
\color{black}
  
		Using Eq.~\eqref{ineq1:proof_main_small_y} and Ineqs.~\eqref{ineq2:proof_main_small_y}-\eqref{ineq4:proof_main_small_y}, we can conclude that
		\begin{align*}
		\sum_{h\geq 0} \Exp_{\xi} \sup_{C\in \calC} |X_{C,h}| 
		\leq & \,\,\, \sum_{h\geq 0} \frac{2^{7-h}}{\sqrt{\Gamma_G}} \cdot \sqrt{\ln |V_{h+1}|} & (\text{Eq.~\eqref{ineq1:proof_main_small_y}, Ineqs.~\eqref{ineq2:proof_main_small_y} and \eqref{ineq3:proof_main_small_y}})\\
		\leq & \,\,\, \frac{\eps}{16}\cdot\frac{\sum_{h\geq 0} 2^{-h}\cdot \sqrt{\ln |V_{h+1}|}}{\int_{0}^{(8z)^z \eps^{-z+1}} \sqrt{\log \myN_G(\Gamma_G, \alpha)} d\alpha} & (\text{Ineq.~\eqref{ineq4:proof_main_small_y}}) \\
		\leq & \,\,\, \frac{\eps}{6}, & (\text{Defn. of $V_h$})
		\end{align*}
		i.e., Inequality~\eqref{ineq:chaining} holds.
		Hence, we have completed the proof of Lemma~\ref{lm:main_small_y}.
	\end{proof}

	\paragraph{Item 2.}
	For Item 2 of Lemma~\ref{lm:key}, the proof is almost the same as that of \cite[Lemma 13]{cohenaddad2022towards}.
	The only difference is that we consider the coverings on $S_G$ instead of $G$.
	%
	%
	
	\paragraph{Item 3.}
	Item 3 of Lemma~\ref{lm:key} has been proved in~\cite{cohenaddad2021new,cohenaddad2022towards}; see e.g.,~\cite[Lemma 4]{cohenaddad2021new}.

	Overall, we complete the proof of Lemma~\ref{lm:key}.

	\subsection{Proof of Lemma~\ref{lm:variance}: A Tighter Variance Bound}
	\label{sec:proof_variance}
	
	We first present the proof of Lemma~\ref{lm:variance}, and then show that our relative covering error $\err(p,C)$ leads to almost tight variance
	upper bound.

	\begin{proof}[of Lemma~\ref{lm:variance}]
		Let $M_C\subseteq [k]$ denote the collection of $i\in [k]$ with $P_i\cap S_G\setminus H(S_G,C)\neq \emptyset$.
		By the definition of $H(S_G,C)$, we note that for any $i\in M_C$, $P_i\cap  S_G\setminus H(S_G,C) = P_i\cap S_G$.
		For any $i\in [k]$ and point $p\in P_i\cap G$, let $q = \arg\min_{p'\in P_i\cap G} d(p', C)$.
		Suppose $p,q\in R_{ij}$ for some $j$.
		We note that
		\begin{eqnarray}
		\label{ineq1:proof_variance}
		\begin{aligned}
		d^z(p,C) & \leq  (d(q,C)+d(p,A^\star) +d(q,A^\star))^z & (\text{triangle ineq.}) \\
		& \leq  (d(q,C) + 3 d(p,A^\star))^z & (\text{Defn. of $R_{ij}$}) \\
		& \leq  2^z d(q,C) + 6^z d^z(p,A^\star) & (\text{Lemma~\ref{lm:relaxed}}) \\
		& \leq  \frac{2^z\cost_z(P_i\cap G, C)}{|P_i\cap G|} + 6^z d^z(p,A^\star) & (\text{Defn. of $q$}).
		\end{aligned}
		\end{eqnarray}
		Then we can bound the variance as follows:
		\begin{eqnarray}
		\label{ineq2:proof_variance}
		\begin{aligned}
		&  \sum_{p\in S_G\setminus H(S_G,C)} \left(\frac{w(p)\cdot (v^{C,h+1}_p - v^{C,h}_p)}{\ZGCA{G}}\right)^2  \\
		= &  \,\,\, \sum_{p\in S_G\setminus H(S_G,C)} \left(\frac{w(p)\cdot (v^{C,h+1}_p - y^{C}_p + y^{C}_p - v^{C,h}_p)}{\ZGCA{G}}\right)^2  &\\
		\leq & \,\,\,  \sum_{p\in S_G\setminus H(S_G,C)} \frac{\left(w(p)\cdot (\sqrt{3}+1)\cdot 2^{-h} \cdot  (\sqrt{d^z(p,C)\cdot d^z(p,A^\star)}+d^z(p,A^\star))\right)^2}{(\ZGCA{G})\cdot \cost_z(G,A^\star)}   \\
		\leq & \,\,\, \frac{2^{-2h+3} }{\Gamma'_G}\cdot \sum_{p\in S_G\setminus H(S_G,C)} \frac{w(p)\cdot (d^z(p,C)+d^z(p,A^\star))}{\ZGCA{G}}
		& \\
		\leq & \,\,\, \frac{2^{-2h+3} }{\Gamma'_G}\cdot 
		\sum_{i\in M_C}\sum_{p\in P_i\cap S_G} 
		\frac{w(p)}{{\ZGCA{G}}}\cdot 
		\left(\frac{2^z \cost_z(P_i\cap G, C)}{|P_i\cap G|} 
		+6^z d(p,A^\star)\right)
		& \\
		\leq & \,\,\, \frac{2^{-2h+5} }{\Gamma_G}\cdot 
		\sum_{i\in M_C}\frac{1}{\ZGCA{G}} \cdot 
		\left(\frac{\cost_z(P_i\cap G, C) + \cost_z(P_i\cap G, A^\star) }{|P_i\cap G|}\right)\cdot
		\sum_{p\in P_i\cap S_G} w(p) & 
		\end{aligned}
		\end{eqnarray}
		where the first inequality holds due to the Defn. of $v^{C,h+1}_p$ and $v^{C,h}_p$,
		the second follows from the Definition of $w(p)$,
		the third from Inequality~\eqref{ineq1:proof_variance} and the definition of $M_C$,
		and the last from Observation~\ref{ob:group}.
		Recall that event $\xi_G$ is defined as 
		\begin{align}
		\label{eq:xi_G}
		\sum_{p\in P_i\cap S_G} w(p) = \sum_{p\in P_i\cap S_G} \frac{ \cost_z(G,A^\star)}{\Gamma'_G\cdot d^z(p,A^\star)}\in (1\pm \eps)\cdot |P_i\cap G|.
		\end{align}
		
		\noindent 
		Hence, conditioning on $\xi_G$ and continue the derivation, 
		we can see that 
		\begin{align*}
		&  \sum_{p\in S_G\setminus H(S_G,C)} \left(\frac{w(p)\cdot (v^{C,h+1}_p - v^{C,h}_p)}{\ZGCA{G}}\right)^2  \\
		\leq & \,\,\,  \frac{2^{-2h+5} }{\Gamma_G}\cdot \sum_{i\in M_C}
		\frac{(1+\eps) |P_i\cap G|}{\ZGCA{G}} \left(\frac{\cost_z(P_i\cap G, C) + \cost_z(P_i\cap G, A^\star) }{|P_i\cap G|}\right) & (\text{Eq.~\eqref{eq:xi_G}})\\
		\leq & \,\,\, \frac{2^{-2h+6} }{\Gamma_G}\cdot \sum_{i\in M_C}\frac{(\cost_z(P_i\cap G,C) + \cost_z(P_i\cap G,A^\star))}{\cost_z(G, C)+ \cost_z(G,A^\star)} & (\eps\in (0,1))\\
		\leq & \,\,\, \frac{2^{- 2h+6}}{\Gamma_G} \cdot \frac{ \cost_z(G, C)+ \cost_z(G,A^\star)}{\ZGCA{G}}  &  \\
		\leq & \,\,\, \frac{2^{- 2h+6}}{\Gamma_G}. &
		\end{align*}

		\noindent	    
		In the general case without conditioning on $\xi_G$, 
		we also have
		\begin{align*}
		&  \sum_{p\in S_G\setminus H(S_G,C)} \left(\frac{w(p)\cdot (v^{C,h+1}_p - v^{C,h}_p)}{\ZGCA{G}}\right)^2 \\ 
		\leq & \,\,\, \frac{2^{-2h+2} }{\Gamma_G}\cdot 
		\sum_{i\in M_C}\sum_{p\in P_i\cap S_G} 
		\frac{w(p)}{\ZGCA{G}} \cdot \left(\frac{\cost_z(P_i\cap G, C) }{|P_i\cap G|}+4 d^z(p,A^\star)\right) \\
		= & \,\,\, \frac{2^{-2h+2} }{\Gamma_G}\cdot \sum_{i\in M_C}\sum_{p\in P_i\cap S_G} \frac{\cost_z(G,A^\star)} {\Gamma_G\cdot d^z(p,A^\star) \cdot \ZGCA{G}} \cdot \left(\frac{\cost_z(P_i\cap G, C) }{|P_i\cap G|}+4 d^z(p,A^\star)\right) \\
		\leq & \,\,\, \frac{2^{-2h+2} }{\Gamma_G}\cdot \sum_{i\in M_C}\sum_{p\in P_i\cap S_G} \frac{4k\cdot \cost_z(P_i\cap G, C) + 4\cost_z(G,A^\star) }{\Gamma_G\cdot \ZGCA{G}} \\
		\leq & \,\,\, \frac{2^{-2h+2} }{\Gamma_G}\cdot \sum_{i\in M_C}
		\frac{ 4k\cdot\cost_z(P_i\cap G,C) + 4\cost_z(G,A^\star)}{\ZGCA{G}} \\
		\leq & \,\,\, \frac{2^{-2h+4} k}{\Gamma_G}, &
		\end{align*}
		where the first inequality follows from Ineq.~\eqref{ineq2:proof_variance}, the first equality from the Definition of $w(p)$, the second inequality from Observation~\ref{ob:group}, and the third inequality is due to the fact that $|P_i\cap S_G|\leq \Gamma_G$.
		This completes the proof.
	\end{proof}
	
	\begin{remark}
		\label{remark:variance}
		Lemma~\ref{lm:variance} provides an upper bound 
		of $O(\frac{2^{- 2h}}{\Gamma_G})$ for  $\Var\left[X_{C,h}\mid \xi_G\right]$.
		One can verify that this bound is almost tight for a large collection of $C\in \calC$.
		Suppose $H(G,C) = \emptyset$.
		For any $i\in [k]$ and point $p\in P_i\cap G$, suppose $d(p,C)\geq 4 d(p,A^\star)$, we have $d(p,C) = \Omega\left(\frac{\cost_z(P_i\cap G, C)}{|P_i\cap G|}\right)$ by a reverse argument of Inequality~\eqref{ineq1:proof_variance}.
		By reversing the argument of the proof of Lemma~\ref{lm:variance}, we can verify that conditioning on $\xi_G$,
		\[
		\sum_{p\in S_G\setminus H(S_G,C)} \left(\frac{w(p)\cdot (v^{C,h+1}_p - v^{C,h}_p)}{\ZGCA{G}}\right)^2
		= \Omega\left(\frac{2^{- 2h}}{\Gamma_G}\right).
		\]
		The improvement here is mainly due to the new relative covering error $\err(p,C)$, which is smaller than that of~\cite{cohenaddad2021new,cohenaddad2022towards}.
		Choosing a smaller error may increase the covering number, however,
		the covering number may not increase proportionally. 
		We choose the best tradeoff between variance and covering number.
		%
	\end{remark}

	\section{Improved Coreset Size for Euclidean \kzC\ }
	\label{sec:Euclidean}
	
	In this section, we consider the coreset construction for Euclidean \kzC.
	Now, we state our 
	main theorem for Euclidean coreset as follows.

	\begin{theorem}[\bf{Coreset for Euclidean \kzC}]
		\label{thm:Euclidean}
		Let $\calX=\R^d$, $P\subset \R^d$, $\eps\in (0,1)$ and constant $z\geq 1$.
		\sloppy
		For each $G\in \calG$, let 
		$
		\Gamma_G = O\left(k^{\frac{2z+2}{z+2}}\eps^{-2} \log (k\eps^{-1})\log^4\eps^{-1}\right).
		$
		With probability at least 0.9, Algorithm~\ref{alg:coreset} outputs an $\eps$-coreset of $P$ for Euclidean \kzC\ of size 
		$$
		2^{O(z)} k^{\frac{2z+2}{z+2}}\eps^{-2} \log^2(k\eps^{-1}) \log^5\eps^{-1}
		=
		\tilde{O}_z \left(k^{\frac{2z+2}{z+2}}\eps^{-2}\right).
		$$
	\end{theorem}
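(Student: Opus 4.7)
The plan is to reduce Theorem~\ref{thm:Euclidean} to the general framework of Theorem~\ref{thm:coreset_general} by establishing, for every main group $G\in\calG^{(m)}$ and every $\alpha\in(0,1]$, a covering bound of the form
\[
\log \mN_G(\Gamma,\alpha) \;\leq\; \tilde{O}_z\!\left(k^{\frac{2z+2}{z+2}}\,\alpha^{-2}\right),
\]
together with the analogous bound on the outer-group covering number $\oN_G(\Gamma,\alpha)$. Given such a bound, the Dudley entropy integral in~\eqref{eq:main_sample} evaluates to $\tilde{O}_z(k^{(z+1)/(z+2)})$, so~\eqref{eq:main_sample}--\eqref{eq:outer_sample} yield $\Gamma_G=\tilde{O}_z(k^{(2z+2)/(z+2)}\eps^{-2})$; multiplying by $|\calG|=\tilde{O}(\log(k\eps^{-1})\log\eps^{-1})$ (Lemma~\ref{lm:group_number}) and the $2^{O(z)}$ overhead from Theorem~\ref{thm:coreset_general} delivers the claimed coreset size.

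The core task is therefore the covering-number bound. I would fix a scale $h\geq 0$, set $\alpha=2^{-h}$, and hierarchically decompose the rings $R_{ij}\subseteq G$ with respect to a putative center set $C\in\calC$ by the dyadic bucket of the \emph{stretch} $\ell:=d(a^\star_i,C)/(2^j\Delta_i)\in\{1,2,\ldots,\eps^{-1}\}$. Within a single bucket of parameter $\ell$, letting $|B|$ denote the number of participating rings, I would combine two competing embeddings. The first is a standard $\tilde{O}(2^{-h}/\sqrt{\ell})$-terminal embedding, as in~\cite{cohenaddad2022towards}, into target dimension $m_1=\tilde{O}_z(2^{2h}\ell)$, exploiting $\err(p,C)\geq d(p,C)/\sqrt{\ell}$ inside the bucket. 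The second is a \emph{per-ring} additive terminal embedding (Definition~\ref{def:additive_terminal_embedding}, Theorem~\ref{thm:additive_terminal}), applied to each ring with additive error proportional to $\sqrt{|B|/k}\cdot\ell\cdot 2^{-h}\cdot r_{ij}$ (where $r_{ij}$ is the ring radius) and target dimension $\tilde{O}(|B|^{-1}k\ell^{-2}2^{2h})$ per ring; a direct product over the $|B|$ rings yields combined dimension $m_2=\tilde{O}(k\ell^{-2}2^{2h})$, and the total error is absorbed into $2^{-h}\cdot\err(p,C)$ because $d(p,C)\approx\ell\cdot d(p,A^\star)$ inside the bucket. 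Taking $m=\min(m_1,m_2)$ and balancing over $\ell$ gives $m=\tilde{O}(k^{z/(z+2)}2^{2h})$, so discretizing $k$-center sets in an $m$-dimensional ambient space produces a covering of size $\exp(\tilde{O}(km))=\exp(\tilde{O}(k^{(2z+2)/(z+2)}2^{2h}))$.

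The main obstacle will be the per-ring additive-embedding step: the embedding maps each ring $R_{ij}$ into its own low-dimensional space, yet any single center $c\in C$ must be evaluated \emph{consistently} across all rings of the bucket. I would use that the additive terminal embedding of Theorem~\ref{thm:additive_terminal} preserves distances from the ring to all of $\R^d$ (not merely intra-ring distances), concatenate the per-ring images along independent coordinate blocks, and verify via the triangle inequality that the total squared error decomposes as a sum of per-ring contributions; the scaling factor $\sqrt{|B|/k}$ is calibrated precisely so that this sum matches the targeted relative covering error $\err(p,C)$. Once this additivity is in place, summing over the $O(\log\eps^{-1})$ buckets of $\ell$ and the $O(\log\eps^{-1})$ chaining scales $h$, together with a simpler variant of the same argument for outer groups (following~\cite{cohenaddad2022towards}), yields the desired covering-number estimate, and a final invocation of Theorem~\ref{thm:coreset_general} with the derived $\Gamma_G$ completes the proof.
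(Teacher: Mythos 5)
Your proposal is correct in overall structure and follows essentially the same route as the paper: reduce to Theorem~\ref{thm:coreset_general}, decompose each main group by the dyadic stretch bucket $\ell$ (the paper's $\beta$ with $\ell=2^\beta$, via Definition~\ref{def:main_partition}), bound the covering number for each bucket by taking the minimum of a global terminal-embedding construction and a per-ring additive terminal-embedding construction combined by Cartesian product, and balance at $\ell=k^{1/(z+2)}$ before summing over buckets and chaining scales.

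There is, however, an internal inconsistency in your first covering bound that you should fix. You write $m_1 = \tilde{O}_z(2^{2h}\ell)$, citing $\err(p,C)\geq d(p,C)/\sqrt{\ell}$ — but that inequality is the $z=1$ specialization from the paper's overview. For general $z$, the relative covering error of Definition~\ref{def:covering_main_general} satisfies only $\err(p,C)\geq d^z(p,C)/\ell^{z/2}$ inside a bucket of stretch $\ell$, so the terminal embedding must have accuracy $\Theta_z(2^{-h}/\ell^{z/2})$ (as in the paper's $2^{-5z-\beta z/2}\alpha$) and its target dimension is $m_1 = \tilde{O}_z(2^{2h}\ell^{z})$, with $\ell^z$ rather than $\ell$. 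With the formulas you actually wrote, $m_1 = \ell\,2^{2h}$ and $m_2 = k\ell^{-2}2^{2h}$ balance at $\ell = k^{1/3}$ for every $z$, which would give $m = k^{1/3}2^{2h}$, contradicting the $m = \tilde{O}(k^{z/(z+2)}2^{2h})$ you then claim. That conclusion only falls out once the $\ell^z$ is restored, so this looks like a transcription slip from the $z=1$ overview rather than a conceptual gap — your final exponent $(2z+2)/(z+2)$ and the entropy-integral evaluation both match the paper. A smaller stylistic point: $\ell$ is not a free parameter you optimize over; it is determined by the center set $C$ ring by ring. One takes $\min(m_1,m_2)$ for each bucket and observes that the maximum over buckets occurs at the crossing point $\ell = k^{1/(z+2)}$; when you then sum over the $O(\log\eps^{-1})$ buckets the extra factor only contributes a $\log$ to $\tilde{O}_z$. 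Finally, you should also note that the per-ring additive embedding step actually requires verifying the additive-vs-multiplicative argument of Theorem~\ref{thm:additive_terminal} applies when the query point $q=c$ lies far outside the ring ball $B(a^\star_i,r)$, which is precisely where the additive version has a genuine advantage; the paper checks $\|q\|_2>2r$ holds because $\beta\geq 1$ forces $d(a^\star_i,C)\geq 4r$.
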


	\subsection{Proof of Theorem~\ref{thm:Euclidean}: Coreset for Euclidean \kzC}
	\label{sec:Euclidean_kmedian}
Following the same reasoning as in~\cite{cohenaddad2022improved}, we can make the following assumption without loss of generality.

\begin{assumption}[\bf{Assumptions on Euclidean datasets}]
    \label{assumption:Euclidean_dataset}
Let $\eps\in (0,1)$. 
We can assume that the given dataset $P$ satisfies
\begin{itemize}
	\item The number of distinct points $\|P\|_0$ is at most $2^{O(z)}\cdot\poly(k\eps^{-1})$;
	\item The dimension $d=O(z^2\eps^{-2} \log \|P\|_0)$;
	\item $P$ is unweighted.
        \end{itemize}
\end{assumption}

\noindent
In light of Theorem~\ref{thm:coreset}, it remains to upper bound the covering number $\mN_G(\Gamma_G, \alpha)$ and $\oN_G(\Gamma_G, \alpha)$, as in the following lemma.

\begin{lemma}[\bf{Covering number in Euclidean metrics}]
		\label{lm:Euclidean_covering_number}
		For each $G\in \calG^{(m)}$, $0<\alpha \leq 1$, and integer $0\leq \beta < \log (z\eps^{-1})$,
		\[
		\log \mN_G(\Gamma_G, \alpha) = 2^{O(z)} k\cdot\min\left\{k^{\frac{z}{z+2}}\alpha^{-2}\log \|P\|_0 \log \eps^{-1}, d\right\}\cdot \log (\eps^{-1}\alpha^{-1}).
		\]
		For each $G\in \calG^{(o)}$ and any $0<\alpha \leq 1$,
		\[
		\log \oN_G(\Gamma_G, \alpha) = O(z^2 k\cdot\min\left\{\alpha^{-2}\log \|P\|_0, d\right\}\cdot \log \alpha^{-1}).
		\]
	\end{lemma}
	
	\noindent
	The proof can be found in Appendix~\ref{sec:proof_Euclidean_covering_number}.
	Theorem~\ref{thm:Euclidean} is a direct corollary of the above lemma.

	\begin{proof}[of Theorem~\ref{thm:Euclidean}]
		Fix a main group $G\in \calG^{(m)}(j)$ for some integer $z\log (\eps/z)<j<2z\log(z\eps^{-1})$.
        Recall that we only need to consider $0< \alpha\leq (8z)^z \eps^{-z+1}$.
		To apply Theorem~\ref{thm:coreset}, the key is to upper bound the entropy integral $\int_{0}^{(8z)^z \eps^{-z+1}} \sqrt{\log \mN_G(\Gamma_G, \alpha)} d\alpha$.
		We first have
		\begin{eqnarray}
		\label{ineq1:Euclidean_covering_number}
		\int_{0}^{(8z)^z \eps^{-z+1}} \sqrt{\log \mN_G(\Gamma_G, \alpha)} d\alpha = \int_{0}^{\eps} \sqrt{\log \mN_G(\Gamma_G, \alpha)} d\alpha + \int_{\eps}^{(8z)^z \eps^{-z+1}} \sqrt{\log \mN_G(\Gamma_G, \alpha)} d\alpha.
		\end{eqnarray}		
		Next, we apply Lemma~\ref{lm:Euclidean_covering_number} to upper bound the above two terms on the right side separately.

		For the first term, we have
		\begin{eqnarray}
		\label{ineq2:Euclidean_covering_number}
		\begin{aligned}
		\int_{0}^{\eps} \sqrt{\log \mN_G(\Gamma_G, \alpha)} d\alpha  
		=\,\,\, &  O(1)\cdot \int_{0}^{\eps} \sqrt{kd\log(\eps^{-1}\alpha^{-1})} d\alpha & (\text{Lemma~\ref{lm:Euclidean_covering_number}}) \\
		=\,\,\, &  O(\sqrt{k\eps^{-2} \log(k\eps^{-1})})\cdot \int_{0}^{\eps} \sqrt{\log(\eps^{-1}\alpha^{-1})} d\alpha & (\text{Asm.~\ref{assumption:Euclidean_dataset}})\\
		=\,\,\, &  O(\sqrt{k\eps^{-2} \log(k\eps^{-1})})\cdot O(\eps\sqrt{\log (\eps^{-1}})) & \\
		=\,\,\, &  O\left(\sqrt{k\log(k\eps^{-1}) \log (\eps^{-1})}\right). &
		\end{aligned}
		\end{eqnarray}
		For the second term, we have the following upper bound
		\begin{eqnarray}
		\label{ineq3:Euclidean_covering_number}
		\begin{aligned}
		&  \int_{\eps}^{(8z)^z \eps^{-z+1}} \sqrt{\log \mN_G(\Gamma_G, \alpha)} d\alpha & \\
		=\,\,\, &  O\left(\int_{\eps}^{(8z)^z \eps^{-z+1}} \sqrt{ k^{\frac{2z+2}{z+2}}\alpha^{-2}\log \|P\|_0\log \eps^{-1} \log(\eps^{-1}\alpha^{-1})} d\alpha\right) & (\text{Lemma~\ref{lm:Euclidean_covering_number}}) \\
		=\,\,\, &  O(\sqrt{k^{\frac{2z+2}{z+2}}\log (k\eps^{-1})\log\eps^{-1}})\cdot \int_{\eps}^{(8z)^z \eps^{-z+1}} \sqrt{\alpha^{-2} \log(\eps^{-1}\alpha^{-1})} d\alpha & (\text{letting $2^\beta = k^{\frac{1}{z+2}}$}) \\
		=\,\,\, &  O(\sqrt{k^{\frac{2z+2}{z+2}}\log (k\eps^{-1})\log \eps^{-1}})\cdot \int_{\eps}^{(8z)^z \eps^{-z+1}} \sqrt{\alpha^{-2} \log(\eps^{-2})} d\alpha & \\ 
		=\,\,\, &  O(\sqrt{k^{\frac{2z+2}{z+2}}\log (k\eps^{-1})\log\eps^{-1}})\cdot O(\sqrt{\log^3(\eps^{-1})}) & \\
		=\,\,\, &  O\left(\sqrt{k^{\frac{2z+2}{z+2}}\log (k\eps^{-1}) \log^4 (\eps^{-1})}\right), &
		\end{aligned}
		\end{eqnarray}
		
		\noindent
		Now we are ready to upper bound $\int_{0}^{(8z)^z \eps^{-z+1}} \sqrt{\log \mN_G(\Gamma_G, \alpha)} d\alpha$.
		Combining with Inequalities~\eqref{ineq1:Euclidean_covering_number} to~\eqref{ineq3:Euclidean_covering_number}, we conclude that 
		\begin{align*}
		&  \int_{0}^{(8z)^z \eps^{-z+1}} \sqrt{\log \mN_G(\Gamma_G, \alpha)} d\alpha & 
		(\text{Eq.~\eqref{ineq1:Euclidean_covering_number}}) \\
		\leq &  O\left(\sqrt{k\log(k\eps^{-1}) \log\eps^{-1}}\right)  + O\left(\sqrt{k^{\frac{2z+2}{z+2}}\log (k\eps^{-1}) \log^4 (\eps^{-1})}\right) & (\text{Ineq.~\eqref{ineq2:Euclidean_covering_number} and \eqref{ineq3:Euclidean_covering_number}}) \\
		\leq &  O\left(\sqrt{k^{\frac{2z+2}{z+2}}\log (k\eps^{-1}) \log^4 (\eps^{-1})}\right). & 
		\end{align*}
		Similarly, we can prove that for an outer group $G\in \calG^{(o)}$,
		\[
		\int_{0}^{(8z)^z \eps^{-z+1}} \sqrt{\log \oN_G(\Gamma_G, \alpha)} d\alpha = O\left(\sqrt{k\log (k\eps^{-1}) \log^4 (\eps^{-1})}\right).
		\]
		Consequently, we have that $\Gamma_G = O\left(k^{\frac{2z+2}{z+2}}\eps^{-2} \log (k\eps^{-1})\log^4(\eps^{-1})\right)$ satisfies Inequality~\eqref{eq:outer_sample}.
		This completes the proof.
		%
		%
		
	\end{proof}

	\subsection{Terminal Embedding with Additive Errors}
	\label{sec:notation_Euclidean}
	
	Before proving Lemma~\ref{lm:Euclidean_covering_number}, we introduce two types of terminal embeddings that are useful for dimension reduction.
	
	\paragraph{Terminal embedding.}
	Roughly speaking, a terminal embedding projects a point set $X\subseteq \R^d$ to a low-dimensional space while approximately preserving all pairwise distances between $X$ and $\R^d$.

	\begin{definition}[\bf{Terminal embedding}]
		\label{def:embedding}
		Let $\alpha\in (0,1)$ and $X\subseteq \R^d$ be a collection of $n$ points. 
		A mapping $f: \R^d\rightarrow \R^m$ is called an $\alpha$-terminal embedding of $X$ if for any $p\in X$ and $q\in \R^d$,
		\[
		d(p,q)\leq d(f(p),f(q))\leq (1+\alpha)\cdot d(p,q).
		\]
	\end{definition}
	
	\noindent
	We have the following recent result on terminal embedding.

	\begin{theorem}[\bf{Optimal terminal embedding~\cite{Narayanan2019OptimalTD,Cherapanamjeri2022TerminalEI}}]
		\label{thm:embedding}
		Let $\alpha\in (0,1)$ and $X\subseteq \R^d$ be a collection of $n$ points. 
		There exists an $\alpha$-terminal embedding $f$ with a target dimension $O(\alpha^{-2}\log n)$.
		Specifically, $f$ is constructed as an extension of a Johnson-Lindenstrauss (JL) transform
		with the following properties:
		\begin{enumerate}
			\item Let $g: X\rightarrow \R^{m-1}$ be a JL transform.
			\item For each $p\in X$, let $f(p) = (g(p),0)$.
			\item For each $q\in \R^d$, the mapping $f(q)\in \R^m$ satisfies that $d(p,q)\leq d(f(p),f(q))\leq (1+\alpha)\cdot d(p,q)$ for all $p\in X$.
		\end{enumerate}
	\end{theorem}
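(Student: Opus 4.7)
The plan is to construct $f$ by combining a standard Johnson--Lindenstrauss map on $X$ with a scalar ``correction coordinate'' for points outside $X$. Concretely, I would first draw a random linear map $g:\R^d\to\R^{m-1}$ (e.g., a rescaled Gaussian matrix) with $m-1 = O(\alpha^{-2}\log n)$ so that, with high probability, $g$ is a standard JL embedding on the finite set $X$: all pairwise distances among $X$ are preserved up to $(1\pm c\alpha)$ for a small absolute constant $c$. For $p\in X$ set $f(p)=(g(p),0)$ as prescribed. All the difficulty is in defining the extra coordinate for a query $q\in\R^d$.

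The second step is the extension rule. For $q\in\R^d$ I would choose an anchor $p^\star=p^\star(q)\in X$ (say a nearest neighbor, or more conveniently the anchor used in the Mahabadi--Makarychev--Makarychev--Razenshteyn / Narayanan--Nelson construction) and define
\[
f(q)\;=\;\bigl(g(q),\,\tau(q)\bigr),\qquad \tau(q)\;=\;\sqrt{\,\|p^\star-q\|^2-\|g(p^\star)-g(q)\|^2\,}.
\]
By construction $\|f(p^\star)-f(q)\|=\|p^\star-q\|$ exactly, provided that $g$ is (one-sided) contractive on the segment between $p^\star$ and $q$, which is guaranteed by the Gaussian lower tail. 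For any other $p\in X$,
\[
\|f(p)-f(q)\|^2=\|g(p)-g(q)\|^2+\tau(q)^2=\|p-q\|^2+2\bigl\langle g(p)-g(p^\star),\,g(p^\star)-g(q)\bigr\rangle-2\bigl\langle p-p^\star,\,p^\star-q\bigr\rangle,
\]
so the distortion is controlled entirely by the bilinear error term. The lower bound $\|f(p)-f(q)\|\ge\|p-q\|$ will follow from a direct convexity/Pythagoras argument once the inner product term is correctly signed, and I would verify it in parallel.

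The main obstacle is thus to establish, with positive probability, the following uniform inner-product preservation property of the random map $g$: for every $p\in X$ and every unit vector $u\in\R^d$,
\[
\bigl|\langle g(p)-g(p^\star),\,g(u)\rangle-\langle p-p^\star,\,u\rangle\bigr|\;\le\;\alpha\cdot\|p-p^\star\|.
\]
For a fixed $u$ this is a standard second-moment JL computation, but here $u$ ranges over an infinite continuum of ``query directions.'' The key idea I would use is a Gordon-theorem / Gaussian-width argument: the relevant ``bad'' set is the union, over $p\in X$, of the image of the unit sphere under a rank-at-most-$2$ linear map, so its Gaussian width is $O(\sqrt{\log n})$, and Gordon's inequality (plus a net argument to convert from a single $u$ to all of $\R^d$) yields the uniform bound at the claimed dimension $m=O(\alpha^{-2}\log n)$. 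Finally, once the property holds, combining it with $|\tau(q)|\le\|p^\star-q\|$ shows $\|f(p)-f(q)\|^2\in(1\pm\alpha)\|p-q\|^2$ for all $p\in X$, giving the terminal embedding guarantee. The structural properties 1--3 in the statement come for free from this construction: $f$ restricted to $X$ is the linear map $g$ padded with a zero, and the extension rule defines $f(q)$ as an explicit function of $g$.
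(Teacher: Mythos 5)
First, note that the paper does not prove this statement: it is stated as a known result of~\cite{Narayanan2019OptimalTD,Cherapanamjeri2022TerminalEI} and used as a black box. So you are reconstructing a proof from the literature rather than matching a proof in the paper.

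Your sketch has a genuine gap in the choice of the first $m-1$ coordinates of $f(q)$. You set $f(q)=(g(q),\tau(q))$ and then claim that, with $m=O(\alpha^{-2}\log n)$, one can establish the uniform inner-product preservation
\[
\bigl|\langle g(p)-g(p^\star),\,g(u)\rangle-\langle p-p^\star,\,u\rangle\bigr|\;\le\;\alpha\cdot\|p-p^\star\|
\]
for all $p\in X$ and all unit $u\in\R^d$, via a Gaussian-width/Gordon argument. This is provably false whenever $d>m-1$: take $u$ a unit vector in the kernel of the linear map $g$. Then $g(u)=0$, and the left-hand side is just $|\langle p-p^\star,u\rangle|$. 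The kernel is a $(d-m+1)$-dimensional subspace; when $d\gg m$ the orthogonal projection of the fixed vector $p-p^\star$ onto the kernel has norm close to $\|p-p^\star\|$, so there exist unit kernel vectors $u$ with $|\langle p-p^\star,u\rangle|$ close to $\|p-p^\star\|$. No Gaussian-width machinery or increase of $m$ short of $d$ can repair this, since it is a deterministic obstruction arising from $g$ having a nontrivial kernel. Translated to distances, for $q=p^\star+tu$ with such a $u$ one gets $g(q)=g(p^\star)$, $\tau(q)^2=t^2$, and thus $\|f(p)-f(q)\|^2=\|g(p)-g(p^\star)\|^2+t^2\approx\|p-p^\star\|^2+t^2$, whereas $\|p-q\|^2\approx(\|p-p^\star\|-t)^2$ (taking $\langle p-p^\star,u\rangle\approx\|p-p^\star\|$); at $t=\tfrac12\|p-p^\star\|$ this gives a distortion around $1.9$, far outside $(1\pm\alpha)$ and independent of how large you make $m$.

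The actual construction in the cited works does \emph{not} use $g(q)$ for the first $m-1$ coordinates. For each query $q$ it finds a separate vector $q'\in\R^{m-1}$ with $\|q'\|_2\le\|q\|_2$ satisfying $|\langle g(p),q'\rangle-\langle p,q\rangle|\le\alpha\|p\|_2\|q\|_2$ simultaneously for all $p\in X$; this is exactly Lemma~\ref{lm:JL} (Lemma 3.1 of~\cite{Narayanan2019OptimalTD}), and it is proved by a convex-feasibility / minimax argument, not by showing $q'=g(q)$ works. One then sets $f(q)=(q',\sqrt{\|q\|_2^2-\|q'\|_2^2})$, which is the construction the paper itself invokes and reproduces in its proof of Theorem~\ref{thm:additive_terminal}. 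The freedom to choose $q'$ separately for each $q$ is precisely what evades the kernel obstruction. Your framing of anchor points, the extra Pythagorean coordinate, and a bilinear error to control is the right intuition, but the specific choice $q'=g(q)$ is a dead end at the claimed $O(\alpha^{-2}\log n)$ target dimension.
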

	
	\noindent \sloppy
	Accordingly, if $\|X\|_0=\poly(k\eps^{-1})$, there exists an $\alpha$-terminal embedding of target dimension $O\left(\alpha^{-2} \log(k\eps^{-1})\right)$.

	\paragraph{Additive terminal embedding.}
	Recall that we introduce a new notion of terminal embedding with additive error, in order to handle the covering number $\mN_G(\Gamma_G, \alpha)$ of a main group.
	We need to prove the following theorem.
	
	\begin{theorem}[\bf{Restatement of Theorem~\ref{thm:additive_terminal_main}}]
		\label{thm:additive_terminal}
		Let $\alpha\in (0,1)$, $r>0$, and $X\subset \R^d$ be a collection of $n$ points within a ball $B(0, r)$ with $0^d\in X$.
		There exists an $\alpha$-additive terminal embedding with a target dimension $O(\alpha^{-2}\log n)$.
	\end{theorem}
	
	\begin{proof}
		Let $m=O(\alpha^{-2}\log n)$.
		We first let $g:X\rightarrow \R^{m-1}$ be a JL transform and assume 
		that the distortion of $g$ is $1+\alpha$ over the points of $X$.
		For each $p\in X$, we let $f(p) = (g(p),0)$. 
		Then we show how to extend $f$ to an additive terminal embedding $f:\R^d\rightarrow \R^m$.
		For each $q\in B(0, 2r)$, by Theorem~\ref{thm:embedding}, we know that there exists a mapping $f(q)\in \R^m$ such that for any $p\in X$,
		\[
		d(p,q)\leq d(f(p),f(q)) \leq (1+\alpha)\cdot d(p,q) \leq d(p,q) + 8\alpha r,
		\]
		since $d(p,q)\leq 4r$.

		Thus, we only need to focus on points $q\in \R^d\setminus B(0,2r)$.
		By the property of JL transform, we have the following lemma.

		\begin{lemma}[\bf{Restatement of~\cite[Lemma 3.1]{Narayanan2019OptimalTD}}]
			\label{lm:JL}
			With probability at least $0.9$, for any $q\in \R^d$, there exists $q'\in \R^{m-1}$ such that $\|q'\|_2\leq \|q\|_2$ and 
			\[
			\forall p\in X, ~ |\langle g(p),q' \rangle - \langle p,q \rangle|\leq \alpha\cdot \|p\|_2\|q\|_2.
			\]
		\end{lemma}
		
		\noindent
		By Lemma~\ref{lm:JL}, there exists $q'\in \R^{m-1}$ such that $\|q'\|\leq \|q\|_2$ and for any $p\in X$,
		\begin{align}
		\label{ineq2:proof_JL}
		|\langle g(p), q' \rangle - \langle p,q \rangle|\leq \alpha\cdot \|p\|_2 \|q\|_2 \leq \alpha r \|q\|_2.
		\end{align}
		Construct $f(q) = (q', \sqrt{\|q\|_2 - \|q'\|_2})$,
		which is the same mapping as in~\cite{mahabadi2018nonlinear,Narayanan2019OptimalTD}.
		The construction implies that $d(f(0),f(q)) = \|f(q)\|_2 = \|q\|_2 = d(0,q)$.
		It remains to prove the correctness of $|d(p,q) - d(f(p),f(q))|\leq \alpha\cdot r$.
		By construction, we have
		\begin{align}
		\label{ineq3:proof_JL}
		d(p,q)^2 = \|p\|_2^2 + \|q\|^2 - 2\langle p, q \rangle,
		\end{align}
		and
		\begin{align}
		\label{ineq4:proof_JL}
		d(f(p),f(q))^2 = \|f(p)\|^2 + \|f(q)\|_2^2 - 2\langle g(p), q' \rangle.
		\end{align}
		Since $\|f(q)\|_2 = \|q\|_2$, we have
		\begin{eqnarray}
		\label{ineq5:proof_JL}
		\begin{aligned}
		| d(p,q)^2 - d(f(p),f(q))^2| 
		\leq &  \left|\|p\|_2^2 - \|f(p)\|_2^2\right| + 2|\langle g(p), q' \rangle - \langle p,q \rangle| & (\text{Ineqs.~\eqref{ineq3:proof_JL} and~\eqref{ineq4:proof_JL}}) \\
		\leq &  4\alpha r^2 + 2|\langle g(p), q' \rangle - \langle p,q \rangle| & (\text{$p\in P$ and JL}) \\
		\leq &  4\alpha r^2 + 2\alpha r\|q\|_2 & (\text{Ineq.~\eqref{ineq2:proof_JL}})
		\end{aligned}
		\end{eqnarray}
		Hence, we obtain that
		\begin{align*}
		| d(p,q) - d(f(p),f(q))| 
		= &  \frac{| d(p,q)^2 - d(f(p),f(q))^2|}{d(p,q) + d(f(p),f(q))} & \\
		\leq &  \frac{4\alpha r^2 + 2\alpha r \|q\|_2}{\|q\|_2 - \|p\|_2 + \|f(q)\|_2 - \|f(p)\|_2} & (\text{Ineq.~\eqref{ineq5:proof_JL} and triangle ineq.}) \\
		= &  \frac{4\alpha r^2 + 2\alpha r \|q\|_2}{2 \|q\|_2 - \|p\|_2 - \|f(p)\|_2} & (\|f(q)\|_2 = \|q\|_2) \\
		\leq &  \frac{4\alpha r^2 + 2\alpha r \|q\|_2}{2 \|q\|_2 - 3r} & (\text{JL}) \\
		\leq &  \frac{4\alpha r \|q\|_2}{ 0.5 \|q\|_2} & (\|q\|_2 > 2r) \\
		= &  8\alpha r. &
		\end{align*}
		Overall, we complete the proof.
	\end{proof}

	\subsection{Proof of Lemma~\ref{lm:Euclidean_covering_number}: Bounding the Covering Number in Euclidean Case}
	\label{sec:proof_Euclidean_covering_number}
	
	Recall that $\Gamma_G$ is the number of samples for each main/outer group $G\in \calG$ (Line 1 of Algorithm~\ref{alg:coreset}).
	By~\cite[Lemma 21]{cohenaddad2022towards}, we can obtain the following upper bound for $\oN_G(\Gamma_G, \alpha)$, which suffices
	for our purpose.
	\[
	\log \oN_G(\Gamma_G, \alpha) = O\left(z^2 k\cdot\min\left\{\alpha^{-2}\log \|P\|_0, d\right\}\cdot \log \alpha^{-1}\right).
	\]

	Now, we consider the main groups.
	Fix $\alpha\in (0,1)$ and let $G\in \calG^{(m)}(j)$ for some $j$ be a main group.
	In the following, we focus on proving the upper bound for the covering number $\mN_G(\Gamma_G, \alpha)$.
	We first have the following lemma that provides a common construction for coverings.

	\begin{lemma}[\bf{$\alpha$-Covering for Euclidean spaces}]
		\label{lm:covering_construction}
		Suppose $d\geq \log k$ and constant $z \geq 1$.
			Let $X\subset \R^d$ be a dataset that consists of $X_1\subset B(a^\star_1, r_1), X_1\subset B(a^\star_2, r_2), \ldots, X_t \subset B(a_t, r_t)$ for some $1\leq t\leq k$, $a^\star_1,\ldots, a^\star_t\in \R^d$ and $r_1,\ldots, r_t >0$.
			Let $\alpha\in (0,1)$ and $u\geq 1$.
			There exists an $\alpha$-covering $V\subset \R^{|X|}$ of $X$ with $\log |V| = O(z kd\log(u\alpha^{-1}))$, i.e., for any $k$-center set $C\in \calC$, there exists a cost vector $v\in V$  $i\in [t]$ with $d(a^\star_i, C)\leq u\cdot r_i$ and $p\in X_i$ such that
			\[
			|d^z(p,C) - d^z(a^\star_i,C) - v_p| \leq \alpha\cdot r_i^z.
			\]
	\end{lemma}
	
	\begin{proof}
		For each $i\in [t]$, take an $\frac{\alpha}{10 (u+2)^z}\cdot r_i$-net of the Euclidean ball $B(a^\star_i, 10 z u r_i)$.
			Since $d\geq \log k$ and $z = O(1)$, the union $\Lambda$ of these nets has size at most
			\[
			k\cdot \exp\left(O(zd\log(u\alpha^{-1}) )\right) = \exp\left(O(z d\cdot \log(z u\alpha^{-1}) )\right).
			\]
			For any $S\subseteq G$, we then define an $\alpha$-covering $V\subset \R^{|S|}$ of $S$ as follows: for any $k$-center set $C\subseteq \Lambda$, we construct a vector $v\in V$ in which the entry corresponding to point $p$ (say $p\in R_{ij}\cap S$) is 
			\[
			v_p = d^z(p,C) - d^z(a^\star_i,C).
			\]
			Obviously, we have $\log |V| = O(zkd\cdot \log (u \alpha^{-1}))$.
			It remains to verify that $V$ is indeed an $\alpha$-covering of $S$.
			For any $C=(c_1,\ldots,c_k)\in \calC$, let $C'=(c'_1,\ldots,c'_k)\in V$ such that $c'_i$ is the closest point of $c_i$ in $\Lambda$ ($i\in [k]$).
			Then for any $i\in [t]$ with $d^z(a^\star_i,C)\leq u\cdot r_i$ and $p\in X_i$, we have the following observation: for every $c_i\in C$,
			\begin{enumerate}
				\item If $d(p,c_i)\geq 4u\cdot r_i^z$, then $d(p,c'_i)\geq 2u\cdot r_i\geq d(p,C)$ since $d(p,C)\leq d^z(a^\star_i,C) + d(p,a^\star_i)$.
				\item If $d(p,c)\leq 4u\cdot r_i$, then $d(p,c')\in d(p,c)\pm \frac{\alpha}{10(u+2)^z}\cdot r_i$ by the construction of $\Lambda$.
			\end{enumerate}
			Consequently, we have
			\begin{align}
			\label{ineq1:proof_Euclidean_general}
			d(p,C') \in d(p,C)\pm \frac{\alpha}{10(u+2)^z}\cdot r_i,
			\end{align}
			which implies that
			\begin{align*}
			&  |d^z(p,C') - d^z(p,C)| & \\
			\leq &  |d(p,C') - d(p,C)|\cdot z (d^{z-1}(p,C') + d^{z-1}(p,C)) & \\
			\leq &  \frac{\alpha}{10(u+2)^z}\cdot r_i\cdot \left((d^z(a^\star_i,C') + r_i)^{z-1} + (d^z(a^\star_i,C) + r_i)^{z-1} \right) & (\text{triangle ineq. and Ineq.~\eqref{ineq1:proof_Euclidean_general}}) \\
			\leq &  \frac{\alpha}{10(u+2)^z}\cdot r_i\cdot 2(u+2)^{z-1} r_i^{z-1} & (d^z(a^\star_i,C) \leq u\cdot r_i) \\
			\leq &  \frac{\alpha}{2} \cdot r_i^z. \\
			\end{align*}
			The above two inequalities directly lead to the following conclusion:
			\[
			|d^z(p,C) - d^z(a^\star_i,C) - v_p| = |d^z(p,C) - d^z(a^\star_i,C) - (d^z(p,C')-d^z(a^\star_i,C'))|\leq \alpha\cdot r_i^z,
			\]
			which completes the proof.
	\end{proof}
	
	\noindent
	We go back to upper bound $\mN_G(\Gamma_G, \alpha)$.

	\paragraph{The first upper bound for $\mN_G(\Gamma_G, \alpha)$.}
	We first verify
	\begin{align}
	\label{ineq1:proof_Euclidean_covering_number}
	\log \mN_G(\Gamma_G, \alpha) = O(zkd \log (\eps^{-1}\alpha^{-1})).
	\end{align}
	This is actually a direct corollary of Lemma~\ref{lm:covering_construction} by letting $X = S$, $X_i = S\cap R_{ij}$, $r_i = 2^{j+1}\Delta_i$ for $i\in [k]$, and $u = 10z\eps^{-1}$.

	\paragraph{The second upper bound for $\mN_G(\Gamma_G, \alpha)$.}
	Next, we verify 
	\begin{align}
	\label{ineq:main_upper_bound_2}
	\log \mN_G(\Gamma_G, \alpha) \leq 2^{O(z)} k^{\frac{2z+2}{z+2}} \alpha^{-2} \log\|P\|_0 \log\eps^{-1} \log (\eps^{-1}\alpha^{-1})).
	\end{align}
	For preparation, we propose the following notion of partitions of $G\setminus H(G,C)$ w.r.t. center sets $C\in \calC$ according to the ratio $\frac{d^z(a^\star_i, C)}{\Delta_i}$.
	\begin{definition}[\bf{A partition of $G\setminus H(G,C)$}]
		\label{def:main_partition}
		Let $G\in \calG^{(m)}(j)$ for some integer $z\log (\eps/z)<j<2z\log(z\eps^{-1})$ be a main group.
		For a $k$-center set $C\in \calC$ and integer $\beta\geq 1$, we define
		\[	        
		\HGCB{G}:=\left\{p\in R_{ij}\cap G\setminus H(G,C): i\in [k], 2^{\beta z+j+1}\Delta_i\leq d^z(a^\star_i,C) < 2^{\beta z+j+2}\Delta_i \right\}.
		\]
		Also, define 
		\[
		\HGC{G}:=\left\{p\in R_{ij}\cap G\setminus H(G,C): i\in [k], d^z(a^\star_i,C) < 2^{j+2}\Delta_i \right\}.
		\]
		For any subset $S\subseteq G$ and $k$-center set $C\in \calC$, we define $\HGCB{S} = S\cap \HGCB{G}$ for integer $\beta\geq 0$.
	\end{definition}
	
	
	\begin{figure}
		\centering
		\includegraphics[width=0.93\textwidth]{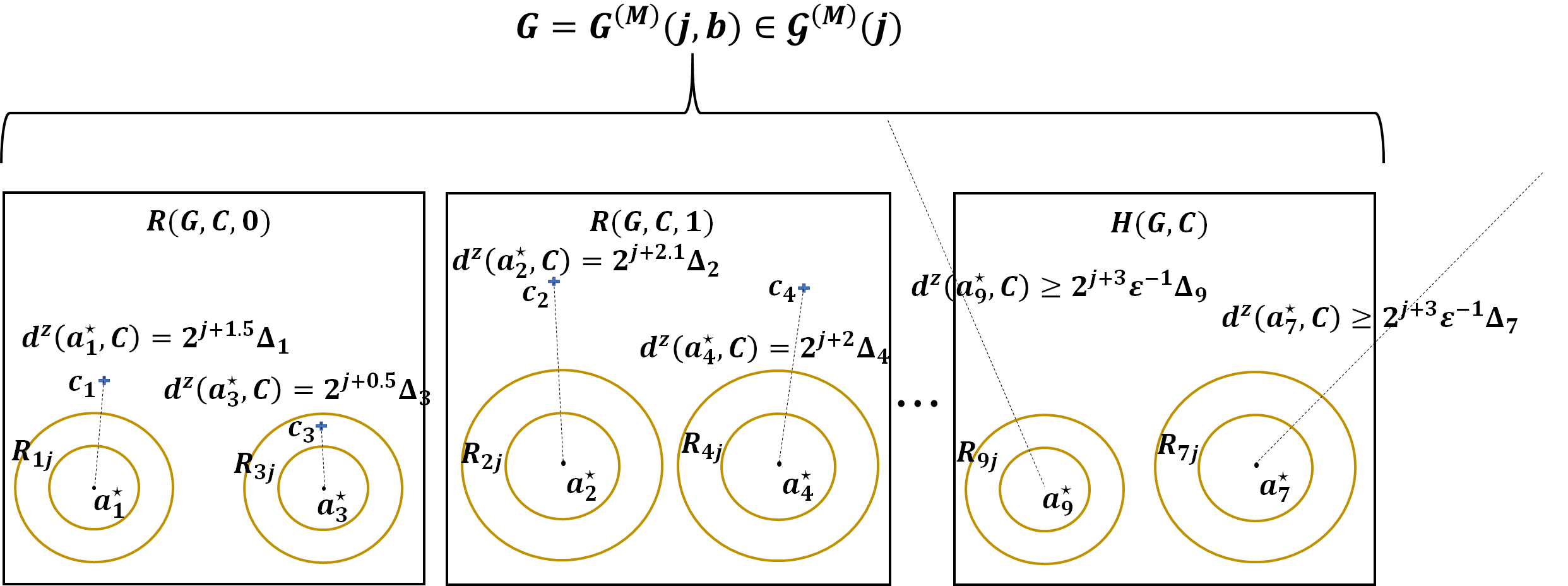}
		\caption{An example of Definition \ref{def:main_partition}}
		\label{fig:main_partition}
	\end{figure}
	\noindent
	By the above definition, for a main group $G\in \calG^{(m)}(j)$ and a given $k$-center set $C\in \calC$, $H(G,C)$ and all $\HGCB{G}$s $(0\leq \beta\leq 2+\log (z\eps^{-1}))$ are disjoint, and their union is exactly $G$; see Figure~\ref{fig:main_partition} for an example.
	Intuitively, for all points $p\in \HGCB{G}$, the fractions $\frac{d^z(p,C)}{d^z(p,A^\star)}$ are ``close'', which is an important property for our coreset construction.
	More concretely, combining with Definition~\ref{def:group} and the triangle inequality, we have the following observation.
	\begin{observation}[\bf{Relations between $d^z(p,C)$ and $d^z(p,A^\star)$ in partitions $\HGCB{G}$}]
		\label{ob:main_gap}
		For a $k$-center set $C\in \calC$, 
		\begin{enumerate}
			\item If $p\in \HGC{G}$, then $d^z(p,C)\leq 6^z d(p,A^\star)$;
			\item If $p\in \HGCB{G}$ for some integer $1\leq \beta \leq 2+\log(z\eps^{-1})$, then $(2^{\beta z}-1)\cdot d^z(p,A^\star)\leq d^z(p,C)\leq (2^{(\beta+2)z}+1)\cdot d^z(p,A^\star)$.
		\end{enumerate}
	\end{observation}

	\noindent
	For each $C\in \calC$, denote a certificate $\phi^{(C)}\in \left\{0,1,\ldots, 2+\lfloor\log \eps^{-1}\rfloor, +\infty\right\}^{|G|}$ as follows: for each $i\in [k]$ with ring $R_{ij}\in G$,
	\begin{itemize}
		\item If there exists some integer $0\leq \beta\leq 2+\log (z\eps^{-1})$ such that $R_{ij}\subseteq \HGCB{G}$, let $\phi^{(C)}_i = \beta$;
		\item Otherwise, let $\phi^{(C)}_i = +\infty$.
	\end{itemize}
	Intuitively, every entry $\phi^{(C)}_i$ reflects the distance between ring $R_{ij}$ and center set $C$.
	As $C$ is far away from points in $R_{ij}$, $\phi^{(C)}_i$ becomes larger.
	Note that there are at most $(3+\log (z\eps^{-1}))^k$ possible $\phi^{(C)}$s over all $C\in \calC$, since each $\phi^{(C)}_i$ has at most $3+\log (z\eps^{-1})$ different choices.
	Fix an arbitrary subset $S\subseteq G$ of size at most $\Gamma_G$ and a vector $\phi\in \left\{0,1,\ldots, 2+\lfloor\log \eps^{-1}\rfloor, +\infty\right\}^k$. 
	In the following, we investigate the covering number induced by all center sets $C\in \calC$ with $\phi^{(C)} = \phi$.
	For each integer $0\leq \beta\leq 2+\log(z\eps^{-1})$, let $S_{\phi,\beta}$ denote the collection of $R_{ij}\cap S$ with $\phi_i = \beta$.
	Note that $S_{\phi,\beta}$s are disjoint for all $\beta$, and the union of all $S_{\phi,\beta}$s and $S\cap H(G,C)$ is exactly $S$.
	Also note that for each $C\in \calC$ with $\phi^{(C)} = \phi$, we have $\HGCB{S} = S_{\phi,\beta}$ for any integer $0\leq \beta\leq 2+\log(z\eps^{-1})$.

    \begin{definition}[\bf{Covering of a main group for $S_{\phi,\beta}$}]
    \label{def:covering_beta}
    Let $G\in \calG^{(m)}(j)$ for some integer $z\log (\eps/z)<j<2z\log(z\eps^{-1})$ be a main group and let $S\subseteq G$ be a subset.
	Fix a vector $\phi\in \left\{0,1,\ldots, 2+\lfloor\log \eps^{-1}\rfloor, +\infty\right\}^k$ and an integer $0\leq \beta\leq 2+\log(z\eps^{-1})$.
	We say $V_{\phi,\beta}\subset \R^{|S_{\phi,\beta}|}$ is an $\alpha$-covering of $S_{\phi,\beta}$ if for each $C\in \calC$ with $\phi^{(C)} = \phi$, there exists a cost vector $v\in V_{\phi,\beta}$ such that for any $i\in [k]$ and $p\in R_{ij}\cap S_{\phi,\beta}$, Inequality~\eqref{eq:covering} holds, i.e.,
	\begin{align*}
		\left|d^z(p,C)-d^z(a^\star_i,C)-v_p\right|\leq \alpha\cdot (\sqrt{d^z(p,C)\cdot d^z(p,A^\star)}+d^z(p,A^\star))\cdot \sqrt{\frac{\ZGCA{G}}{\cost_z(G,A^\star)}}.
	\end{align*}
    \end{definition}

    \noindent
	We have the following lemma.

	\begin{lemma}[\bf An upper bound of the covering number for each $\beta$]
		\label{lm:beta_upper}
		There exists an $\alpha$-covering $V_{\phi,\beta}\subset \R^{|S_{\phi,\beta}|}$ of $S_{\phi,\beta}$ such that
		\[
			\log |V_{\phi,\beta}| \leq 2^{O(z)} k\cdot \min\left\{2^{\beta z}\alpha^{-2}\log (k\eps^{-1}), (1+k 2^{-2\beta})\alpha^{-2}\log (k\eps^{-1})\right\}\cdot \log (\eps^{-1}\alpha^{-1}).
		\]
	\end{lemma}
	
	\noindent
	By this lemma, we can directly construct an $\alpha$-covering $V$ of $S$ as follows:
	\begin{enumerate}
		\item For each $\phi\in \left\{0,1,\ldots, 2+\lfloor\log \eps^{-1}\rfloor, +\infty\right\}^k$, construct $V_\phi$ to be the collection of cost vectors $v\in \R^{\Gamma_G}$ satisfying that 1) for each integer $0\leq \beta\leq 2+\log(z\eps^{-1})$, $v|_{S_{\phi,\beta}}\in V_{\phi,\beta}$; 2) for $p\in S\cap H(G,C)$, $v_p=0$.
		\item Let $V$ be the union of all $V_\phi$s.
	\end{enumerate} 
	By construction, we can upper bound the size of $V$ by
	\begin{align*}
	& \log |V| &\\
	\leq &  \log (3+\log(z\eps^{-1}))^k + \sum_{0\leq \beta\leq 2+\log(z\eps^{-1})} \log |V_{\phi,\beta}| &\\
	\leq &  O(k\log \log\eps^{-1}) & \\
    & + \sum_{0\leq \beta\leq 2+\log(z\eps^{-1})} O(k\cdot \min\left\{2^{\beta z}\alpha^{-2}\log \|P\|_0, (1+k 2^{-2\beta})\alpha^{-2}\log \|P\|_0\right\} \log (\eps^{-1}\alpha^{-1})) &\\
	\leq &  O(k\log \log\eps^{-1}) + \sum_{0\leq \beta\leq 2+\log(z\eps^{-1})} O(k^{\frac{2z+2}{z+2}}\alpha^{-2}\log \|P\|_0 \log (\eps^{-1}\alpha^{-1})) \\
	\leq &  O\left(k^{\frac{2z+2}{z+2}}\alpha^{-2}\log \|P\|_0 \log\eps^{-1} \log (\eps^{-1}\alpha^{-1})\right),
	\end{align*}
	where the second inequality holds due to Lemma~\ref{lm:beta_upper} and the third inequality follows from letting $2^\beta = k^{\frac{1}{z+2}}$ and Assumption~\ref{assumption:Euclidean_dataset}.
	This completes the proof of Inequality~\eqref{ineq:main_upper_bound_2}.
	Hence, it remains to prove Lemma~\ref{lm:beta_upper}.

	\begin{proof}[of Lemma~\ref{lm:beta_upper}]
		
		\paragraph{Case $\beta = 0$.} 
		Let $f$ be an $2^{-5z} \alpha$-terminal embedding of $A^\star\cup S$ into $m=2^{10 z} \alpha^{-2} \log \|P\|_0$ dimensions given by Theorem~\ref{thm:embedding}.
			Given a $k$-center set $C\in \calC$ with $\phi^{(C)}=\phi$, we have that for any $p\in A\cup S_{\phi,0}$,
			\begin{eqnarray}
			\label{ineq2:proof_Euclidean_covering_number}
			\begin{aligned}
			d^z(f(p),f(C)) & =  \min_{q\in C} d^z(f(p),f(q)) & \\
			& \in  (1\pm \alpha 2^{-5z} )^z\cdot \min_{q\in C} d^z(p,q) & (\text{Theorem~\ref{thm:embedding}}) \\
			& \in  (1\pm \alpha 2^{-3z} )\cdot d^z(p,C) & \\
			& \in  d^z(p,C)\pm \alpha \cdot d^z(p,A^\star). & (\text{Observation~\ref{ob:main_gap}})
			\end{aligned}
			\end{eqnarray}
		and
		\[
		d^z(f(a^\star_i), f(C)) \in d^z(a^\star_i, C)\pm O(\alpha)\cdot d^z(p,A^\star).
		\]
		Moreover, by the definition of $S_{\phi,0}$, we know that $d^z(f(p), f(C))\leq O(1)\cdot 2^j \Delta_i$.

		By applying Lemma~\ref{lm:covering_construction} with $d=m$, $X = S_{\phi,0}$, $X_i = R_{ij}\cap S_{\phi,0}$, $r_i = (2^{j+4}\Delta_i)^{1/z}$ for $i\in [k]$, and $u = O(1)$, we can construct an $\alpha$-covering $V_{\phi,0}\subset \R^{|S_{\phi,0}|}$ of $f(S)$ satisfying that
		\begin{enumerate}
			\item $|V_{\phi,0}| = \exp\left(O(km \log (u\alpha^{-1}))\right) = \exp\left(O(k\alpha^{-2} \log \|P\|_0 \log (\alpha^{-1}))\right)$;
			\item For any $C\in \calC$ with $\phi^{(C)} = \phi$, there exists $v\in V_{\phi,0}$ such that for any $i\in [k]$ and $p\in R_{ij}\cap S_{\phi,0}$,
			\begin{align}
			\label{ineq:embedding}
			|d^z(f(p), f(C)) - d^z(f(a^\star_i), f(C)) - v_p|\leq \alpha\cdot r_i \leq O(\alpha)\cdot d^z(p,A^\star).
			\end{align}
		\end{enumerate}
		It remains to verify that $V_{\phi,0}$ is an $O(\alpha)$-covering of $S$.
		For each $C\in \calC$ with $\phi^{(C)} = \phi$, we find a cost vector $v\in V_{\phi,0}$ satisfying Inequality~\eqref{ineq:embedding}.
		Consequently, we have for any $i\in [k]$ and $p\in R_{ij}\cap S_{\phi,0}$,
		\begin{align*}
		|d^z(p,C) - d^z(a^\star_i,C) - v_p| & \leq  |d^z(f(p),f(C)) - d^z(f(a^\star_i), f(C)) - v_p| + O(\alpha)\cdot d^z(p,A^\star) & (\text{Ineq.~\eqref{ineq2:proof_Euclidean_covering_number}}) \\
		& \leq  O(\alpha)\cdot d(p,A^\star), & (\text{Ineq.~\eqref{ineq:embedding}})
		\end{align*}
		which completes the proof.

		\paragraph{Case $\beta\geq 1$.} 
		We first construct an $\alpha$-covering $V_{\phi,\beta}$ of $S_{\phi,\beta}$ with 
			\begin{align}
			\label{ineq3:proof_Euclidean_covering_number}
			\log |V_{\phi,\beta}| = 2^{O(z)}k2^{\beta z}\alpha^{-2}\log \|P\|_0\cdot \log (\eps^{-1}\alpha^{-1}).
			\end{align}
			Let $f$ be an $2^{-5z-\beta z/2} \alpha$-terminal embedding of $A^\star\cup S_{\phi,\beta}$ into $m=2^{10 z + \beta z} \alpha^{-2} \log \|P\|_0$ dimensions given by Theorem~\ref{thm:embedding}.
			Given a $k$-center set $C\in \calC$ with $\phi^{(C)}=\phi$, we have that for any $i\in [k]$ and $p\in R_{ij}\cap S_{\phi,\beta}$,
\begin{align}
   \label{eq:case_beta}
                \begin{aligned}
			d^z(f(p),f(C)) & =  \min_{q\in C} d^z(f(p),f(q)) & \\
			& \in  (1\pm \alpha 2^{-5z - \beta z/2} )^z\cdot \min_{q\in C} d^z(p,q) & (\text{Theorem~\ref{thm:embedding}}) \\
			& \in  (1\pm \alpha 2^{-3z - \beta z/2} )\cdot d^z(p,C) & \\
			& \in  d^z(p,C)\pm O(\alpha) \cdot \sqrt{d^z(p,C)\cdot d^z(p,A^\star)}, & (\text{Observation~\ref{ob:main_gap}})
   \end{aligned}
\end{align}
			and
			\[
			d^z(f(a^\star_i),f(C)) \in d^z(a^\star_i,C)\pm O(\alpha) \cdot \sqrt{d^z(p,C)\cdot d^z(p,A^\star)}.
			\]
		Moreover, by the definition of $S_{\phi,\beta}$, we know that $d^z(f(p), f(C))\leq O(1)\cdot 2^{j+\beta z} \Delta_i$.

		By applying Lemma~\ref{lm:covering_construction} with $d=m$, $X = S_{\phi,\beta}$, $X_i = R_{ij}\cap S_{\phi,\beta}$, $r_i = (2^{j+4}\Delta_i)^{1/z}$ for $i\in [k]$, and $u = O(2^{\beta })$, we can construct an $\alpha$-covering $V_{\phi,\beta}\subset \R^{|S_{\phi,\beta}|}$ of $f(S)$ with 
		\[
		|V_{\phi,\beta}| = \exp\left(O(km \log (2^{\beta} \alpha^{-1}))\right) = \exp\left(O(k2^{\beta z}\alpha^{-2}\log \|P\|_0\cdot \log (\eps^{-1}\alpha^{-1}))\right);
		\]
		and for any $C\in \calC$ with $\phi^{(C)} = \phi$, there exists a cost vector $v\in V_{\phi,\beta}$ such that for any $i\in [k]$ and $p\in R_{ij}\cap S_{\phi,\beta}$,
		\[
		|d^z(f(p),f(C)) - d^z(f(a^\star_i),f(C)) - v_p|\leq \alpha\cdot r_i \leq O(\alpha)\cdot  d^z(p,A^\star).
		\]
		Combining with Inequality~\eqref{eq:case_beta}, we conclude that $V_{\phi,\beta}$ is also an $O(\alpha)$-covering of $S_{\phi,\beta}$, which completes the proof of Inequality~\eqref{ineq3:proof_Euclidean_covering_number}.

		Finally, we construct an $\alpha$-covering $V_{\phi,\beta}$ of $S_{\phi,\beta}$ with 
			\begin{align}
			\label{ineq4:proof_Euclidean_covering_number}
			\log |V_{\phi,\beta}| = 2^{O(z)}\cdot k^2 2^{-2\beta}\alpha^{-2}\log \|P\|_0\cdot \log (\eps^{-1}\alpha^{-1}).
			\end{align}
			We again let $B\subseteq [k]$ be the collection of $i\in [k]$ with $R_{ij}\cap S\subseteq S_{\phi,\beta}$.
			Note that for any $C\in \calC$ with $\phi^{(C)}=\phi$, 
			\begin{align}
			\label{ineq6:proof_Euclidean_covering_number_general}
			\frac{\cost_z(G,C)}{\cost_z(G,A^\star)} \geq \frac{|B| 2^{(\beta z - 1)}}{k}.
			\end{align}
			For each $i\in B$, let $g_i$ be an $\sqrt{|B|k^{-1}} 2^{\beta - 5z}\alpha$-additive terminal embedding of $T_i=\left\{p-a^\star_i: p\in R_{ij}\cap S\right\}$ into $m=O(|B|^{-1} k 2^{10z-2\beta}\alpha^{-2}\log \|P\|_0)$ dimensions given by Theorem~\ref{thm:additive_terminal}.

			Let $r = 2^{j+1}\Delta_i$.
			We again define a function $f_i$ as follows: $f_i(p) = g_i(p-a^\star_i)$ for any $p\in \R^d$.
			Recall that $G\in \calG^{(m)}(j)$, and hence, $R_{ij}\cap S\in B(a^\star_i, r)$.
			Then by Theorem~\ref{thm:additive_terminal}, we have that for any $p\in R_{ij}\cap S$ or $p = a^\star_i$ and $q\in \R^d$, $|d(p,q) - d(f_i(p),f_i(q))|\leq \sqrt{|B|k^{-1}} 2^{\beta - 5z}\alpha r$.
			Given a $k$-center set $C\in \calC$ with $\phi^{(C)}=\phi$, we have that for any $p\in R_{ij}\cap S$, $d^z(a^\star_i,C)\geq 2^{\beta+j+2}\Delta_i\geq 4r$.
			Consequently, we know that $C\subset \R^d\setminus B(a^\star_i, 2r)$.
			For any $p\in R_{ij}\cap S_{\phi,\beta}$ and $c_p = \arg\min_{c\in C} d(p,C)\in \R^d$, we have
			\begin{align*}
			&  d^z(f_i(p),f_i(c_p)) & \\
			\in &  (d(p,c) \pm \sqrt{|B|k^{-1}} 2^{\beta - 5z}\alpha r)^z & \\
			\in &  (1\pm 2^{1-5z} \sqrt{|B|k^{-1}} \alpha)^z d^z(p,c_p) & (p\in H(G,C)\cap S) \\
			\in &  (1\pm 2^{-4z} \sqrt{|B|k^{-1}} \alpha) d^z(p,c_p) & \\
			\in &  d^z(p,c_p) \pm \alpha \cdot \sqrt{d^z(p,c_p)\cdot d^z(p,A^\star)}\cdot \sqrt{\frac{\cost_z(G,C)}{\cost_z(G,A^\star)}} & (\text{Observation~\ref{ob:main_gap} and Ineq.~\eqref{ineq6:proof_Euclidean_covering_number_general}}) \\
			\in &  d^z(p,C) \pm \alpha \cdot \sqrt{d^z(p,C)\cdot d^z(p,A^\star)}\cdot \sqrt{\frac{\cost_z(G,C)}{\cost_z(G,A^\star)}} & (\text{Defn. of $c_p$})
			\end{align*}
			Moreover, we have for any $p\in R_{ij}\cap S_{\phi,\beta}$ and $c_p = \arg\min_{c\in C} d(p,C)\in \R^d$,
			\[
			|d^z(f_i(a^\star_i),f_i(C)) - d^z(a^\star_i,C)|\leq O(\alpha)\cdot \sqrt{d^z(p,C)\cdot d^z(p,A^\star)}\cdot \sqrt{\frac{\cost_z(G,C)}{\cost_z(G,A^\star)}}.
			\]
			Combining the above two inequalities, we directly have
			\begin{align}
   \label{eq:case_beta2}
   \begin{aligned}
			&  \left|(d^z(f_i(p),f_i(C)) - d^z(f_i(a^\star_i),f_i(C))) - (d^z(p,C) - d^z(a^\star_i, C)) \right| \\
			\leq &  O(\alpha) \cdot \sqrt{d^z(p,C)\cdot d^z(p,A^\star)}\cdot \sqrt{\frac{\cost_z(G,C)}{\cost_z(G,A^\star)}}.
   \end{aligned}
			\end{align}
		Moreover, by the definition of $S_{\phi,\beta}$, we know that $d^z(f_i(p), f_i(C))\leq O(1)\cdot 2^{j+\beta} \Delta_i$.
		By Equation~\eqref{ineq1:proof_Euclidean_covering_number}, we can construct an $\alpha$-covering $V_i\subset \R^{|R_{ij}\cap S_{\phi,\beta}|}$ of $f_i(R_{ij}\cap S_{\phi,\beta})$ with 
		\[
		|V_i| = \exp\left(O(km \log (\eps^{-1}\alpha^{-1}))\right) = \exp\left(O(|B|^{-1} k^2 2^{-2\beta}\alpha^{-2}\log \|P\|_0\cdot \log (\eps^{-1}\alpha^{-1}))\right)
		\]
		such that for any $C\in \calC$ with $\phi^{(C)} = \phi$, there exists a cost vector $v\in V_{\phi,\beta}$ such that for any $i\in [k]$ and $p\in R_{ij}\cap S_{\phi,\beta}$,
		\[
		|d^z(f_i(p),f_i(C)) - d^z(f_i(a^\star_i),f_i(C)) - v_p|\leq O(\alpha)\cdot 2^j \Delta_i \leq O(\alpha)\cdot  d^z(p,A^\star).
		\]
		Combining with Inequality~\eqref{eq:case_beta2}, we conclude that $V_i$ is also an $O(\alpha)$-covering of $R_{ij}\cap S_{\phi,\beta}$.
		Now we can construct an $\alpha$-covering $V_{\phi,\beta}\subset \R^{|S_{\phi,\beta}|}$ of $S_{\phi,\beta}$  
		as the Cartesian Product of all $V_i$s
		\begin{align}
		\label{ineq:size_Vi}
		V_{\phi,\beta} = \prod_{i\in B} V_i = \bigl\{(v_1, \ldots, v_{|B|}): v_i \in V_i \text{ for all } i\in B\bigr\}.
		\end{align}
		By the construction of $V_i$s, we know that for each $C\in \calC$ with $\phi^{(C)} = \phi$, there exists a cost vector $v^{(i)}\in \R^{|R_{ij}\cap S_{\phi,\beta}|}$ such that for any $p\in R_{ij}\cap S_{\phi,\beta}$, Inequality~\eqref{eq:covering} holds.
		Since $\prod_{i\in B} v^{(i)}\in V_{\phi,\beta}$ by construction, we have that $V_{\phi,\beta}$ is indeed an $\alpha$-covering of $S_{\phi,\beta}$.
		By Inequality~\eqref{ineq:size_Vi} and the construction of $V_{\phi,\beta}$, it is obvious that $|V_{\phi,\beta}|$ satisfies Inequality~\eqref{ineq4:proof_Euclidean_covering_number}.
		Overall, we complete the proof.
	\end{proof}

\end{document}